\theoremstyle{plain}
\newtheorem{lemma}{Lemma}
\newtheorem{proposition}{Proposition}
\newtheorem{theorem}{Theorem}
\newtheorem{corollary}{Corollary}
\newtheorem*{qdet}{$Q$-determinant}
\theoremstyle{definition}
\newtheorem{definition}{Definition}
\theoremstyle{remark}
\newtheorem{remark}{Remark}
\newtheorem*{requirements}{Requirements}
\newtheorem{example}{Example}
\title{Parallelism Resource of Numerical Algorithms.\\Version 1}
\author{Valentina N. Aleeva, Rifkhat~Zh.~Aleev}
\begin{document}

\maketitle

\begin{abstract}
The paper is devoted to an approach to solving a problem of the efficiency of parallel computing.
The theoretical basis of this approach is the concept of a $Q$-determinant.
Any numerical algorithm has a $Q$-determinant.
The $Q$-determinant of the algorithm has clear structure and is convenient for implementation.
The $Q$-determinant consists of $Q$-terms.
Their number is equal to the number of output data items.
Each $Q$-term describes all possible ways to compute one of the output data items based on the input data.

We also describe a software $Q$-system for studying the parallelism resource of numerical algorithms.
This system enables to compute and compare the parallelism resources of numerical algorithms.
The application of the $Q$-system is shown on the example of numerical algorithms with different structures of $Q$-determinants.
Furthermore, we suggest a method for designing of parallel programs for numerical algorithms.
This method is based on a representation of a numerical algorithm in the form of a $Q$-determinant.
As a result, we can obtain the program using the parallelism resource of the algorithm completely.
Such programs are called $Q$-effective.

The results of this research can be applied to increase the implementation efficiency of numerical algorithms, methods, as well as algorithmic problems on parallel computing systems.
\end{abstract}

CCS Concepts: \textbullet\ \textbf{Theory of computation \textrightarrow\ Models of computation; Concurrency; Parallel computing
models; Design and analysis of algorithms; Parallel algorithms;  \textbullet\  Software and its engineering \textrightarrow\
Software creation and management; Software development techniques; Flowcharts; \textbullet\  Computing
methodologies} \textrightarrow\ \textit{Parallel computing methodologies; Parallel algorithms}; Symbolic and algebraic manipulation;
Symbolic and algebraic algorithms; Linear algebra algorithms;

Additional Key Words and Phrases: \emph{$Q$-term of algorithm, $Q$-determinant of algorithm, representation of
algorithm in form of $Q$-determinant, $Q$-effective implementation of algorithm, parallelism resource of algorithm,
software $Q$-system, $Q$-effective program, $Q$-effective programming}

\renewcommand{\topfraction}{1.0}

\section{Introduction}

There is a considerable difference in the computational power of parallel computing systems and its use.
The existence of this fact is of great importance for parallel computing.
One of the reasons for the above difference is an inadequate implementation of algorithms on parallel computing systems.
In particular, it can be if the parallelism resource of the algorithm is used incompletely.
So, the computing resources of a parallel computing system can not be used enough when implementing the algorithm.

We will give a brief overview of some researches of the parallelism resource of numerical algorithms and its implementation.

\textbf{First,} we note \cite{al:vv,al:voev} where there is a very important and developed research of the parallel structure of algorithms and programs for their implementation on parallel computing systems.
These papers contain definitions and studies of the graphs of algorithms.
These researches are adapted in the open encyclopedia AlgoWiki \cite{al:ant,al:adv}.
However, the papers using these studies do not consider any software for studying the parallelism resource of algorithms.

\textbf{Second,} we note there are proposed several approaches to the development of parallel programs.
This led to the creation of various parallel programming languages and other tools.
The \textsf{T-system} \cite{al:tsystem} is one of these developments.
It provides a programming environment with support for automatic dynamic parallelization of programs.
However, it cannot be asserted that the creation of parallel programs using the \textsf{T-system} makes full use of the parallelism resource of the algorithm.
The parallel program synthesis is another approach to creating parallel programs.
This approach is to construct new parallel algorithms using the knowledge base of parallel algorithms to solve more complex problems.
The technology of fragmented programming, its implementation language, and the programming system \textsf{LuNA} are developed on the basis of the parallel programming synthesis method \cite{al:mal}.
This approach does not solve the problem of research and use of the parallelism resource of algorithm, despite the fact that it is universal.
To overcome resource limitations, the author of the paper \cite{al:legalov} suggests methods for constructing parallel programs using a functional programming language independent of computer architecture.
However, there isn't shown that the created programs use the entire parallelism resource of algorithms.

\textbf{Third.}
There are many studies on the development of parallel programs that take into account the specifics of algorithms and architecture of parallel computing systems.
Examples of such studies are \cite{al:wang,al:li,al:suplatov,al:prifti,al:you,al:matv,al:set}.
These studies improve the efficiency of implementing specific algorithms or implementing algorithms on parallel computing systems of a particular architecture.
However, they don't provide a general universal approach.

\textbf{Fourth.}
Perhaps the above review is not complete.
However, we have previously noted that the parallelism resource of algorithms by realization on parallel computing systems is often not used completely.
So, it appears that there is currently no solution to the problem for the research and use of the parallelism resource of algorithms.
Therefore, the results of this paper can be considered as one of the solutions to this problem.

The concept of a $Q$-determinant is the theoretical basis of the research of this paper.
We describe the development of a software system called the $Q$-system to research the parallelism resource of numerical algorithms.
We will also describe how to develop a program that uses the parallelism resource of the numerical algorithm completely.
In addition, we suggest a programming technology called $Q$-effective programming to improve the efficiency of parallel computing based on the results obtained.
This paper continues and summarizes the results of studies presented in \cite{al:al,al:alss,al:ruscdays18,al:glosic, al:ruscdays19}.

\section{The concept of a $Q$-determinant}\label{s:Qdet}

We describe a mathematical model of the concept of a $Q$-determinant.

\subsection{Expressions}

Let $B=\{b_1,b_2,\dots\}$ be a finite or countable set of variables, and $Q$ be a finite set of operations.
Suppose that \emph{all} operations of $Q$ \emph{are $0$ary (constants), unary, or binary.}
For example,
\begin{align*}
Q=\{&+,-,\cdot,/ \text{ (arithmetical operations)},\\
        &\vee,\wedge,\neg\text{ (logical operations)},\\
        &=,<,\leq,>,\geq,\neq\text{ (comparison operations)}\}.
\end{align*}

Every expression $w$ has the \emph{nesting level} $T^w$.
\begin{definition}
By induction, we define \emph{the expression, its nesting level \emph{and} its subexpressions}.
We also relate the nesting level and operations.

\begin{enumerate}
\item
The constants and elements of the set $B$ are \emph{expressions} and have \emph{zero nesting level}.
\item
If $w$ is an expression, then $(w)$ is an \emph{expression} and $T^w=T^{(w)}$ also.
\item
Let $w$ be an expression, $T^w=i-1$ ($i\geq1$) and $f\in Q$ is an unary operation.
Then $f(w)$ is an \emph{expression} and $T^{f(w)}=i$.
We call $w$ \emph{the subexpression of the $(i-1)$th nesting level of the expression $f(w)$} and $f$ \emph{the operation of the $i$th nesting level}.
\item
Let $w$ and $v$ be an expressions, $T^w=i$, $T^v=j$ and $g\in Q$ be a binary operation.
Then $g(w, v)$ is an \emph{expression} and $T^{g(w,v)}=k$, where $k=\max\{i,j\}+1$.
We call $w$ and $v$ \emph{the subexpressions of the expression $g(w,v)$} with nesting levels $i$ and $j$, respectively, and $g$ \emph{the operation of the $k$th nesting level}.
\end{enumerate}
\end{definition}

\begin{example}
We point out some expressions and their nesting levels.
\begin{enumerate}\label{ex:exp}
\item
$w_1=b_1\cdot(b_2+b_3)/b_4$ and $T^{w_1}=3$;
\item
$w_2=((b_1+b_2)\leq(b_3\cdot b_4))\vee\neg(b_5\leq b_6)$ and $T^{w_2}=3$;
\item
$w_3=((b_1\geq b_3)\wedge((b_2-b_4)\neq 0))\wedge(b_5=0)$ and $T^{w_3}=4$.
\end{enumerate}
\end{example}

\begin{definition}
We call an expression \emph{a chain of length $n$} if it is the result of some associative operation from $Q$ on expressions whose number is $n$.
As usual, we can write a chain without parentheses.
\end{definition}

\begin{example}
Examples of chains are the following:
\begin{enumerate}
\item
$b_1+b_2+b_3+b_4$ is a chain of length 4;
\item
$(b_1+b_2)\cdot(b_4-b_7)\cdot(b_2+b_4)$ is a chain of length 3;
\item
$(b_1\leq b_2)\vee(b_3\geq b_5)\vee\neg(b_2\leq b_4)$ is a chain of length 3.
\end{enumerate}
\end{example}

We interpret the expressions into the real number field $\mathbb{R}$.

\begin{definition}
Let $b_i\in B$.
Then the assignment of the variable $b_i$ of a specific value from $\mathbb{R}$ is called \emph{the interpretation of the variable} $b_i$.
\end{definition}

\begin{definition}
We say that \emph{the interpretation of the expression is specified} if the interpretation of all variables in the expression is specified.
\end{definition}

If the interpretation of the expression is specified, then we can find the value of the expression.
\begin{example}
As an example, to find the values of the expressions, consider the expressions from Example \ref{ex:exp}.
First, we interpret the variables
\[
b_i=i\text{ for every }i\in\{1,2,\dots,6\}.
\]
\begin{enumerate}\label{ex:valexp}
\item
We have
\[
w_1=b_1\cdot(b_2+b_3)/b_4=1\cdot(2+3)/4=5/4.
\]
So, the value of $w_1$, under this interpretation, is $5/4$.
\item
We get
\begin{align*}
w_2&=((b_1+b_2)\leq(b_3\cdot b_4))\vee\neg(b_5\leq b_6)=\\
&=((1+2)\leq(3\cdot4))\vee\neg(4\leq6)=\\
&=(3\leq12)\vee\neg(4\leq6).
\end{align*}
The value of $w_2$ is \textsf{true}.
\item
Finally,
\begin{align*}
w_3&=((b_1\geq b_3)\wedge((b_2-b_4)\neq 0))\wedge(b_5=0)=\\
&=((1\geq3)\wedge((2-4)\neq 0))\wedge(5=0)=\\
&=((1\geq3)\wedge(-2\neq 0))\wedge(5=0).
\end{align*}
Therefore, the value of $w_3$ is \textsf{false}.
\end{enumerate}
\end{example}

 \subsection{$Q$-terms and their values}

Often some additional data have an influence on expressions.
We call this data parameters.
More exactly, let's define the notion of parameters.
In the standard sense of mathematical logic \cite[Section 16]{al:ersh}, a set of parameters is a set of free variables of expression, cf. \cite[Section 1.2]{al:gj}.
So, we clarify our idea of parameters.

\begin{definition}
We keep the following agreements.
\begin{enumerate}
\item
Let $N$ be a set of parameters.
Then $N=\emptyset$ or $N=\{n_1,\dots,n_k\}$, where $k\geq1$, and $n_i$ is equal to any positive integer for every $i\in\{1,\dots,k\}$.
\item
If $N=\{n_1,\dots,n_k\}$, then as
\[
\bar{N}=(\bar {n}_1,\dots,\bar{n}_k),
\]
we denote the $k$-tuple, where $\bar {n}_i$ is some given value of the parameter $n_i$ for every $i\in\{1,\dots,k\}$.
\item
By $\left\{\bar{N}\right\}$ we denote the set of all possible $k$-tuples $\bar{N}$.
\end{enumerate}
\end{definition}

Now we introduce the concept of an unconditional $Q$-term.
\begin{definition}
If $N=\emptyset$, then we say that every expression $w$ over $B$ and $Q$ is an \emph{unconditional $Q$-term}.

Let $N\neq\emptyset$ and $V$ be a set of all expressions over $B$ and $Q$.
Suppose that we have a map $w:\left\{\bar{N}\right\}\to V\cup\emptyset$.
Then this map $w$ is called an \emph{unconditional $Q$-term}.
\end{definition}

Thus, for $N =\emptyset$ the concept of an unconditional $Q$-term and an expression over $B$ and $Q$ coincide.
If $N\neq\emptyset$, then for every $\bar{N}\in\left\{\bar{N}\right\}$ we have $w(\bar{N})$ is either some expression over $B$ and $Q$, or $w(\bar{N})=\emptyset$, meaning that $w(\bar{N})$ is undefined.

\begin{example}
Examples of unconditional $Q$-terms are
\begin{enumerate}
\item
$b_1+b_2\cdot b_3-b_4$, here $N=\emptyset$;
\item
$|b_1+b_2+\dots +b_n|$, here $N=\{n\}$;
\item
$(b_1=0)\wedge(b_3=0)\wedge\dots\wedge(b_{1+2\cdot n_1}=0)\wedge((b_2\cdot b_4\cdot\cdots\cdot b_{2+2\cdot n_2})>0)$, and here $N=\{n_1,n_2\}$.
\end{enumerate}
\end{example}

\begin{definition}
If $N =\emptyset$, then finding the value of an expression $w$ is \emph{finding the value of an unconditional $Q$-term $w$ under any interpretation of the variables of $B$}.

If $N\neq\emptyset$ and $w(\bar{N})\neq\emptyset$, then $w(\bar{N})$ is an expression over $B$ and $Q$.
We can find the value of the expression $w(\bar{N})$.
Certainly, we omit the value of $w(\bar{N})=\emptyset$.
Hence, we have \emph{finding the value of an unconditional $Q$-term $w$ under any interpretation of the variables of $B$}.
\end{definition}

\begin{definition}
If $N =\emptyset$, then the nesting level of the expression $w$ is \emph{the nesting level $T^w$ of the unconditional $Q$-term $w$}.

If $N\neq\emptyset$, then \emph{the nesting level of the unconditional $Q$-term $w$} is the partial function $T^w :\left\{\bar{N}\right\}\to T^{w(\bar{N})}$,
where $T^{w(\bar{N})}$ is the nesting level of the expression $w(\bar{N})$ for $w(\bar{N})\neq\emptyset$.
Also, we don't define the nesting level of the unconditional $Q$-term $w$, if $w(\bar{N})=\emptyset$.
\end{definition}

\begin{definition}
Let $N = \emptyset$ and $w$ be an unconditional $Q$-term.
Suppose that the expression $w$ over $B$ and $Q$ has a value of a logical type under any interpretation of the variables of $B$.

Then the unconditional $Q$-term $w$ is called the \emph{unconditional logical $Q$-term}.

Let $N\neq\emptyset$ and $w$ be an unconditional $Q$-term.
If the expression $w(\bar{N})$ for every $\bar{N}\in\{\bar{N}\}$ has a value of a logical type under any interpretation of the variables of $B$,
then the unconditional $Q$-term $w$ is called the \emph{unconditional logical $Q$-term}.
\end{definition}

\begin{definition}
Let $u_1,\dots,u_l$ be unconditional logical $Q$-terms, $w_1,\dots,w_l$ are unconditional $Q$-terms.
We denote
\[
\left(\widehat{u},\widehat{w}\right)=\left\{(u_i,w_i)\right\}_{i\in\{1,\dots,l\}}
\]
and call a \emph{conditional $Q$-term of length $l$}.
\end{definition}

We describe finding the value of a conditional $Q$-term $(\widehat{u},\widehat{w})$ under the interpretation of the variables of $B$.
\begin{definition}
Let $N = \emptyset$.
We find the values of the expressions $u_i,w_i$ for $i\in\{1,\dots,l\}$.
Under this finding of the values we can find a pair $u_{i_0},w_{i_0}$ such that $u_{i_0}$ has the value \textsf{true}.
Therefore, we can find the value of $w_{i_0}$.
Then we suppose that $\left(\widehat{u},\widehat{w}\right)$ has the value $w_{i_0}$.
Otherwise, we suppose that the value of $\left(\widehat{u},\widehat{w}\right)$ under the interpretation of the variables of $B$ is not determined.

Let  $N\neq\emptyset$ and $\bar{N}\in\left\{\bar{N}\right\}$.
We find the expressions $u_i(\bar{N}),w_i(\bar{N})$ for $i\in\{1,\dots,l\}$.
Under this finding of the values we can find a pair $u_{i_0}(\bar{N}),w_{i_0}(\bar{N})$ such that $u_{i_0}(\bar{N})$ has the value \textsf{true}.
Therefore, we can find the value of $w_{i_0}(\bar{N})$.
Then we suppose that $\left(\widehat{u},\widehat{w}\right)$ has the value $w_{i_0}(\bar{N})$.
Otherwise, we suppose that the value of $\left(\widehat{u},\widehat{w}\right)$ for $\bar{N}$ and under this interpretation of the variables of $B$ is not determined.
\end{definition}

\begin{definition}
Let $\left(\widehat{u},\widehat{w}\right)=\left\{(u_i,w_i)\right\}_{i\in\{1,2,\dots\}}$ be a countable set of pairs of unconditional $Q$-terms.
Assume that $\left\{(u_i,w_i)\right\}_{i\in\{1,\dots,l\}}$ is a conditional $Q$-term for any $l<\infty$.
Then we call $\left(\widehat{u},\widehat{w}\right)$ a \emph{conditional infinite $Q$-term}.
\end{definition}

Finally, we determine the value of a conditional infinite $Q$-term $\left(\widehat{u},\widehat{w}\right)$ under the interpretation of the variables of $B$.
\begin{definition}
Let $N = \emptyset$.
First of all, we find the values of the expressions $u_i,w_i$ for $i\in\{1,2,\dots\}$.
Under this finding of the values we can find a pair $u_{i_0},w_{i_0}$ such that $u_{i_0}$ has the value \textsf{true}.
Therefore, we can find the value of $w_{i_0}$.
Then we suppose that $\left(\widehat{u},\widehat{w}\right)$ has the value $w_{i_0}$.
Otherwise, we suppose that the value of $\left(\widehat{u},\widehat{w}\right)$ under the interpretation of the variables of $B$ is not determined.

Let  $N\neq\emptyset$ and $\bar{N}\in\left\{\bar{N}\right\}$.
First of all, we find the values of the expressions $u_i(\bar{N}),w_i(\bar{N})$ for $i\in\{1,2,\dots\}$.
Under this finding of the values we can find a pair $u_{i_0}(\bar{N}),w_{i_0}(\bar{N})$ such that $u_{i_0}(\bar{N})$ has the value \textsf{true}.
Therefore, we can find the value of $w_{i_0}(\bar{N})$.
Then we suppose that $\left(\widehat{u},\widehat{w}\right)$  has the value $w_{i_0}(\bar{N})$.
Otherwise, we suppose the value of $\left(\widehat{u},\widehat{w}\right)$ for $\bar{N}$ and under the interpretation of the variables of $B$ is not determined.
\end{definition}

\begin{remark}
If it does not matter whether a $Q$-term is unconditional, conditional or conditional infinite, then we call it \emph{a $Q$-term}.
\end{remark}

 \subsection{The concept of a $Q$-determinant of an algorithm}\label{ss:bmodel}
 
Consider an algorithmic problem
\begin{equation}
\vec{y}=F(N,B), \label{ap}
\end{equation}
where $N$ is a set of dimension parameters of the problem, $B$ is a set of input data, $\vec{y}=(y_1,\dots,y_m)$ is a set of output data,
$y_i\notin B$ for every $i\in\{1,\dots,m\}$, the integer $m$ is either a constant or the value of a computable parameter function $N$ under the condition
$N\neq\emptyset$, cf. \cite[p.~4]{al:gj} and \cite[p.~7--8]{al:knuth}.

We introduce the concept of a $Q$-determinant of an algorithm.
\begin{definition}
Let $\mathcal{A}$ be an numerical algorithm for solving an algorithmic problem $\vec{y}=F(N,B)$ and $M=\{1,\dots,m\}$.

Suppose that the algorithm $\mathcal{A}$ consists in finding for every $i\in M$ the value of $y_i$ when the value of a $Q$-term $f_i$ is found.
Then the set of $Q$-terms
\begin{equation}
\left\{f_i\mid i\in M\right\} \label{qd}
\end{equation}
is called \emph{the $Q$-determinant of the algorithm $\mathcal{A}$}.
Also a system of equations
\begin{equation}
y_i=f_i\text{ for all }i\in M\label{fqd}
\end{equation}
is called \emph{a representation of the algorithm $\mathcal{A}$ in the form of a $Q$-determinant}.
\end{definition}

\subsection{The concept of the $Q$-effective implementation of an algorithm}

It is very important how an algorithm is computed.
\begin{definition}
Let the algorithm $\mathcal{A}$ be represented in the form of a $Q$-deter\-minant $y_i=f_i$ for all $i\in M$ (cf. (\ref{fqd})).
The process of computing the $Q$-terms $f_i$ for all $i\in M$ is called an \emph{implementation of the algorithm} $\mathcal{A}$.
If an implementation of the algorithm is such that two or more operations are performed simultaneously, then it will be called a \emph{parallel implementation}.
\end{definition}

We describe a very important implementation of the algorithm $\mathcal{A}$.

For that we need some partition of the set $M$.
\begin{definition}
More exactly, suppose that $U$, $C$ and $I$ form a partition of the set $M =\{1,\dots,m\}$ with empty terms, that is:
\begin{enumerate}
\item
$U\cup C\cup I=M$;
\item
$U\cap C=U\cap I=C\cap I=\emptyset$;
\item
besides, one or two subsets of $U$, $C$, and $I$ may be empty.
\end{enumerate}
\end{definition}

\begin{definition}
For the partition above, we can associate $U$, $C$ and $I$ with the subsets of the set of $Q$-terms $\left\{f_i\right\}_{i\in M}$ such that:
\begin{equation}\label{qdd}
\begin{aligned}
(1)\: &\text{for every $i\in U$ we have a $Q$-term $f_i$ that is an unconditional, and }\\
        &f_i=w^i;\\
(2)\: &\text{for every $i\in C$ we have a $Q$-term $f_i$ that is a conditional, and }\\
        &f_i=\left\{\left(u_j^i,w_j^i\right)\right\}_{j\in\{1,\dots, l(i)\}},\text{ where } l(i)\text{ is either a constant or}\\
        &\text{a value of a computable function of }N\text{ if }N\neq\emptyset;\\
(3)\: &\text{for every $i\in I$ we have a $Q$-term $f_i$ that is a conditional infinite, and }\\
        &f_i=\left\{\left(u_j^i,w_j^i\right)\right\}_{j\in\{1,2,\dots\}}.
\end{aligned}
\end{equation}
\end{definition}

\begin{remark}
If the operations form a chain, then they can be performed in arbitrary order including a \emph{doubling scheme}.
For example, the doubling scheme for computing the chain
\[
a_1+a_2+a_3+a_4+a_5+a_6+a_7+a_8
\]
is the following.
First, we compute
\[
c_1=a_1+a_2, c_2=a_3+a_4, c_3=a_5+a_6, \text{ and } c_4=a_7+a_8
\]
simultaneously.
Then
\[
d_1=c_1+c_2 \text{ and } d_2=c_3+c_4
\]
simultaneously.
After that,
\[
e=d_1+d_2.
\]
\end{remark}

\begin{definition}
Now we describe the promised implementation of the algorithm $\mathcal{A}$ that are called the \emph{$Q$-effective implementation of the algorithm $\mathcal{A}$}.
\begin{description}
\item[First, $N = \emptyset$.]
Let us have an interpretation of the variables of $B$.

We compute the expressions
\begin{multline}\label{W}
W=\bigl\{w^i (i\in U); u_j^i,w_j^i (i\in C,j\in\{1,\dots,l(i)\});\\
 u_j^i,w_j^i (i\in I, j\in\{1,2,\dots\})\bigr\}
\end{multline}
simultaneously, in parallel.

We say that \emph{the operation is ready to perform} if we have already computed the values of all its operands.
When computing each of the expressions of $W$ (cf. (\ref{W})), we perform the operations as soon as they are ready to be executed.
If several operations of a chain are ready for execution, then their computations are performed according to the doubling scheme.

If for any $i\in C\cup I$ and $j\in\{1,2,\dots\}$ we have the expression $u_j^i$ with the value \textsf{false}, then the computation of the corresponding expression of $w_j^i$ is terminated.

If for any $i\in C\cup I$ and $j\in\{1,2,\dots\}$ the computation of some pair of expressions $(u_j^i,w_j^i)$ has a consequence that the value of one of two expressions is not defined, then the computation of the other expression is terminated.

If for any $i\in C\cup I$ the computation of a certain pair of expressions $(u_{j_0}^i,w_{j_0}^i)$ leads to the determination of their values and $u_{j_0}^i$ is \textsf{true}, 
then the computation of expressions $u_j^i,w_j^i$ is terminated for any $j\neq j_0$.

Computation of identical expressions $W$ and their identical subexpressions may not be duplicated.
\item[Now, $N\neq\emptyset$.]
Let us have an interpretation of the variables of $B$ and specified $\bar{N}\in\{\bar{N}\}$.

We get the set of expressions
\begin{multline}\label{WN}
W(\bar{N})=\bigl\{w^i(\bar{N}) (i\in U); u_j^i(\bar{N}),w_j^i(\bar{N}) (i\in C, j\in\{1,\dots ,l(i)\});\\
u_j^i(\bar{N}),w_j^i(\bar{N}) (i\in I, j\in\{1,2,\dots\})\bigr\}.
\end{multline}
The expressions from $W(\bar{N})$ can be computed by analogy with computations of the expressions from $W$ (cf. (\ref{W})).
\end{description}
\end{definition}

\begin{remark}\label{rem:Qef}
The definition of the $Q$-effective implementation shows it is \emph{the most parallel implementation of the algorithm}.
In other words, the $Q$-effective implementation uses the parallelism resource of the algorithm completely, cf \ref{ss:pr}.
\end{remark}

\subsection{The concept of a realizable implementation of an algorithm}

It is very important how we can realize an implementation of an algorithm.

\begin{definition}
Let the algorithm $\mathcal{A}$ be represented in the form of a $Q$-deter\-minant $y_i=f_i$ for all $i\in M$ (cf. (\ref{fqd})).
An implementation of the algorithm  $\mathcal{A}$ is called \emph{realizable} if it is such that a finite number of operations must be performed simultaneously.
\end{definition}

There are algorithms such that the $Q$-effective implementation is not realizable.
\begin{example}
Compute the sum of a series
\[
S=\sum_{k=1}^\infty(-1)^k\frac{1}{k}
\]
with a given accuracy $\epsilon<1$.

The $Q$-determinant of the algorithm for computing $S$ consists of one conditional infinite $Q$-term (cf. (\ref{qd})).

Namely, the representation of the algorithm for computing $S$ in the form of a $Q$-determinant is written as
\begin{multline*}
S=\bigl\{\left(\tfrac{1}{2}<\epsilon,-1\right),\left(\tfrac{1}{3}<\epsilon,-1+\tfrac{1}{2}\right),\dots,\\
\left(\tfrac{1}{k}<\epsilon,-1+\tfrac{1}{2}-\dots+(-1)^{k-1}\tfrac{1}{k-1}\right),\dots\bigr\}.
\end{multline*}
As the countable set of division operations is ready to be performed simultaneously, then the $Q$-effective implementation isn't realizable.
\end{example}

To perform the $Q$-effective implementation, we specify some conditions on the $Q$-determinant.
\begin{theorem}\label{th:real}
Let the algorithm $\mathcal{A}$ be represented in the form of a $Q$-determinant $y_i=f_i$ for all $i\in M$.
Then the $Q$-effective implementation of the algorithm $\mathcal{A}$ is realizable, if one of the following three conditions is satisfied.
\begin{enumerate}
\item
 We have $I=\emptyset$.
\item
 We have $I\neq\emptyset$, $N = \emptyset$, and for every $r\in\{1,2,\dots\}$ the set of operations of the nesting level $r$ for the expressions $u_j^i$, $w_j^i$  is finite for all $i\in I$, and $j\in\{1,2,\dots\}$.
\item
 We have $I\neq\emptyset$, $N\neq\emptyset$, and for every $r\in\{1,2,\dots\}$ the set of operations of the nesting level $r$ for the expressions $u_j^i(\bar{N})$, $w_j^i(\bar{N})$  is finite for all $i\in I$, $j\in\{1,2,\dots\}$,
 and $\bar{N}\in\{\bar{N}\}$.
\end{enumerate}
\end{theorem}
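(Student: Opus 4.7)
The plan is to unpack what ``realizable'' means in terms of the level-by-level structure of the $Q$-effective implementation, and then in each of the three cases show that, at every ``parallel time step'', only finitely many operations are ready to fire.

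First I would observe that, by definition of the $Q$-effective implementation, an operation is performed as soon as its operands have been computed, and in the inductive definition of an expression an operation of nesting level $r$ can only become ready after all operations of nesting level $\le r-1$ in its expression have been computed. Consequently the operations that are performed simultaneously at the $r$-th parallel step of the implementation are a subset of the union, over all expressions $e\in W$ (or $e\in W(\bar N)$ when $N\ne\emptyset$), of the operations of nesting level $r$ appearing in $e$. Hence it suffices, in each case, to prove that for every $r\in\{1,2,\dots\}$ this union is finite.

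Next I would handle the three cases uniformly. The contribution from $i\in U$ to $W$ is the single expression $w^i$, and the contribution from $i\in C$ is the finite set $\{u_j^i,w_j^i\}_{j\in\{1,\dots,l(i)\}}$; since $|M|=m$ is finite (a constant, or a computable function of $N$ evaluated at $\bar N$) and each $l(i)$ is finite for the same reason, the total set of expressions contributed by $U\cup C$ is finite, and therefore contains only finitely many operations of any given nesting level $r$. In Case 1 we have $I=\emptyset$, so $W$ (resp.\ $W(\bar N)$) is exactly this finite set and the claim is immediate. In Case 2 the remaining contribution comes from the countably many pairs $(u_j^i,w_j^i)$ with $i\in I$, $j\in\{1,2,\dots\}$; by hypothesis the set of operations of nesting level $r$ across all these pairs is finite, so the union with the finite $U\cup C$ part is still finite. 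Case 3 is the same argument as Case 2, carried out for a fixed $\bar N\in\{\bar N\}$ and applied to $W(\bar N)$, using the analogous hypothesis on $u_j^i(\bar N),w_j^i(\bar N)$.

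The only real subtlety, and what I would flag as the main obstacle, is justifying the identification of ``moment in time'' with ``nesting level $r$'': a naive reading might allow operations from different levels to coincide in time and inflate the count. I would address this by noting that the bound we need is just an upper bound on the number of simultaneous operations at any instant, and that the union over all nesting levels $r$ up to the maximal $T^e$ appearing is still a pointwise-finite family in each case, so the realizability condition (finitely many operations performed at once) holds regardless of how simultaneous execution is scheduled. Early termination rules (cutting off $w_j^i$ when $u_j^i$ is \textsf{false}, and cutting off both of $u_j^i,w_j^i$ once some $u_{j_0}^i$ evaluates to \textsf{true}) only reduce the workload, so the finiteness bound is preserved.
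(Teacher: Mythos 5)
Your proof is correct and follows essentially the same route as the paper: reduce realizability to the finiteness, for each nesting level $r$, of the set of operations of level $r$ occurring in $W$ (resp.\ $W(\bar N)$), observing that the $U\cup C$ part contributes only finitely many expressions and that the hypothesis supplies finiteness for the $I$ part. You are somewhat more explicit than the paper about why simultaneous execution aligns with nesting levels and about the early-termination rules, but these are refinements of the same argument rather than a different approach.
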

\begin{proof}
Consider condition 1.
Let $I=N=\emptyset$.
Since $I=\emptyset$, executing the $Q$-effective implementation requires to compute a finite set of expressions
\[
W=\left\{w^i (i\in U); u_j^i,w_j^i (i\in C,j\in\{1,\dots ,l(i)\})\right\}.
\]
So, it is necessary to perform a finite number of operations simultaneously.

Let $N\neq\emptyset$.
In this case, when executing the $Q$-effective implementation for any $\bar{N}\in \{\bar{N}\}$, it is necessary to compute a finite set of expressions
\[
W(\bar{N})=\left\{w^i(\bar{N}) (i\in U); u_j^i(\bar{N}),w_j^i(\bar{N}) (i\in C,j\in\{1,\dots ,l(i)\})\right\}.
\]
Again, it is necessary to perform a finite number of operations simultaneously.

Now we consider condition 2.
It follows that for expressions $W$ (cf. (\ref{W})) the set of operations of the nesting level $r$ is finite for all $r\in\{1,2,\dots\}$.
Therefore, it is necessary to perform a finite number of operations under the $Q$-effective implementation simultaneously.

So, the $Q$-effective implementation of the algorithm $\mathcal{A}$ is realizable.

Consideration of condition 3 is similar to condition 2.
\end{proof}

\begin{remark}
We examined a considerable number of numerical algorithms and came to the conclusion that almost all of them have the realizable $Q$-effective implementation.
\end{remark}

\subsection{The concept of the parallelism resource of an algorithm}\label{ss:pr}

\begin{requirements}
We hold the following conditions and notations.
\begin{enumerate}
\item
The algorithm $\mathcal{A}$ is represented in the form of a $Q$-determinant $y_i=f_i$ for all $i\in M$ (cf. (\ref{fqd})).
\item
The values of the $Q$-terms $f_i$ for all $i\in M$ is determined under any interpretation of the variables of $B$ and any $\bar{N}\in \{\bar{N}\}$ if $N\neq\emptyset$.
\item
The $Q$-effective implementation of the algorithm $\mathcal{A}$ is realizable.
\item
Let $N=\emptyset$ and $I\neq\emptyset$.
Then under a given interpretation of the variables of $B$ for any $i\in I$ there is a pair of expressions $u_{j_i}^i$, $w_{j_i}^i$ such that the value of $u_{j_i}^i$ is equal to \textsf{true}, and the value of $w_{j_i}^i$ is defined.

Note that $j_i$ depends on the interpretation of the variables of $B$.
We introduce the notation
\[
\widetilde{W}=\left\{w^i (i\in U); u_j^i,w_j^i (i\in C,j\in\{1,\dots,l(i)\});u_{j_i}^i,w_{j_i}^i (i\in I)\right\}.
\]
\item
Let $N\neq\emptyset$ and $I\neq\emptyset$.
Then under a given interpretation of the variables of $B$, for any $\bar{N}\in \{\bar{N}\}$, and $i\in I$ there is a pair of expressions
$u_{j_i}^i(\bar{N})$, $w_{j_i}^i(\bar{N})$ such that the value of $u_{j_i}^i(\bar{N})$ is equal to \textsf{true}, and the value of $w_{j_i}^i(\bar{N})$ is defined.

Note that $j_i$ depends on $\bar{N}$ and the interpretation of the variables of $B$.
We introduce the notation
\begin{multline*}
\widetilde{W}(\bar{N})=\bigl\{w^i(\bar{N}) (i\in U); u_j^i(\bar{N}),w_j^i(\bar{N}) (i\in C,j\in\{1,\dots,l(i)\});\\
 u_{j_i}^i(\bar{N}),w_{j_i}^i(\bar{N}) (i\in I)\bigr\}.
\end{multline*}
\end{enumerate}
\end{requirements}

\begin{definition}
We define the characteristics of the parallelism resource of the algorithm $\mathcal{A}$:
\[
D_\mathcal{A}\text{\emph{ is the algorithm height and }}P_\mathcal{A}\text{\emph{ is the algorithm width}}.
\]

If $N=\emptyset$, then
\begin{flalign}
D_\mathcal{A}&=\begin{cases}\label{D}
\max\limits_{w\in W}T^w &\text{if } I=\emptyset, \\
\max\limits_{w\in\widetilde{W}}T^w& \text{if } I\neq\emptyset;
\end{cases}
\intertext{if $O_r^{w}$ is the number of operations of the nesting level $r$ of the expression $w$, then}
P_\mathcal{A}&=\max\limits_{1\leq r\leq D_\mathcal{A}}\sum_{w\in W}O_r^w.\label{P}
\end{flalign}

If $N\neq\emptyset$, then
\begin{flalign}
D_\mathcal{A}(\bar{N})&=\begin{cases}\label{DN}
\max\limits_{w(\bar{N})\in W(\bar{N})}T^{w(\bar{N})} \text{ if } I=\emptyset, \\
\max\limits_{w(\bar{N})\in \widetilde{W}(\bar{N})}T^{w(\bar{N})} \text{ if } I\neq\emptyset;
\end{cases}
\intertext{if $O_r^{w(\bar{N})}$ is the number of operations of the nesting level $r$ of the expression $w(\bar{N})$, then}
P_\mathcal{A}(\bar{N})&{=}\max\limits_{1\leq r\leq D_\mathcal{A}({\bar{N})}}\sum_{w(\bar{N})\in W(\bar{N})}O_r^{w(\bar{N})}.\label{PN}
\end{flalign}

In these formulas, $W$ and $W(\bar{N})$ have the same meaning as in formulas (\ref{W}) and (\ref{WN}), respectively.
\end{definition}

\begin{remark}\label{rem:DP}
We would like to note some important features of $D_\mathcal{A}$ and $P_\mathcal{A}$.
\begin{enumerate}
\item
$D_\mathcal{A}$ and $P_\mathcal{A}$ depend on $N$ if $N\neq\emptyset$.
\item
$D_\mathcal{A}$ and $P_\mathcal{A}$ don't depend on the interpretation of the variables of $B$ if $I=\emptyset$.
\item
$D_\mathcal{A}$ and $P_\mathcal{A}$ depend on the interpretation of the variables of $B$ if $I\neq\emptyset$.
\item
The values of $D_\mathcal{A}$ and $P_\mathcal{A}$ estimate the parallelism resource of the algorithm $\mathcal{A}$.
More exactly, $D_\mathcal{A}$ characterizes \emph{the execution time of the $Q$-effective implementation of the algorithm},
and $P_\mathcal{A}$ characterizes \emph{the number of processors required to execute the $Q$-effective implementation}.
\end{enumerate}
\end{remark}

\section{The $Q$-determinants, realizabilities, and the parallelism resources of some numerical algorithms}

In this section, we consider three algorithms: the scalar product, the Gaussian elimination, and solving a system of grid equations by the Jacobi method.
We would like to point out the reasons for choosing these algorithms.
\begin{description}
\item[First.]
Although the computation of the scalar product of vectors is very simple, this computation is very common and is an inseparable part of many algorithms.
So, it seems to us that the consideration is very useful and important.

The $Q$-determinant  consists of one unconditional $Q$-term.
\item[Second.]
It is quite clear that the Gaussian elimination in various forms is one of the bases of numerical mathematics, but even of all mathematics.

In this case, the $Q$-determinant consists of $n$ conditional $Q$-terms of length $n!$, where $n$ is an integer.
\item[Third.]
The solving a system of grid equations by the Jacobi method is a well-known iterative method.
We need to consider the Jacobi method, because the number of iterations is a very important characteristic of many numerical algorithms.

Now the $Q$-determinant consists of a finite number of conditional infinite $Q$-terms.
\end{description}
Thus, in our opinion, we will consider very important methods with different structures of $Q$-determinants.

\subsection{The scalar product of vectors}

\subsubsection{The $Q$-determinant of the scalar product}

Consider the algorithm $\mathcal{S}$ for computing the scalar product of vectors.
\[
\vec{a}^1=(a_1^1,\dots,a_n^1) \text{ and } \vec{a}^2=(a_1^2,\dots,a_n^2)
\]
\begin{flalign}\label{eq:sp}
(\vec{a}^1,\vec{a}^2)&=\sum_{i=1}^na_i^1 a_i^2.
\end{flalign}	
\begin{qdet}
In this case $N=\{n\}$, $B=\{a_i^1,a_i^2 \mid i\in\{1,\dots n\}\}$, and $\vec{y}=\{(\vec{a}^1,\vec{a}^2)\}$.
Now we note that the equation {\rm(\ref{eq:sp})} is the representation of the algorithm $\mathcal{S}$ in the form of a $Q$-determinant.
So, the $Q$-determinant consists of one unconditional $Q$-term.
\end{qdet}

\subsubsection{Realizability and the parallelism resource of the scalar product}

Let $\log x$ be the binary logarithm of the number $x$, and $\lceil x\rceil$ be the ceiling of $x$, i.e., the least integer greater than or equal to the number $x$.
\begin{proposition}\label{p:s}
The $Q$-effective implementation of the algorithm $\mathcal{S}$ is realizable.

For the algorithm $\mathcal{S}$ we have
\[
D_\mathcal{S}=\left\lceil\log n\right\rceil+1\text{ and }P_\mathcal{S}=n.
 \]
\end{proposition}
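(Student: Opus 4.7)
The plan is to dispatch realizability by invoking Theorem \ref{th:real}, then compute $D_\mathcal{S}$ and $P_\mathcal{S}$ directly from the single unconditional $Q$-term given in (\ref{eq:sp}).

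First, for realizability, I observe that the $Q$-determinant of $\mathcal{S}$ consists of the one unconditional $Q$-term $\sum_{i=1}^n a_i^1 a_i^2$, so $C = I = \emptyset$ and $U = \{1\}$. Hence condition (1) of Theorem~\ref{th:real} applies directly and the $Q$-effective implementation is realizable.

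Next, for the height, I would analyze the single expression $w = a_1^1 a_1^2 + a_2^1 a_2^2 + \dots + a_n^1 a_n^2$. Its level-$1$ operations are the $n$ multiplications $a_i^1 \cdot a_i^2$, which by the definition of the $Q$-effective implementation are all performed simultaneously as soon as their operands (the input data) are available. Above them sits an addition chain of length $n$ on these $n$ products; by the Remark on chains, the $Q$-effective implementation evaluates this chain by the doubling scheme. Induction on $n$ (or the standard computation: halving the number of summands at each step) shows the doubling scheme takes $\lceil \log n \rceil$ parallel steps. Since these addition steps occur strictly above the multiplication step in the nesting hierarchy, formula (\ref{DN}) gives
\[
D_\mathcal{S}(\bar N) \;=\; T^w \;=\; 1 + \lceil \log n \rceil.
\]

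Finally, for the width, I would count $O_r^w$ at each level: $O_1^w = n$ (the $n$ multiplications), and for $r = 2, 3, \dots, \lceil \log n \rceil + 1$ the doubling scheme produces at most $\lceil n/2^{\,r-1} \rceil$ additions, a quantity that strictly decreases with $r$. Thus by (\ref{PN})
\[
P_\mathcal{S}(\bar N) \;=\; \max_{1 \le r \le D_\mathcal{S}} O_r^w \;=\; O_1^w \;=\; n.
\]
The only mildly delicate step is pinning down that the doubling scheme realizes exactly $\lceil \log n \rceil$ levels for every $n$ (not just powers of two) and that no level beyond the first can contain more than $n$ operations; once the ceiling bookkeeping is checked, everything else is immediate from formulas (\ref{DN})--(\ref{PN}).
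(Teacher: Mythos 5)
Your proposal is correct and follows essentially the same route as the paper: realizability via condition (1) of Theorem~\ref{th:real} since $I=\emptyset$, and the height and width read off from the doubling scheme applied to the addition chain sitting above the $n$ simultaneous multiplications. The paper simply declares these values ``completely obvious'' after invoking the doubling scheme; your level-by-level operation count ($n$ multiplications at level $1$, at most $\lceil n/2^{\,r-1}\rceil$ additions at level $r\geq 2$) supplies exactly the bookkeeping the paper omits.
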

\begin{proof}
Indeed, since $I=\emptyset$, then the $Q$-effective implementation of the algorithm $\mathcal{S}$ is realizable by Theorem \ref{th:real}.

In this case, we must use the doubling scheme.
Therefore, the definitions of height and width for the algorithm $\mathcal{S}$ are completely obvious.
\end{proof}

\subsection{The Gauss--Jordan method for solving a system of linear equations}

It is well known the Gauss--Jordan method (the Gaussian elimination) is universal.

For simplicity, suppose that a $n\times n$ matrix $A$ is invertible (has a nonzero determinant).

Let $\vec{x}=(x_1,\dots ,x_n)^T$, $\vec{b}=(a_{1,n+1},\dots ,a_{n,n+1})^T$ be column vectors, and $\bar{A}=[a_{ij}] $ be an augmented matrix of the system.
Therefore,
\[
A\vec{x}=\vec{b}.
\]
In this case
\begin{gather*}
N=\{n\},\ B=\{a_{ij} \mid i\in\{1,\dots n\}, j\in\{1,\dots n+1\}\},\text{ and}\\
\vec{y}=\{x_i \mid i\in\{1,\dots n\}\}.
\end{gather*}

\subsubsection{The $Q$-determinant of the Gauss--Jordan method}\label{sss:G}

There are many variants of algorithms for implementation of  the Gauss--Jordan method.

Consider one of them that denote by $\mathcal{G}$.
This algorithm has $n$ steps.
\begin{description}
\item[Step $1$.]
We must select the leading element.
If $a_{11}\neq 0$, then $a_{11}$ is the leading element and $j_1=1$.
Otherwise, if $a_{1j}=0$ for $j<j_1\leq n$ and $a_{1j_1}\neq 0$, then $a_{1j_1}$ is the leading element.
So, the first non-zero element in the first row of the matrix $A$ is the leading element.

Then we get the updated augmented matrix $\bar{A}^{j_1}=[a_{ij}^{j_1}]$ by the rule
\begin{align*}
a_{1j}^{j_1}&=\frac{a_{1j}}{a_{1j_1}},\\
a_{ij}^{j_1}&=a_{ij}-\frac{a_{1j}}{a_{1j_1}}a_{ij_1}
\end{align*}
for every $i\in\{2,\dots,n\}$ and $j\in\{1,\dots,n+1\}$.

The nesting level of the right-hand sides of the equations is at most $3$.

More exactly, for every $n\in\{2,3,\dots\}$ and $j\in\{1,\dots ,n+1\}$ the expression
\[
\frac{a_{1j}}{a_{1j_1}}
\]
has the nesting level $1$, and for every $i\in\{2,\dots ,n\}$ the expression
\[
a_{ij}-\frac{a_{1j}}{a_{1j_1}}a_{ij_1}
\]
has the nesting level $3$.

\item[Step $k\in\{2,\dots,n\}$.]
After step $k-1$ we get an augmented matrix $\bar{A}^{j_1\dots j_{k-1}}$.
We must select the leading element.

If $a_{k1}^{j_1\dots j_{k-1}}\neq 0$, then $a_{k1}^{j_1\dots j_{k-1}}$ is the leading element and $j_k=1$.
Otherwise, if $a_{kj}^{j_1\dots j_{k-1}}=0$ for $j<j_k\leq n$ and $a_{kj_k}^{j_1\dots j_{k-1}}\neq 0$, then $a_{kj_k}^{j_1\dots j_{k-1}}$ is the leading element.
So, the first nonzero element in $k$th row of the matrix $\bar{A}^{j_1\dots j_{k-1}}$ is the leading element.

Now we obtain the next augmented matrix
\[
\bar A^{j_1\dots j_k}=\left[a_{ij}^{j_1\dots j_k}\right]_{\begin{subarray}{l}i\in\{1,\dots ,n\},\\j\in\{1,\dots ,n+1\}\end{subarray}}
\]
by rule
\begin{align*}
a_{kj}^{j_1\dots j_k}&=\frac{a_{kj}^{j_1\dots j_{k-1}}}{a_{kj_k}^{j_1\dots j_{k-1}}},\\
a_{ij}^{j_1\dots j_k}&=a_{ij}^{j_1\dots j_{k-1}}-\frac{a_{kj}^{j_1\dots j_{k-1}}}{a_{kj_k}^{j_1\dots j_{k-1}}}a_{ij_k}^{j_1\dots j_{k-1}}
\end{align*}
for every $i\in\{1,\dots,n\}$, $i\neq k$, and $j\in\{1,\dots,n+1\}$.

By induction, the nesting level of the right-hand sides of the equations is at most $3k$.

More exactly, for every $n\in\{2,3,\dots\}$ and $j\in\{1,\dots ,n+1\}$ the expression
\[
\frac{a_{kj}^{j_1\dots j_{k-1}}}{a_{kj_k}^{j_1\dots j_{k-1}}}
\]
has the nesting level $3(k-1)+1=3k-2$, and for every  $i\in\{1,\dots,n\}$, $i\neq k$ the expression
\[
a_{ij}^{j_1\dots j_{k-1}}-\frac{a_{kj}^{j_1\dots j_{k-1}}}{a_{kj_k}^{j_1\dots j_{k-1}}}a_{ij_k}^{j_1\dots j_{k-1}}
\]
has the nesting level $3k$.
\end{description}

As a result, we get a system of equations $A^{j_1\dots j_n}\vec{x}=\vec b^{j_1\dots j_n}$ after step $n$, where
\begin{align*}
A^{j_1\dots j_n}&=\left[a_{ij}^{j_1\dots j_n}\right]_{\begin{subarray}{l}i\in\{1,\dots ,n\},\\j\in\{1,\dots ,n+1\}\end{subarray}},\\
\vec{b}^{j_1\dots j_n}&=(a_{1,n+1}^{j_1\dots j_n},\dots ,a_{n,n+1}^{j_1\dots j_n})^T.
\end{align*}

Moreover, for every $i\in\{1,\dots ,n\}$
\[
a_{ij_i}^{j_1\dots j_n}=1
\]
and for every $i\in\{1,\dots ,n\}$, $j\in\{1,\dots ,n\}$, and $j\neq j_i$
\[
a_{ij}^{j_1\dots j_n}=0.
\]
Hence, for every $i\in\{1,\dots ,n\}$
\[
x_{j_i}=a_{i,n+1}^{j_1\dots j_n}
\]
that is the solution of our system.

We also note that for every $n\in\{2,3,\dots\}$
\begin{equation}\label{f:nl}
T^w=
\begin{cases}
3n&\text{for }w=a_{i,n+1}^{j_1\dots j_n},\text{ if }i\in\{1,\dots ,n-1\};\\
3n-2&\text{for }w=a_{n,n+1}^{j_1\dots j_n}.
\end{cases}
\end{equation}

The $n$-tuple $(j_1,\dots ,j_n)$ determines the choice of leading elements.
This is the permutation of elements of the set $\{1,\dots, n\}$.
The number of such permutations is $n!$.
We number these permutations by positive integers from $1$ to $n!$.
So, each permutation has own serial number.

Denote by
\begin{align*}
L_{j_1}&=
\begin{cases}
\text{\textsf{true}}&\text{if }j_1=1,\\
\displaystyle{\bigwedge_{j=1}^{j_1-1}(a_{1j}=0)}&\text{ if }j_1\neq 1
\end{cases}\\
\intertext{and for every $l\in\{2,\dots ,n\}$}
L_{j_l}&=
\begin{cases}
\text{\textsf{true}}&\text{if }j_l=1,\\
\displaystyle{\bigwedge_{j=1}^{j_l-1}(a_{lj}^{j_1\dots j_{l-1}}=0)}& \text{if }j_l\neq 1.
\end{cases}
\end{align*}

We have the parameter set $N=\{n\}$ and $M=\{1,\dots,n\}$ as in (\ref{ap}) and (\ref{qd}).
Further, let $p$ be a serial number of permutation $(j_1,\dots,j_n)$.
Then
\begin{equation}
w_p^{j_l}=a_{l,n+1}^{j_1\dots j_n}
\label{f:w}
\end{equation}
for every $l\in\{1,\dots ,n\}$ and
\begin{equation}
u_p=L_{j_1}\wedge (a_{1j_1}\neq 0)\wedge\left(\bigwedge_{l=2}^n\left(L_{j_l}\wedge\left(a_{l j_l}^{j_1\dots j_{l-1}}\neq 0\right)\right)\right)\label{uiGJ}
\end{equation}
are unconditional $Q$-terms.
\begin{qdet}\label{qd:G}
Finally, we have the set
\begin{equation}
x_j=\left\{(u_1,w_1^j),\dots ,(u_{n!},w_{n!}^j)\right\}_{j\in\{1,\dots ,n\}}
\end{equation}
is the representation of the algorithm $\mathcal{G}$ in the form of a $Q$-determinant.

Hence, the $Q$-determinant consists of $n$ conditional $Q$-terms of length $n!$.
\end{qdet}

\subsubsection{Realizability and the parallelism resource of the Gauss--Jordan method}

\begin{lemma}\label{l:realG}
The $Q$-effective implementation of the algorithm $\mathcal{G}$ is realizable.
\end{lemma}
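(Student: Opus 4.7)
The plan is to reduce the lemma to a direct application of condition 1 of Theorem \ref{th:real}. The only substantive thing that needs to be checked is that the index set $I$ (the subset of $M$ corresponding to conditional infinite $Q$-terms) is empty for the representation of $\mathcal{G}$ constructed in \S\ref{sss:G}.

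First I would look back at the $Q$-determinant block just above the lemma and read off the explicit form of each output $Q$-term, namely $x_j=\{(u_1,w_1^j),\dots,(u_{n!},w_{n!}^j)\}$ for $j\in\{1,\dots,n\}$. I would then verify that each such $x_j$ is a conditional $Q$-term of finite length $n!$, and that $n!$ is a computable function of the parameter $n\in N$. In the terminology of (\ref{qdd}), every $f_i$ therefore falls under case (2), which gives the partition $U=\emptyset$, $C=M$, and $I=\emptyset$.

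After that, the conclusion is immediate: since $I=\emptyset$, condition 1 of Theorem \ref{th:real} applies, so the $Q$-effective implementation of $\mathcal{G}$ is realizable, and nothing further is required.

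Because the proof is essentially an invocation of an already-proved theorem, there is no real obstacle. The only subtlety worth flagging is the need to confirm that $n!$ is a legitimate computable function of $n$, so that each $f_i$ qualifies as a genuinely conditional (rather than conditional infinite) $Q$-term; this is exactly what keeps $I$ empty and lets condition 1 fire.
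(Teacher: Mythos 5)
Your proposal is correct and follows exactly the paper's own argument: the $Q$-determinant of $\mathcal{G}$ consists of $n$ conditional $Q$-terms of finite length $n!$, so $I=\emptyset$ and condition 1 of Theorem \ref{th:real} applies immediately. The extra check that $n!$ is a computable function of $n$ is a reasonable bit of diligence the paper leaves implicit, but it does not change the route.
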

\begin{proof}
Since $I=\emptyset$, then the $Q$-effective implementation of the algorithm $\mathcal{S}$ is realizable by Theorem \ref{th:real}.
\end{proof}

We need two auxiliary Lemmas.
First of all, we define an additional function $m_n$ as follows:
\begin{align*}
m_n(0)&=n,
\intertext{for every  $k\in\{0,1,\dots,n-2\}$}
m_n(k+1)&=\left\lceil\frac{m_n(k)}{8}\right\rceil+n-(k+1).
\end{align*}

\begin{lemma}\label{l:g1}
For every $k\in\{0,1,\dots,n-1\}$
\[
m_n(k)\geq m_{n-1}(k)\geq m_n(k)-2.
\]
\end{lemma}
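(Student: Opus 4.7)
The plan is to prove both inequalities simultaneously by induction on $k$, exploiting the parallel recursive structure of $m_n$ and $m_{n-1}$.

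For the base case $k=0$, we have $m_n(0)=n$ and $m_{n-1}(0)=n-1$, so $m_n(0)-m_{n-1}(0)=1$, which lies in $[0,2]$ as required.

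For the inductive step, assume $m_n(k)\geq m_{n-1}(k)\geq m_n(k)-2$. From the defining recurrence,
\[
m_n(k+1)-m_{n-1}(k+1)=\left\lceil\frac{m_n(k)}{8}\right\rceil-\left\lceil\frac{m_{n-1}(k)}{8}\right\rceil+1,
\]
since the additive terms differ by exactly $1$. The left inequality $m_n(k+1)\geq m_{n-1}(k+1)$ follows immediately because $m_n(k)\geq m_{n-1}(k)$ makes the ceiling difference nonnegative, so the total is at least $1\geq 0$. The right inequality $m_{n-1}(k+1)\geq m_n(k+1)-2$ reduces to showing
\[
\left\lceil\frac{m_n(k)}{8}\right\rceil-\left\lceil\frac{m_{n-1}(k)}{8}\right\rceil\leq 1.
\]

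The key elementary lemma I need is: if $a\geq b\geq a-2$, then $\lceil a/8\rceil-\lceil b/8\rceil\leq 1$. This is an easy case check depending on $a\bmod 8$; the ceiling function can drop by at most one when its argument is decreased by $2$ (and by $0$ or $1$ when decreased by $1$), because the jump from $\lceil a/8\rceil=q+1$ to $\lceil(a-2)/8\rceil=q-1$ would require spanning a gap of at least $9$. Applying this lemma with $a=m_n(k)$ and $b=m_{n-1}(k)$, together with the inductive hypothesis $m_{n-1}(k)\geq m_n(k)-2$, gives the required bound and closes the induction.

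The only mildly delicate point is the ceiling-difference lemma, but it is a one-line verification. Everything else is direct substitution into the recurrence and accounting for the off-by-one between the $n$ and $n-1$ additive terms.
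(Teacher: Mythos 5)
Your proof is correct and follows essentially the same route as the paper's: induction on $k$, the off-by-one accounting between the additive terms $n-(k+1)$ and $(n-1)-(k+1)$, and the key observation that $\lceil\cdot/8\rceil$ drops by at most $1$ when its argument drops by at most $2$ (the paper writes this as $\lceil(m_n(k)-2)/8\rceil\geq\lceil m_n(k)/8\rceil-1$, you as $\lceil a/8\rceil-\lceil b/8\rceil\leq1$ for $a\geq b\geq a-2$). The only difference is presentational: you work with the difference $m_n(k+1)-m_{n-1}(k+1)$ while the paper chains inequalities, but the content is identical.
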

\begin{proof}
We prove by induction on $k$.

For $k=0$,
\[
m_n(0)=n>m_{n-1}(0)=n-1>m_n(0)-2=n-2,
\]
that we need.

By the inductive assumption,
\[
m_n(k)\geq m_{n-1}(k)\geq m_n(k)-2.
\]
From that,
\[
\left\lceil\frac{m_n(k)}{8}\right\rceil\geq\left\lceil\frac{m_{n-1}(k)}{8}\right\rceil\geq\left\lceil\frac{m_n(k)-2}{8}\right\rceil.
\]
Then
\[
\left\lceil\frac{m_n(k)}{8}\right\rceil+n-(k+1)\geq\left\lceil\frac{m_{n-1}(k)}{8}\right\rceil+n-(k+1)\geq\left\lceil\frac{m_n(k)-2}{8}\right\rceil+n-(k+1).
\]
Hence,
\begin{gather*}
\left\lceil\frac{m_n(k)}{8}\right\rceil+n-(k+1)=m_n(k+1),\\
\left\lceil\frac{m_{n-1}(k)}{8}\right\rceil+n-(k+1)=m_{n-1}(k+1)+1,\\
\left\lceil\frac{m_n(k)-2}{8}\right\rceil+n-(k+1)\geq\left\lceil\frac{m_n(k)}{8}\right\rceil-1+n-(k+1)=m_n(k+1)-1.
\end{gather*}
From that, it follows
\[
m_n(k+1)\geq m_{n-1}(k+1)\geq m_n(k+1)-2,
\]
that we need.
\end{proof}

\begin{lemma}
Let $p_0$ be a serial number of the permutation $(n,n-1,\dots ,1)$.
For every $n\geq 2$ the nesting level
\[
T^{u_{p_0}}=3n-1.
\]
Moreover, for every $p\in\{1,\dots,n!\}$ the nesting level
\[
T^{u_p}\leq3n-1.
\]
\end{lemma}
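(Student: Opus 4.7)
The plan is to catalogue the atomic comparison subexpressions inside $u_p$, pin down their nesting levels, and then schedule the top-level conjunction so that its root sits at level at most $3n-1$. The key preparatory fact from the lead-in to the theorem is that every entry $a_{l,j}^{j_1\dots j_{l-1}}$ has nesting level at most $3(l-1)$, so every comparison atom contributed at step $l$ of $u_p$ lies at nesting level at most $3l-2$, and the number of atoms contributed at step $l$ is exactly $j_l\leq n$.

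For $p_0=(n,n-1,\dots,1)$ I would first observe that $j_l=n-l+1$ and in particular $j_n=1$, so $L_{j_n}=\textsf{true}$ and step $n$ contributes only the single atom $(a_{n1}^{j_1\dots j_{n-1}}\neq 0)$ at the maximal level $3n-2$. The lower bound $T^{u_{p_0}}\geq 3n-1$ then drops out: this maximal atom must be conjoined with at least one other atom (for instance $(a_{11}=0)\in L_{j_1}$, present because $j_1=n\geq 2$), and rule~4 of the nesting-level definition adds one more level. For the matching upper bound I would build an explicit parenthesization in which the batches of step-$l$ atoms (arriving at level $3l-2$) are combined batch-by-batch with the running conjunction, using the three levels of slack between consecutive batches to reduce the running operand pool; because the last batch consists of a single atom, one final AND suffices to reach level exactly $3n-1$.

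For arbitrary $p$, the same batch-by-batch schedule applies verbatim: the step-$l$ atoms still sit at level at most $3l-2$, their count $j_l$ is still at most $n$, and three levels of slack are again available before the next batch arrives. The only difference from $p_0$ is that the last batch may contain $j_n>1$ atoms; these are absorbed into the final merge, still producing a root at level at most $3n-1$.

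The main obstacle will be the bookkeeping inside the schedule: I must show that the pool of operands awaiting combination never exceeds what three levels of slack (a factor of $8$) can reduce. This is exactly what the auxiliary function $m_n$ introduced just before the lemma captures; its recurrence $m_n(k+1)=\lceil m_n(k)/8\rceil+n-(k+1)$ models ``absorb the $n-(k+1)$ new atoms of the next step while reducing the previous pool eight-fold''. The monotonicity estimate of Lemma~\ref{l:g1} is set up precisely to let the argument pass between the size-$n$ and size-$(n-1)$ versions of this invariant when I induct on $n$, and is the technical lynchpin of the upper bound.
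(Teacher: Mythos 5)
Your treatment of the equality $T^{u_{p_0}}=3n-1$ follows the paper's own route. The paper likewise reduces the claim to the identity $T^{u_{p_0}}=3n-2+\lceil\log m_n(n-1)\rceil$ and then proves $m_n(n-1)=2$ by induction on $n$ via Lemma~\ref{l:g1} --- exactly the roles you assign to the recurrence for $m_n$ and to that lemma. The scheduling bookkeeping you defer (that the pending pool after batch $k+1$ arrives is $m_n(k)$, because three nesting levels separate consecutive batches and each level halves the pool) is left equally implicit in the paper, so on this half you are on the same footing and your lower-bound observation (the unique level-$(3n-2)$ atom must still be conjoined with something) is a correct and slightly more explicit version of what the paper silently assumes.

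The gap is in the ``moreover'' clause. You claim the batch-by-batch schedule transfers verbatim to an arbitrary permutation because the extra atoms of the last batch ``are absorbed into the final merge, still producing a root at level at most $3n-1$.'' This step fails: for a general permutation the batch sizes are $j_1,\dots,j_n$, an arbitrary rearrangement of $1,\dots,n$, and when $j_n\geq3$ the final batch alone delivers $j_n$ comparison atoms at nesting level $3n-2$. Any binary conjunction tree over $j_n$ expressions of that level has root level at least $3n-2+\lceil\log j_n\rceil\geq3n$, so no schedule absorbs them in a single extra level. Concretely, for $n=3$ and the permutation $(1,2,3)$ the atoms of $u_p$ sit at levels $1,4,4,7,7,7$, and the three level-$7$ atoms force $T^{u_p}\geq9>3n-1=8$. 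The paper disposes of this clause with a one-line appeal to formula~(\ref{uiGJ}), asserting $T^{u_p}\leq T^{u_{p_0}}$ without argument, which is no more of a proof than yours; but your explicit justification is the one that can be refuted by example (the monotone decrease of batch sizes is special to $p_0$, where late batches are small, and is precisely what makes its final pool equal to $2$), so this part of the statement cannot be established by the argument you propose.
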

\begin{proof}
By formula (\ref{uiGJ}), for every $p\in\{1,\dots,n!\}$ the nesting level $T^{u_p}$ of the $Q$-term $u_p$ does not exceed the nesting level $T^{u_{p_0}}$ of the $Q$-term $u_{p_0}$.
So we should only determine $T^{u_{p_0}}$.

The nesting level of the $Q$-term $u_{p_0}$ is equal to
\[
T^{u_{p_0}}=3n-2+\left\lceil\log m_n(n-1)\right\rceil.
\]
Hence, it is sufficient to prove
\[
m_n(n-1)=2.
\]

We prove it by induction on $n$.
For $n=2$ we have
\[
m_2(1)=\left\lceil\frac{2}{8}\right\rceil+1=2.
\]

By the inductive assumption, there is $m_{n-1}(n-2)=2$.
By Lemma \ref{l:g1},
\[
m_n(n-2)\geq m_{n-1}(n-2)\geq m_n(n-2)-2.
\]
So,
\[
m_n(n-2)\geq 2\geq m_n(n-2)-2\text{ or }2\leq m_n(n-2)\leq 4.
\]
Then
\[
\left\lceil\frac{2}{8}\right\rceil+1\leq\left\lceil\frac{m_n(n-2)}{8}\right\rceil+1\leq\left\lceil\frac{4}{8}\right\rceil+1.
\]
Since
\[
\left\lceil\frac{m_n(n-2)}{8}\right\rceil+1=m_n(n-1),
\]
then
\[
m_n(n-1)=2\text{ and }T^{u_{p_0}}=3n-1.
\]
\end{proof}

\begin{proposition}\label{p:G}
The $Q$-effective implementation of the algorithm $\mathcal{G}$ is realizable.

For the algorithm $\mathcal{G}$
\[
D_\mathcal{G}=3n\text{ and }P_\mathcal{G}\geq\frac{3}{2}(n+1)!
 \]
for $n\geq2$.
\end{proposition}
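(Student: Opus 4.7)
The first assertion—realizability—is already established by Lemma~\ref{l:realG}, so I would just invoke it.

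For the height, I plan to combine formula~(\ref{f:nl}) with the lemma immediately preceding this proposition. Formula~(\ref{f:nl}) gives $T^{w_p^{j_l}} = 3n$ for $l \in \{1,\dots,n-1\}$ and $T^{w_p^{j_n}} = 3n-2$, while the preceding lemma gives $T^{u_p} \leq 3n-1$ for every $p$. Taking the maximum over $W$ then pins $D_\mathcal{G} = 3n$.

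The core of the argument is the lower bound $P_\mathcal{G} \geq \tfrac{3}{2}(n+1)!$. My plan is to bound $\sum_{w\in W} O_r^w$ at the single level $r = 3n-5 = 3(n-1)-2$ using only the contributions from the $w$'s. Every step-$(n-1)$ subexpression $a^{j_1\dots j_{n-1}}_{i,j}$ contains exactly one operation at level $3n-5$: if $i = n-1$ it is the top-level division $a^{j_1\dots j_{n-2}}_{n-1,j}/a^{j_1\dots j_{n-2}}_{n-1,j_{n-1}}$, and if $i \neq n-1$ it is the internal division sitting inside the subtraction formula defining $a^{j_1\dots j_{n-1}}_{i,j}$. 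Because every step-$(n-2)$ subexpression has nesting level at most $3n-6$, nothing coming from deeper layers lands at level $3n-5$; so the operations at that level inside $w_p^{j_l}$ are in bijection with its step-$(n-1)$ operand subexpressions. Reading off the outer form of $w_p^{j_l}$, there are exactly four such operands when $l \neq n$ and exactly two when $l = n$, so each permutation $p$ contributes $4(n-1) + 2 = 4n-2$ operations. Summing,
\[
P_\mathcal{G} \;\geq\; \sum_{w\in W} O_{3n-5}^{w} \;\geq\; (4n-2)\, n!.
\]
The last expression dominates $\tfrac{3}{2}(n+1)!$ exactly when $5n \geq 7$, that is, for every $n \geq 2$.

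The main obstacle I expect is the structural bookkeeping in the counting step: one must treat the two kinds of step-$(n-1)$ subexpression separately (row index equal to the leading row $n-1$ or not), verify that in each case exactly one operation at level $3n-5$ arises, and ensure that nothing from strictly deeper subexpressions creeps into the same level. Once that structural observation is nailed down, the remaining arithmetic is elementary.
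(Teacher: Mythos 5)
Your handling of realizability (via Lemma~\ref{l:realG}) and of the height (combining (\ref{f:nl}), (\ref{f:w}) and the bound $T^{u_p}\leq 3n-1$ from the preceding lemma) coincides with the paper's. For the width you take a genuinely different route. The paper counts operations at the \emph{first} nesting level: a permutation with leading column $j_1=k$ at step~1 generates $n+1$ divisions $a_{1j}/a_{1k}$ plus $k$ comparisons (the $k-1$ equalities in $L_{j_1}$ and $a_{1k}\neq0$), i.e.\ $k+n+1$ level-one operations, and $\sum_{k=1}^{n}(k+n+1)(n-1)!=\tfrac32(n+1)!$ exactly; this is where the constant in the statement comes from. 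You count instead at level $3n-5$ and obtain the stronger bound $(4n-2)\,n!$, with less arithmetic but more structural bookkeeping. That bookkeeping is essentially right, with one caveat worth making explicit: the four step-$(n-1)$ operand occurrences in $w_p^{j_l}$ ($l\neq n$) contain only \emph{two distinct} level-$(3n-5)$ divisions between them, since $a^{j_1\dots j_{n-1}}_{l,n+1}$ and $a^{j_1\dots j_{n-1}}_{n,n+1}$ both contain the division $a^{j_1\dots j_{n-2}}_{n-1,n+1}\big/a^{j_1\dots j_{n-2}}_{n-1,j_{n-1}}$, and the two column-$j_n$ operands likewise share one division; so your $4n-2$ counts operation \emph{occurrences} in the expression tree rather than distinct operations. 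That is the reading consistent with the paper's definition of $O_r^{w}$ and with its own level-one count (which also credits each operation once per permutation containing it), so your bound is valid; but under a strict distinct-operation count your per-permutation total would fall to $2n$, which beats $\tfrac32(n+1)$ only for $n\geq3$, so the caveat is not cosmetic at $n=2$.
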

\begin{proof}
We obtain realizability by Lemma \ref{l:realG}.

From (\ref{f:nl}) and (\ref{f:w}) the nesting level
\[
T^{w_p^j}\leq3n
\]
for every $p\in\{1,\dots ,n!\}$ and $j\in\{1,\dots ,n\}$, and for some $p$ and $j$
\[
T^{w_p^j}=3n.
\]
So, the height of the considered algorithm $\mathcal{G}$ is
\[
D_\mathcal{G}=3n.
\]

Finally, we estimate the width of the algorithm $\mathcal{G}$.
The number of operations of the first nesting level of the set of expressions $W(\bar{N})$ (cf. (\ref{WN})) of the algorithm $\mathcal{G}$ is
\[
\sum_{k=1}^n{((k+n+1)(n-1)!)}=\frac{3}{2}(n+1)n!=\frac{3}{2}(n+1)!.
\]
Therefore, the width of the considered algorithm $\mathcal{G}$ is
\[
P_\mathcal{G}\geq \frac{3}{2}(n+1)!.
\]
\end{proof}

\begin{remark}
If the algorithm $\mathcal{B}$ implements the Gauss--Jordan method, and the leading elements satisfy the conditions
\begin{align*}
\left|a_{1j_1}\right|&=\max_{1\leq j\leq n}\left|a_{1j}\right|
\intertext{and for every $l\in\{2,\dots,n\}$}
\left|a_{l j_l}^{j_1\dots j_{l-1}}\right|&=\max_{\substack{1\leq j\leq n\\ j\notin\{j_1,\dots ,j_{l -1}\}}}\left|a_{l j}^{j_1\dots j_{l -1}}\right|,
\end{align*}
then $D_\mathcal{B}=3n$ also.
The proof is similar to the proof for the algorithm $\mathcal{G}$.
\end{remark}

\subsection{Solving a system of grid equations by\\ the Jacobi method}\label{ss:gJ}

In this subsection we consider some method for solving a system of grid equations.
This method has two sources: the classical iterative Jacobi method for solving a system of linear equations (see also \ref{ss:JM} on page ~\pageref{ss:JM}) and a system of grid equations for numerical solving the Poisson equation, for example,
\cite[Chapter 5, \S~1] {al:SamGul}.

Therefore, this method may be called as the Jacobi method for solving a system of grid equations.
We would also like to note that this method is a model for us, and many similar considerations can be used in other cases.
As we will not consider the practical application of this method, then we will not pay attention to its convergence and other similar topics.

\subsubsection{The $Q$-determinant of the Jacobi method}\label{sss:qdJ}

Suppose we have a five-point system of linear equations
\begin{multline*}
u_{kj}=\frac{f_{kj}+a_{kj}u_{k-1,j}+b_{kj}u_{k,j-1}+c_{kj}u_{k+1,j}+d_{kj}u_{k,j+1}}{e_{kj}}\\ 
(k\in\{1,\dots, K\}, j\in\{1,\dots,J\}),
\end{multline*}
where $u_{kj}$ are the values of the grid function, $a_{kj},b_{kj},c_{kj},d_{kj},e_{kj},f_{kj}$ are constants for every $k\in\{1,\dots, K\}$ and $j\in\{1,\dots,J\}$.
As usual, we set
\[
u_{0j}=u_{k0}=u_{K+1,j}=u_{k,J+1}=0.
\]

Let the algorithm $\mathcal{J}$ implement the Jacobi method for solving this system.

For every $n\in\{1,2,\dots\}$
\[
u_{kj}^n=\frac{f_{kj}+a_{kj}u_{k-1,j}^{n-1}+b_{kj}u_{k,j-1}^{n-1}+c_{kj}u_{k+1,j}^{n-1}+d_{kj}u_{k,j+1}^{n-1}}{e_{kj}},
\]
where $u_{kj}^0$ are the initial approximate values of the grid function and $u_{kj}^n$ are the approximate values of the grid function obtained at the iteration step $n$ for all $k\in\{1,\dots, K\}$ and $j\in\{1,\dots,J\}$.

Let
\[
V_n=\left\{u_{kj}^n\mid k\in\{1,\dots, K\},\,j\in\{1,\dots,J\}\right\}
\]
be the set of values of the grid function at the $n$th iteration step.

We express $V_{n-1}$ in terms of $V_{n-2}$.
Then we can express $V_n$ in terms of $V_{n-2}$.
Moreover, for all $k\in\{1,\dots, K\}$ and $j\in\{1,\dots,J\}$  we can represent $u_{kj}^n$ by the elements from
\[
\left\{u_{sl}^{n-2}\mid (s,l)\in I_2(k,j)\right\},
\]
where for all $k\in\{1,\dots, K\}$ and $j\in\{1,\dots,J\}$
\[
I_2(k,j)=\left\{(s,l)\mid |s-k|+|l-j|\leq2\right\}.
\]
Continuing this process, for every $k\in\{1,\dots, K\}$ and $j\in\{1,\dots,J\}$  we can represent $u_{kj}^n$ by the elements from
\[
\left\{u_{sl}^0\mid (s,l)\in I_n(k,j)\right\},
\]
where for all $k\in\{1,\dots, K\}$ and $j\in\{1,\dots,J\}$
\[
I_n(k,j)=\left\{(s,l)\mid  |s-k|+|l-j|\leq n\right\}.
\]
Let $\epsilon$ be a given accuracy.
Suppose the iterative process is finished if
\[
v^n=\bigwedge_{\substack{k\in\{1,\dots, K\},\\ j\in\{1,\dots,J\}}}(|u_{kj}^n-u_{kj}^{n-1}|<\epsilon)
\]
takes the value \textsf{true}.
We can represent the expression $v^n$ through the set $V_0$.
Further, note that $v^n$ and $u_{kj}^n$ are the $Q$-terms for every $k\in\{1,\dots, K\}$, $j\in\{1,\dots,J\}$, and $n\in\{1,2,\dots,\}$.
Also
\[
N=\{K,J\}\text{ and }B=\left\{u_{sl}^0\mid s\in\{1,\dots,K\},\, l\in\{1,\dots,J\}\right\}.
\]
\begin{qdet}
The $Q$-determinant of the algorithm $\mathcal{J}$ consists of $KJ$ conditional infinite $Q$-terms, and the representation in the form of a $Q$-determinant is
\begin{equation}
u_{kj}=\left\{(v^1,u_{kj}^1),(v^2,u_{kj}^2),\dots ,(v^n,u_{kj}^n),\dots\right\},
\end{equation}
where $k\in\{1,\dots,K\}$, $j\in\{1,\dots,J\}$, and $n\in\{1,2,\dots\}$.
\end{qdet}

\subsubsection{Realizability and the height of the Jacobi method}

\begin{lemma}\label{l:realJ}
The $Q$-effective implementation of the algorithm $\mathcal{J}$ is realizable.

If the integer $n_0$ is such that the value of $v^{n_0}$ is \textsf{true}, then the height of the algorithm $J$ is
\[
D_\mathcal{J}=5n_0+3+\lceil\log{KJ}\rceil.
\]
\end{lemma}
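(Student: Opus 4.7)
The plan is to split the statement into its two claims and handle them separately. For realizability, I would invoke condition~3 of Theorem~\ref{th:real}: we have $I\neq\emptyset$ and $N=\{K,J\}\neq\emptyset$, so it suffices to show that for every $r\geq1$, only finitely many operations of nesting level $r$ appear across all the expressions $u_j^i(\bar N),w_j^i(\bar N)$. This follows from the observation that $u_{kj}^n$ and $v^n$ are built from the initial data $V_0$ by a recursion whose nesting level grows strictly with $n$; hence for each fixed $r$, only finitely many pairs $(k,j,n)$ can contribute operations of that nesting level (and for each such triple, only finitely many operations occur).

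For the height formula, I would first compute $T^{u_{kj}^n}$ by induction on $n$. The base case $n=1$: the four products $a_{kj}u_{k-1,j}^0,\ b_{kj}u_{k,j-1}^0,\ c_{kj}u_{k+1,j}^0,\ d_{kj}u_{k,j+1}^0$ sit at nesting level $1$; the chain sum $f_{kj}+\sum(\cdot)$ has length~$5$, so the doubling scheme contributes $\lceil\log 5\rceil=3$ further levels; the final division by $e_{kj}$ adds one more. This gives $T^{u_{kj}^1}=5$. The inductive step is identical: the recursion layer adds exactly $5$ to the nesting level, so $T^{u_{kj}^n}=5n$ for every $n\geq1$, $k,j$.

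Next I compute $T^{v^n}$. For fixed $(k,j)$, the expression $|u_{kj}^n-u_{kj}^{n-1}|<\epsilon$ lies at nesting level $5n+3$: one level for the subtraction (which dominates the deeper operand $u_{kj}^n$), one for the unary absolute value, one for the comparison. The outer conjunction is a chain of length $KJ$, so by the doubling scheme it contributes a further $\lceil\log KJ\rceil$ levels. Hence $T^{v^n}=5n+3+\lceil\log KJ\rceil$. Since the parallelism-resource height is computed on $\widetilde W(\bar N)$, which for the chosen witness index $n_0$ contains exactly the pairs $(v^{n_0},u_{kj}^{n_0})$ together with earlier $u^{m},v^{m}$ (all of smaller nesting level), the maximum is attained by $v^{n_0}$, yielding $D_\mathcal{J}=5n_0+3+\lceil\log KJ\rceil$.

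The main subtlety I anticipate is bookkeeping: making sure that (i) the $5$-per-iteration count correctly accounts for both the four parallel multiplications and the length-$5$ chain sum collapsed by doubling, and (ii) when passing to $\widetilde W(\bar N)$ one is legitimately allowed to ignore the deeper nesting levels that would arise from the infinite tail of $(u^m,v^m)_{m>n_0}$ — this is justified by the definition of $\widetilde W(\bar N)$, which restricts to the single witness pair per conditional infinite $Q$-term. Apart from these points, the argument is a direct induction combined with Theorem~\ref{th:real}.
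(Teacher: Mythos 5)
Your proposal is correct and follows essentially the same route as the paper: realizability via condition~3 of Theorem~\ref{th:real}, and the height obtained as the nesting level of $v^{n_0}$ within $\widetilde{W}(\bar{N})$, with $T^{u_{kj}^n}=5n$ (four level-1 products, a length-5 chain collapsed by doubling, one division) and the extra $3+\lceil\log KJ\rceil$ levels from the subtraction, absolute value, comparison, and the conjunction chain. You in fact supply more of the nesting-level bookkeeping than the paper, which merely asserts $T^{v^{n_0}}=5n_0+3+\lceil\log KJ\rceil$; your only slip is the passing remark that $\widetilde{W}(\bar{N})$ also contains the earlier $u^m,v^m$, which it does not by definition, but you correct this yourself and it does not affect the maximum.
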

\begin{proof}
The $Q$-effective implementation of the algorithm $\mathcal{J}$ is realizable by Theorem \ref{th:real} (Statement (3)) because we have a finite set of operations for every nesting level.

If for some $n_0$ the value of $v^{n_0}$ is \textsf{true}, then performing the $Q$-effective implementation should be completed, and the value of $u_{kj}^{n_0}$ should be taken as a solution $u_{kj}$.

Since $N\neq\emptyset$, the height of the algorithm $\mathcal{J}$ should be computed by the formula (\ref{DN}) for the case $I\neq\emptyset$.
In this case,
\[
\widetilde{W}(\bar{N})=\left\{v^{n_0}; V_{n_0}=\left\{u_{kj}^{n_0}\mid k\in\{1,\dots, K\},\,j\in\{1,\dots,J\}\right\}\right\}.
\]
Taking into account the formula for calculating $v^{n_0}$, we have
\[
T^{v^{n_0}}=5n_0+3+\lceil\log{KJ}\rceil.
\]
Thus, the maximum nesting level of the expressions of the set $\widetilde{W}(\bar{N})$ is equal to
\[
5n_0+3+\lceil\log{KJ}\rceil.
\]
Therefore, the height of the algorithm $\mathcal{J}$ is
\[
D_\mathcal{J}=5n_0+3+\lceil\log{KJ}\rceil.
\]
\end{proof}

\begin{remark}
It seems to us that the determination of the integer $n_0$ from Lemma \ref{l:realJ} is a very difficult problem in the general case.
It is possible that this is an unsolvable problem because it presupposes too much input data: the sets $\left\{a_{kj}\right\}_{k,j}$, $\left\{b_{kj}\right\}_{k,j}$, $\left\{c_{kj}\right\}_{k,j}$, $\left\{d_{kj}\right\}_{k,j}$, $\left\{e_{kj}\right\}_{k,j}$, and $\left\{f_{kj}\right\}_{k,j}$.
However, it is worth noting that the integer $n_0$ from Lemma \ref{l:realJ} can be found in \cite[Chapter 5, \S~1, p.~382]{al:SamGul} for the numerical solution of the Poisson equation which is an important case.
\end{remark}

\subsubsection{Performing the $Q$-effective implementation of the algorithm $\mathcal{J}$}

We describe the process of performing the $Q$-effective implementation of the algorithm $\mathcal{J}$, which is divided into stages.
\begin{description}
\item[Stage $1$.]
As in \ref{sss:qdJ}, performing the $Q$-effective implementation begins with the computation of a set of $Q$-terms
\[
V_1=\left\{u_{kj}^1\mid k\in\{1,\dots, K\},\,j\in\{1,\dots,J\}\right\}
\]
using $V_0=\left\{u_{kj}^0\mid k\in\{1,\dots, K\},\,j\in\{1,\dots,J\}\right\}$.
For that, the number of operations is $4KJ$ at the first nesting level, at the second nesting level is $2KJ$, at each of the nesting levels 3, 4, 5 is $KJ$.
\item[Stage $2$.]
Now we can compute $Q$-terms
\[
v^1\text{ and }V_2=\left\{u_{kj}^2\mid k\in\{1,\dots, K\},\,j\in\{1,\dots,J\}\right\}.
\]
Their computations must be performed simultaneously.

The number of operations for the computation of $v^1$ is $KJ$ at each of the first three levels of nesting.
After that, we have a chain of conjunctions of length $KJ$, computed by the doubling scheme.
To do this, we use two levels of nesting.

The computation of $V_2$ is similar to the computation of $V_1$.
\item[Stage $p\geq3$] can be described as follows.
Since we computed $V_{p-1}$, then we can continue the computation of $Q$-terms $v^1$,\dots,$v^{p-2}$.

Simultaneously, we can begin to compute $Q$-terms
\[
v^{p-1}\text{ and }V_p=\left\{u_{kj}^p\mid k\in\{1,\dots, K\},\,j\in\{1,\dots,J\}\right\}.
\]

Once again, the number of operations for the computation of $v^{p-1}$ is $KJ$ at each of the first three levels of nesting.
After that, we have a chain of conjunctions of length $KJ$, computed by the doubling scheme.
To do this, we use two levels of nesting.

The computation of $V_p$ is similar to the computation of $V_1$.
\end{description}

\begin{lemma}\label{l:st}
Every stage of the process for performing the $Q$-effective implementation of the algorithm $\mathcal{J}$ has five levels of nesting.
\end{lemma}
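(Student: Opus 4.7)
The plan is to verify the lemma by direct inspection of the operations at each of the three stage types listed in the construction. I would split into three cases: Stage 1, Stage 2, and Stage $p \geq 3$, and simply count nesting levels in each.

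For Stage 1, I would observe that each $u^1_{kj}$ is obtained from $V_0$ by: one level of multiplications ($a_{kj} u^0_{k-1,j}$, $b_{kj} u^0_{k,j-1}$, $c_{kj} u^0_{k+1,j}$, $d_{kj} u^0_{k,j+1}$); three levels of additions for the sum of the five summands via the doubling scheme (pairing the four products at level 2, combining the two pair-sums at level 3, adding $f_{kj}$ at level 4); and a fifth level for the division by $e_{kj}$. Since all $KJ$ values proceed in parallel with the same nesting structure, Stage 1 contributes exactly five levels. For Stage 2, the computation of $V_2$ from $V_1$ is structurally identical to that of $V_1$ from $V_0$ and so also uses five levels. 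The simultaneous computation of $v^1$ adds three levels for the subtraction $u^1_{kj} - u^0_{kj}$, the absolute value, and the comparison with $\epsilon$, followed by two more levels for two rounds of the doubling scheme applied to the $KJ$-chain of conjunctions. Hence Stage 2 also contributes five levels.

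For Stage $p \geq 3$, the new computations ($V_p$ from $V_{p-1}$, and the first five levels of $v^{p-1}$) mirror Stage 2 exactly and contribute five levels. The continuation of the conjunction chains for $v^1, \ldots, v^{p-2}$ from earlier stages advances by two further doubling-levels per stage, which fits inside the same five-level envelope. The main obstacle is confirming that the continuations of these earlier conjunction chains do not force the stage beyond five levels; I would handle this by noting that each such chain depends only on partial results already available from prior stages, so at each new stage two doubling-rounds can be performed concurrently with the subtraction/absolute-value/comparison levels of $v^{p-1}$ and the summation levels of $V_p$, without exceeding the five-level bound.
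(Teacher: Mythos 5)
Your proof is correct and follows the same route as the paper's, which is far terser: every stage computes $V_p$ from $V_{p-1}$, and that computation (one level of multiplications, three levels of additions via the doubling scheme, one level of division) occupies exactly five nesting levels, while nothing else in a stage requires more. One caveat: your statement that the continuations of the earlier conjunction chains $v^1,\dots,v^{p-2}$ advance by only \emph{two} doubling rounds per stage misdescribes the process — in the paper's accounting (Lemma \ref{l:nol+}) each such chain is halved once at \emph{each} of the five nesting levels of a stage, which is exactly why the operation counts drop by a factor of $32=2^5$ from one stage to the next. This slip does not affect the present lemma (either way the continuations fit inside the five-level envelope set by $V_p$), but it would corrupt the subsequent width computations if carried forward.
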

\begin{proof}
Based on the above, every stage $p$ has five levels of nesting, because we compute
\begin{align*}
V_p&=\left\{u_{kj}^p\mid k\in\{1,\dots, K\},\,j\in\{1,\dots,J\}\right\}
\intertext{on the base of}
V_{p-1}&=\left\{u_{kj}^{p-1}\mid k\in\{1,\dots, K\},\,j\in\{1,\dots,J\}\right\}.
\end{align*}
\end{proof}

 \subsubsection{The width of the Jacobi method}

\begin{remark}\label{r:p12}
It is very important to point out the following.
If we have only one stage, then the consideration doesn't make sense, since stage 2 is the beginning of the computation of $v^1$.

So we only consider $p\geq2$ stages.
\end{remark}

\begin{lemma}\label{l:nol+}
Suppose that we execute stage $p$ of the process for performing the $Q$-effective implementation of the algorithm $\mathcal{J}$.
\begin{enumerate}
\item
If $p\geq1$, then under the computation of 
\[
V_p=\left\{u_{kj}^p\mid k\in\{1,\dots, K\},\,j\in\{1,\dots,J\}\right\} 
\]
we must execute the following number of operations:
\begin{align*}
4KJ&\text{ for the first nesting level;}\\
2KJ&\text{ for the second nesting level;}\\
KJ&\text{ for the nesting levels $3$, $4$, and $5$.}
\end{align*}
\item
If $p\geq2$ and $KJ\geq4$, then under the computation of $v^{p-1}$ we must execute the following number of operations:
\begin{align*}
KJ&\text{ for the nesting levels $1$, $2$, and $3$;}\\
\lfloor KJ/2\rfloor&\text{ for the fourth nesting level;}\\
\lfloor KJ/4\rfloor&\text{ for the fifth nesting level.}
\end{align*}
\item
Assume that $p\geq3$ and $KJ\geq2^{5p-8}$.
Let $l\in\{1,2,3,4,5\}$ be a nesting level.
Then for every $t\geq2$ under the computation of $v^{p-t}$ we must execute the following number of operations:
\[
\left\lfloor\frac{KJ}{2^{2+l}\times32^{t-2}}\right\rfloor.
\]
\end{enumerate}
In particular, if $p\geq3$ and $KJ\geq2^{5p-8}$, then
\[
\left\lfloor\frac{KJ}{2^{2+l}\times32^{p-3}}\right\rfloor
\]
operations on the nesting level $l\in\{1,2,3,4,5\}$ to compute $v^1$.
\end{lemma}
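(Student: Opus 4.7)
The strategy is a direct, stage-by-stage inventory of the operations prescribed by the doubling scheme, splitting the three statements and reducing the third to a routine exponent computation.

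For statement (1), I expand the defining formula
\[
u_{kj}^p=\frac{f_{kj}+a_{kj}u_{k-1,j}^{p-1}+b_{kj}u_{k,j-1}^{p-1}+c_{kj}u_{k+1,j}^{p-1}+d_{kj}u_{k,j+1}^{p-1}}{e_{kj}}
\]
for a single cell $(k,j)$: four multiplications at level $1$, then the doubling scheme over the five summands (two additions at level $2$, one at level $3$, one addition with $f_{kj}$ at level $4$) and the division at level $5$. Multiplying each count by $KJ$ gives the stated totals, since the $KJ$ cells are computed simultaneously and independently.

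For statement (2), I unfold
\[
v^{p-1}=\bigwedge_{k\in\{1,\dots,K\},\,j\in\{1,\dots,J\}}\bigl(\bigl|u_{kj}^{p-1}-u_{kj}^{p-2}\bigr|<\epsilon\bigr)
\]
and note that, per cell, level $1$ is a subtraction, level $2$ is an absolute value, level $3$ is the comparison with $\epsilon$; this yields $KJ$ operations each at levels $1$, $2$, $3$. The remaining $KJ$ logical values are combined into a chain of conjunctions of length $KJ$ by the doubling scheme, producing $\lfloor KJ/2\rfloor$ operations at level $4$ and $\lfloor KJ/4\rfloor$ at level $5$; the hypothesis $KJ\geq4$ is exactly what makes both of these nonzero.

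For statement (3) the bookkeeping is the real point. By the description of the stages, the computation of $v^{p-t}$ was begun at stage $p-t+1$, so stage $p$ is its $t$-th sub-stage; by Lemma~\ref{l:st} each sub-stage advances the expression by five nesting levels, hence at stage $p$ we are executing the absolute nesting levels $5(t-1)+1,\dots,5t$ of $v^{p-t}$. Since $t\geq2$, these levels lie strictly above level $3$ (the level at which the $KJ$ comparisons were performed at sub-stage $1$), so they lie entirely inside the doubling scheme of the length-$KJ$ chain of conjunctions. In that chain, absolute level $3+k$ carries $\lfloor KJ/2^{k}\rfloor$ operations; substituting $3+k=5(t-1)+l$, i.e.\ $k=5t-8+l$, gives
\[
\left\lfloor\frac{KJ}{2^{5t-8+l}}\right\rfloor=\left\lfloor\frac{KJ}{2^{2+l}\cdot 2^{5(t-2)}}\right\rfloor=\left\lfloor\frac{KJ}{2^{2+l}\times 32^{t-2}}\right\rfloor,
\]
which is the claimed count. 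The hypothesis $KJ\geq 2^{5p-8}$ is precisely the condition that the worst case $t=p-1$, $l=5$ (the deepest chain still being reduced, which is the one for $v^1$) has nonzero operations, so every stated count is meaningful. The in particular follows by setting $t=p-1$.

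The only real obstacle is aligning the sub-stage number $t$ with the absolute position inside the conjunction chain and verifying the identity $2^{5t-8+l}=2^{2+l}\times 32^{t-2}$; once that index juggling is set up correctly, statements (1)--(3) are just reading off the doubling scheme.
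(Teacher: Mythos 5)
Your proposal is correct and follows essentially the same route as the paper: a per-cell operation count for $V_p$, the subtraction/absolute-value/comparison levels plus the conjunction chain for $v^{p-1}$, and the doubling scheme for the deeper sub-stages. The only cosmetic difference is in statement (3), where the paper runs an explicit induction on the sub-stage offset $q$ (multiplying the denominator by $32$ at each step), while you obtain the same count in closed form by locating the absolute nesting level $5(t-1)+l$ inside the length-$KJ$ conjunction chain; both arguments rest on the identical bookkeeping.
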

\begin{proof}
The first and second stages ($p\in\{1,2\}$) differ from the others.
\begin{enumerate}
\item
In stage 1, we compute $V_1=\left\{u_{kj}^1\mid k\in\{1,\dots,K\},\,j\in\{1,\dots,J\}\right\}$.
For that, we must execute the following number of operations:
\begin{align*}
4KJ&\text{ for the first nesting level;}\\
2KJ&\text{ for the second nesting level;}\\
KJ&\text{ for the nesting levels $3$, $4$, and $5$.}
\end{align*}
\item
In stage 2, we compute $V_2=\left\{u_{kj}^2\mid k\in\{1,\dots,K\},\,j\in\{1,\dots,J\}\right\}$ and an expression $v^1$.
To compute $V_2$, we must have as many operations as in stage 1 for $V_1$.
To compute the expression $v^1$, we must execute the following numbers of operations:
\begin{align*}
KJ&\text{ for the nesting levels $1$, $2$, and $3$;}\\
\lfloor KJ/2\rfloor&\text{ for the fourth nesting level;}\\
\lfloor KJ/4\rfloor&\text{ for the fifth nesting level.}
\end{align*}
\item
Now we consider the case $p\geq3$.
First, note:
\begin{enumerate}
\item
it is clear that to compute
\[ 
V_p=\left\{u_{kj}^p\mid k\in\{1,\dots,K\},\,j\in\{1,\dots,J\}\right\} 
\]
we need as many operations as in stage 1 for $V_1$;
\item
also, the computation of the expression $v^{p-1}$ is needed the same number of operations as in stage 2 for $v_1$, because we are just beginning the computation of $v^{p-1}$.
\end{enumerate}
Therefore, we must consider the computation of $v^{p-t}$ for $t\geq2$.
We fix $t\geq2$.
We started computing $v^{p-t}$ in stage $p-t+1$.
By Statement 2, we must use the following number of operations:
\begin{align*}
KJ&\text{ for the nesting levels $1$, $2$, and $3$;}\\
\lfloor KJ/2\rfloor&\text{ for the fourth nesting level;}\\
\lfloor KJ/4\rfloor&\text{ for the fifth nesting levels}
\end{align*}
under the computation of $v^{p-t}$  in stage $p-t+1$.

We use the doubling scheme for the computation of $v^{p-t}$.
Therefore, in stage $p-t+2$ we must execute the following number of operations:
\begin{align*}
\lfloor KJ/8\rfloor&\text{ for the first nesting;}\\
\lfloor KJ/16\rfloor&\text{ for the second nesting;}\\
\lfloor KJ/32\rfloor&\text{ for the third nesting;}\\
\lfloor KJ/64\rfloor&\text{ for the fourth nesting level;}\\
\lfloor KJ/128\rfloor&\text{ for the fifth nesting levels.}
\end{align*}
Thus, for the computation of $v^{p-t}$ we need
\[
\left\lfloor\frac{KJ}{2^{2+l}}\right\rfloor
\]
operations for the nesting level $l\in\{1,2,3,4,5\}$ in stage $p-t+2$.

For every $q\in\{0,1,\dots,t-2\}$ we should determine the number of operations for the nesting level $l\in\{1,2,3,4,5\}$ in stage $p-t+2+q$.
We have already considered the case $q=0$.
By the inductive assumption, under the computation of $v^{p-t}$ we have used the following number of operations
\[
\left\lfloor\frac{KJ}{2^{2+l}\times32^q}\right\rfloor
\]
for the nesting level $l\in\{1,2,3,4,5\}$ in stage $p-t+2+q$.

Since we use the doubling scheme for the computation of $v^{p-t}$ again, then in stage $p-t+2+(q+1)$ we must execute the following number of operations:
\begin{align*}
\left\lfloor KJ/(8\times32^q\times32)\right\rfloor&\text{ for the first nesting level;}\\
\left\lfloor KJ/(16\times32^q\times32)\right\rfloor&\text{ for the second nesting level;}\\
\left\lfloor KJ/(32\times32^q\times32)\right\rfloor&\text{ for the third nesting level;}\\
\left\lfloor KJ/(64\times32^q\times32)\right\rfloor&\text{ for the fourth nesting level;}\\
\left\lfloor KJ/(128\times32^q\times32)\right\rfloor&\text{ for the fifth nesting level.}
\end{align*}
Hence, for the computation of $v^{p-t}$ we need
\[
\left\lfloor\frac{KJ}{2^{2+l}\times32^{q+1}}\right\rfloor
\]
operations for the nesting level $l\in\{1,2,3,4,5\}$ in stage $p-t+2+(q+1)$.

Finally, we make an important remark.
Our considerations make no sense if the number operations is less than $1$.
Indeed, we have the least number of operations
\[
\left\lfloor\frac{KJ}{2^{2+5}\times32^{p-3}}\right\rfloor=\left\lfloor\frac{KJ}{2^{5p-8}}\right\rfloor
\]
for the nesting level $5$ under the computation of $v^1$.
Since we have $ KJ\geq2^{5p-8}$ by the assumption of Statement 3, our considerations are correct.
\end{enumerate}
\end{proof}

Now we want to consider the cases when the computation of $v^1$ stops in stage $p\in\{2,3\}$.
It should be noted that stopping in stage 1 is impossible by Remark \ref{r:p12}.
\begin{lemma}\label{l:p23}
Suppose that we have executed $p$ stages of the process to perform the $Q$-effective implementation of the algorithm $\mathcal{J}$.
\begin{enumerate}
\item
If we stop computing of $v^1$ in stage $p=2$, then
\begin{gather*}
KJ\in\{1,2,\dots,7\}
\intertext{ and the width of the algorithm $\mathcal{J}$ is equal to}
P_{\mathcal{J}}(K,J)=5\cdot KJ.
\end{gather*}
\item
If we stop computing of $v^1$ in stage $p=3$, then
\begin{gather*}
KJ\in\{8,9,\dots,255\}
\intertext{ and the width of the algorithm $\mathcal{J}$ is equal to}
P_{\mathcal{J}}(K,J)=5\cdot KJ+\left\lfloor\frac{KJ}{8}\right\rfloor=5\cdot KJ+t,
\end{gather*}
where $KJ\in\{8t,8t+1,\dots,8t+7\}$ for $t\in\{1,2,\dots,31\}$.
\end{enumerate}
In particular, the width $P_{\mathcal{J}}(K,J)$ is a strictly increasing function of $KJ$ for $KJ\in\{1,\dots,255\}$.
\end{lemma}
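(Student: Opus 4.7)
The plan is to combine Lemma \ref{l:nol+} with the definition \eqref{PN} of $P_\mathcal{A}(\bar{N})$. The quantity $P_\mathcal{J}(K,J)$ is the maximum, over the nesting level $l\in\{1,\dots,5\}$ and over the stages actually performed, of the total number of concurrently ready operations at that level, so the proof reduces to (i) delimiting the values of $KJ$ for which $v^1$ finishes exactly in stage $p=2$ or $p=3$, and (ii) identifying the maximizing level in each stage.

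For step (i), I would apply item 3 of Lemma \ref{l:nol+} with $t=p-1$: the computation of $v^1$ in stage $p\ge 3$ requires $\lfloor KJ/(2^{2+l}\cdot 32^{p-3})\rfloor$ operations at level $l$, hence the chain finishes by the end of stage $p$ iff the first level of stage $p+1$ already needs no operations, i.e.\ iff $\lfloor KJ/2^{5p-7}\rfloor=0$. This immediately gives $KJ<8$ for stopping in stage $2$ and, combined with the additional requirement $\lfloor KJ/8\rfloor\ge 1$ that stage $3$ actually produces operations, yields $KJ\in\{8,\dots,255\}$ for stopping in stage $3$.

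For step (ii), I would tabulate stage by stage the per-level operation totals, summing the at most three concurrent contributions described in Lemma \ref{l:nol+}: the fresh $V_p$, the freshly started $v^{p-1}$, and (when $p\ge 3$) the continued $v^1$. Each of the three is a weakly decreasing sequence from level $1$ to level $5$ --- namely $(4KJ,2KJ,KJ,KJ,KJ)$ for $V_p$, $(KJ,KJ,KJ,\lfloor KJ/2\rfloor,\lfloor KJ/4\rfloor)$ for a fresh $v$, and $(\lfloor KJ/8\rfloor,\lfloor KJ/16\rfloor,\ldots,\lfloor KJ/128\rfloor)$ for the continued $v^1$ --- so the sum is weakly decreasing in $l$ and the per-stage maximum is achieved at $l=1$. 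Comparing across stages: in Case 1 we take $\max(4KJ,5KJ)=5KJ$; in Case 2 stage $3$ level $1$ supplies $4KJ+KJ+\lfloor KJ/8\rfloor=5KJ+\lfloor KJ/8\rfloor$, which strictly exceeds the stage $2$ maximum $5KJ$ because $\lfloor KJ/8\rfloor\ge 1$ throughout the range.

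The strict monotonicity statement then follows: on $\{1,\dots,7\}$ the function $5KJ$ is strictly increasing; on $\{8,\dots,255\}$ the function $5KJ+\lfloor KJ/8\rfloor$ is strictly increasing (the linear term grows by $5$ per unit step while $\lfloor KJ/8\rfloor$ is non-decreasing); and across the junction $P_\mathcal{J}$ jumps from $35$ at $KJ=7$ to $41$ at $KJ=8$. The main delicate point --- and essentially the only real obstacle --- will be the corner values $KJ\in\{1,2,3\}$, where several floors in the per-stage tables vanish and one must check directly that level $1$ still attains the announced maximum $5KJ$; this is a finite verification and causes no difficulty since the $5KJ$ contribution from $V_p$ together with the $KJ$ level-$1$ operations of the fresh $v^{p-1}$ already dominate.
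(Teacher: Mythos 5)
Your proposal is correct and follows essentially the same route as the paper: both rest on the per-stage, per-level operation counts of Lemma \ref{l:nol+}, derive the ranges $KJ\in\{1,\dots,7\}$ and $KJ\in\{8,\dots,255\}$ from the stopping criterion for the doubling scheme on $v^1$ (your condition $\left\lfloor KJ/2^{5p-7}\right\rfloor=0$ is equivalent to the paper's ``some level count equals $1$''), and read off the width as the level-$1$ total of the final stage. Your explicit justification that the summed counts are weakly decreasing in the nesting level, and your verification of strict monotonicity across the junction $KJ=7\to 8$, are slightly more detailed than the paper's ``it is clear'' but add nothing structurally different.
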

\begin{proof}
The process for performing the $Q$-effective implementation of the algorithm $\mathcal{J}$ is described in the proof of Lemma \ref{l:nol+}.
\begin{enumerate}
\item
We execute
\[
KJ,\left\lfloor\frac{KJ}{2}\right\rfloor,\left\lfloor\frac{KJ}{4}\right\rfloor
\]
operations according to the nesting level.

Let $KJ\neq1$.
Then the stopping criterion is that one of the numbers $\lfloor KJ/2\rfloor$, or $\lfloor KJ/4\rfloor$ is equal to $1$, i.e.,
\begin{align*}
\left\lfloor\frac{KJ}{2}\right\rfloor=1&\iff2\leq KJ\leq3;\\
\left\lfloor\frac{KJ}{4}\right\rfloor=1&\iff4\leq KJ\leq7.
\end{align*}

It is clear that the width of the algorithm $\mathcal{J}$ is equal to
\[
P_{\mathcal{J}}(K,J)=5\cdot KJ.
\]
\item
There are
\[
\left\lfloor\frac{KJ}{8}\right\rfloor,\left\lfloor\frac{KJ}{16}\right\rfloor,\left\lfloor\frac{KJ}{32}\right\rfloor,\left\lfloor\frac{KJ}{64}\right\rfloor,\left\lfloor\frac{KJ}{128}\right\rfloor
\]
operations according to the nesting levels.

The stopping criterion is that one of these numbers is equal to $1$, i.e.,
\begin{align*}
\left\lfloor\frac{KJ}{8}\right\rfloor=1&\iff8\leq KJ\leq15;&\left\lfloor\frac{KJ}{16}\right\rfloor=1&\iff16\leq KJ\leq31;\\
\left\lfloor\frac{KJ}{32}\right\rfloor=1&\iff32\leq KJ\leq63;&\left\lfloor\frac{KJ}{64}\right\rfloor=1&\iff64\leq KJ\leq127;\\
\left\lfloor\frac{KJ}{128}\right\rfloor=1&\iff128\leq KJ\leq255.
\end{align*}
Hence, for every $KJ\in\{8,9,\dots,255\}$
\[
\left\lfloor\frac{KJ}{8}\right\rfloor=t\iff KJ\in\{8t,8t+1,\dots,8t+7\}.
\]

So, the width of the algorithm $\mathcal{J}$ is equal to
\[
P_{\mathcal{J}}(K,J)=5\cdot KJ+\left\lfloor\frac{KJ}{8}\right\rfloor=5\cdot KJ+t
\]
if $KJ\in\{8t,8t+1,\dots,8t+7\}$ for $t\in\{1,2,\dots,31\}$.
\end{enumerate}
\end{proof}

\begin{table}[ht]\label{t:p23}
\caption{The width of the algorithm $\mathcal{J}$ for $KJ\in\{1,2,\dots,255\}$}
\fbox{\parbox{300pt}{
\begin{description}
\item[For $KJ\in\{1,2,\dots,7\}$]
we have
\[
P_{\mathcal{J}}(K,J)=5KJ.
\]
\item[For $KJ\in\{8,9,\dots,255\}$]
we have
\[
P_{\mathcal{J}}(K,J)=5\cdot KJ+t
\]
if $KJ\in\{8t,8t+1,\dots,8t+7\}$ for $t\in\{1,2,\dots,31\}$.
\end{description}
}
}
\end{table}

\begin{remark}
Consequently, we fully studied the width of the algorithm $\mathcal{J}$ for $KJ\in\{1,2,\dots,255\}$.
Thus, we should consider the case when $KJ\geq256$.
In addition, by Lemma \ref{l:p23}, we can suppose that we execute $p\geq4$ stages of the process to perform the $Q$-effective implementation of the algorithm $\mathcal{J}$.
\end{remark}

\begin{lemma}\label{l:p4}
Suppose we have executed $p$ stages of the process to perform the $Q$-effective implementation of the algorithm $\mathcal{J}$.

If the stop for computing $v^1$ occurs in stage $p\geq 4$, then
\begin{gather*}
KJ\in\{8\times32^{p-3},8\times32^{p-3}+1,\dots,256\times32^{p-3}-1\}
\intertext{ and the width of the algorithm $\mathcal{J}$ is equal to}
P_{\mathcal{J}}(K,J)=5\cdot KJ +\left\lfloor\frac{KJ}{8}\right\rfloor+\left\lfloor\frac{KJ}{8\times32}\right\rfloor+\dots+\left\lfloor\frac{KJ}{8\times32^{p-3}}\right\rfloor.
\end{gather*}
\end{lemma}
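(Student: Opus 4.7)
The plan is to replicate the strategy of Lemma \ref{l:p23} but with more $Q$-terms simultaneously in flight. First, I would apply Lemma \ref{l:nol+} to catalog the operations performed at stage $p$, sorted by which $Q$-term they contribute to and by nesting level. In stage $p$: the set $V_p$ is computed from scratch and contributes $4KJ,2KJ,KJ,KJ,KJ$ operations at nesting levels $1,\dots,5$ (statement 1); the conjunction $v^{p-1}$ is just beginning and contributes $KJ,KJ,KJ,\lfloor KJ/2\rfloor,\lfloor KJ/4\rfloor$ (statement 2); and for each $t\in\{2,\dots,p-1\}$, the already-pending conjunction $v^{p-t}$ continues via the doubling scheme and contributes $\lfloor KJ/(2^{2+l}\times 32^{t-2})\rfloor$ operations at level $l$ (statement 3).

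Second, I would pin down the range of $KJ$ for which $v^1$ stops precisely in stage $p$. Taking $t=p-1$ in statement 3, the chain being reduced for $v^1$ at level $5$ of stage $p-1$ has length $\lfloor KJ/(4\times 32^{p-3})\rfloor$; for stage $p$ to still be entered this length must be at least $2$, equivalently $KJ\geq 8\times 32^{p-3}$. On the other side, $v^1$ finishes within stage $p$ iff after the five additional doubling levels of stage $p$ the chain reduces to length $1$, i.e., $\lfloor KJ/(128\times 32^{p-3})\rfloor\leq 1$, equivalently $KJ\leq 256\times 32^{p-3}-1$. Combining yields the asserted interval.

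Third, I would compute the width by summing the contributions above at each of the five nesting levels of stage $p$ and maximizing over $l$. At level $1$ the total is
\[
4KJ+KJ+\sum_{t=2}^{p-1}\left\lfloor\frac{KJ}{8\times 32^{t-2}}\right\rfloor=5\cdot KJ+\left\lfloor\frac{KJ}{8}\right\rfloor+\left\lfloor\frac{KJ}{8\times 32}\right\rfloor+\dots+\left\lfloor\frac{KJ}{8\times 32^{p-3}}\right\rfloor,
\]
exactly the claimed formula. The remaining and main task is to verify that level $1$ actually attains the maximum over $l\in\{1,\dots,5\}$: the $V_p$ coefficient falls from $4$ at level $1$ to $1$ at levels $3$--$5$, the $v^{p-1}$ contribution drops from $KJ$ to $\lfloor KJ/4\rfloor$, and each $\lfloor KJ/(2^{2+l}\times 32^{t-2})\rfloor$ is monotone non-increasing in $l$. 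Hence the level-$1$ sum dominates every other level term-by-term; the strict dominance survives the floors thanks to the leading gap of at least $2KJ$ guaranteed by $KJ\geq 8\times 32^{p-3}\geq 256$, which is the only genuinely delicate point in the argument.
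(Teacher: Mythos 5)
Your proposal is correct and follows essentially the same route as the paper: it uses Lemma \ref{l:nol+} to tabulate the per-level operation counts in stage $p$, derives the range of $KJ$ from the stopping criterion for $v^1$, and sums the first-nesting-level contributions to obtain the width formula. You additionally spell out the verification that the first nesting level attains the maximum over $l\in\{1,2,3,4,5\}$, a point the paper's proof leaves implicit.
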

\begin{proof}
The process for performing the $Q$-effective implementation of the algorithm $\mathcal{J}$ is described in the proof of Lemma \ref{l:nol+}.

If we stop computing of $v^1$ in stage $p$, then for the computation of the expression $v^1$ we need to execute the following numbers of operations in stage $p$:
\begin{align*}
\lfloor KJ/(8\times32^{p-3})\rfloor&\text{ for the first nesting level;}\\
\lfloor KJ/(16\times32^{p-3})\rfloor&\text{ for the second nesting level;}\\
\lfloor KJ/(32\times32^{p-3})\rfloor&\text{ for the third nesting level;}\\
\lfloor KJ/(64\times32^{p-3})\rfloor&\text{ for the fourth nesting level;}\\
\lfloor KJ/(128\times32^{p-3})\rfloor&\text{ for the fifth nesting level.}
\end{align*}
The stopping criterion is that one of these numbers is equal to $1$, i.e.,
\begin{align*}
\left\lfloor\frac{KJ}{8\times32^{p-3}}\right\rfloor=1&\iff8\times32^{p-3}\leq KJ\leq16\times32^{p-3}-1;\\
\left\lfloor\frac{KJ}{16\times32^{p-3}}\right\rfloor=1&\iff16\times32^{p-3}\leq KJ\leq32\times32^{p-3}-1;\\
\left\lfloor\frac{KJ}{32\times32^{p-3}}\right\rfloor=1&\iff32\times32^{p-3}\leq KJ\leq64\times32^{p-3}-1;\\
\left\lfloor\frac{KJ}{64\times32^{p-3}}\right\rfloor=1&\iff64\times32^{p-3}\leq KJ\leq128\times32^{p-3}-1;\\
\left\lfloor\frac{KJ}{128\times32^{p-3}}\right\rfloor=1&\iff128\times32^{p-3}\leq KJ\leq256\times32^{p-3}-1.
\end{align*}
Hence, for every $KJ\in\{8\times32^{p-3},8\times32^{p-3}+1,\dots,256\times32^{p-3}-1\}$ we obtain
\begin{multline*}
\left\lfloor\frac{KJ}{8\times32^{p-3}}\right\rfloor=t\iff\\
\iff KJ\in\{(8\times32^{p-3})t,(8\times32^{p-3})t+1,\dots,(8\times32^{p-3})(t+1)-1\}.
\end{multline*}

Under the process for performing the $Q$-effective implementation of the algorithm $\mathcal{J}$, the computation of $v^1$ stops first.
Further stages of the computations have the same number of operations.
So, we can only consider the computation of $v^1$.

Consequently, the width of the algorithm $\mathcal{J}$ is equal to
\[
P_{\mathcal{J}}(K,J)=5\cdot KJ+\left\lfloor\frac{KJ}{8}\right\rfloor+\left\lfloor\frac{KJ}{8\times32}\right\rfloor+\dots+\left\lfloor\frac{KJ}{8\times32^{p-3}}\right\rfloor.
\]

This completes  the proof of Lemma \ref{l:p4}.
\end{proof}

\begin{proposition}\label{p:wJg}
Suppose that $KJ\geq256$.
More exactly,
\[
KJ\in\{8\times32^{p-3},8\times32^{p-3}+1,\dots,256\times32^{p-3}-1\}
\]
for a suitable $p\in\{4,5,\dots\}$.
Furthermore,
\[
KJ=a+8b_0+(8\times32)b_1+(8\times32^2)b_2+\dots+(8\times32^{p-4})b_{p-4}+(8\times32^{p-3})b_{p-3},
\]
where
\begin{align*}
a&\in\{0,1,\dots,7\},\\
b_i&\in\{0,1,\dots,31\}\text{ for every }i\in\{0,1,\dots,p-4\},\\
b_{p-3}&\in\{1,2,\dots,31\}.
\end{align*}
Then the width of the algorithm $\mathcal{J}$ is equal to
\begin{equation}
\label{wJg}
P_{\mathcal{J}}(K,J)=5\cdot KJ+c_0+c_1+\dots+c_{p-3},
\end{equation}
where for every $i\in\{0,1,\dots,p-3\}$
\[
c_i=b_i+32b_{i+1}+\dots+32^{p-3-i}b_{p-3}.
\]

Also, we have the following recurrence formula:
\begin{align*}
c_{p-3}&=b_{p-3}
\intertext{and for every $i\in\{p-3,p-4,\dots,1\}$}
c_{i-1}&=b_{i-1}+32c_i.
\end{align*}
\end{proposition}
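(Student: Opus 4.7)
The plan is to start from Lemma \ref{l:p4}, which already gives
\[
P_{\mathcal{J}}(K,J) = 5\cdot KJ + \left\lfloor\frac{KJ}{8}\right\rfloor + \left\lfloor\frac{KJ}{8\times 32}\right\rfloor + \dots + \left\lfloor\frac{KJ}{8\times 32^{p-3}}\right\rfloor,
\]
so the whole task reduces to identifying each floor term $\lfloor KJ/(8\times 32^i)\rfloor$ with $c_i$. This is purely an arithmetic fact about the mixed-radix (base-$32$, with a base-$8$ least-significant "digit" $a$) expansion of $KJ$ given in the statement.

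First I would substitute the assumed expansion
\[
KJ = a + 8b_0 + (8\times 32)b_1 + \dots + (8\times 32^{p-3})b_{p-3}
\]
into $KJ/(8\times 32^i)$ for each fixed $i\in\{0,1,\dots,p-3\}$. Splitting the sum at index $i$ gives
\[
\frac{KJ}{8\times 32^i} = \underbrace{\sum_{j=i}^{p-3} 32^{j-i} b_j}_{\text{integer part}} + \underbrace{\frac{a + 8\sum_{j=0}^{i-1} 32^j b_j}{8\times 32^i}}_{\text{fractional part}}.
\]
The first sum is already an integer and equals $b_i + 32b_{i+1} + \dots + 32^{p-3-i}b_{p-3} = c_i$, so it remains to check that the second summand lies in $[0,1)$.

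The key estimate uses the digit bounds $a\leq 7$ and $b_j\leq 31$:
\[
a + 8\sum_{j=0}^{i-1} 32^j b_j \leq 7 + 8\cdot 31\cdot\frac{32^i - 1}{31} = 7 + 8(32^i - 1) = 8\times 32^i - 1 < 8\times 32^i.
\]
Therefore the fractional part is strictly less than $1$, and taking the floor yields $\lfloor KJ/(8\times 32^i)\rfloor = c_i$ for every $i$. Summing over $i$ and combining with Lemma \ref{l:p4} gives the claimed formula (\ref{wJg}).

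Finally, the recurrence $c_{i-1} = b_{i-1} + 32 c_i$ (with base case $c_{p-3} = b_{p-3}$) is a direct consequence of the definition
\[
c_i = b_i + 32 b_{i+1} + \dots + 32^{p-3-i} b_{p-3},
\]
by factoring out one power of $32$ from all terms except the constant one. There is no real obstacle here; the only point requiring care is the digit bound above, which is what forces $a\in\{0,\dots,7\}$ and $b_j\in\{0,\dots,31\}$ to be the unique such representation (and in particular makes $b_{p-3}\geq 1$ so that $KJ$ is genuinely in the asserted range $8\times 32^{p-3}\leq KJ < 256\times 32^{p-3}$).
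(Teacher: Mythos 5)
Your proof is correct and follows essentially the same route as the paper: both reduce the claim to Lemma \ref{l:p4} and then identify each $\left\lfloor KJ/(8\times32^i)\right\rfloor$ with $c_i$. The only difference is that you explicitly verify the digit bound $a+8\sum_{j=0}^{i-1}32^jb_j\leq 8\times32^i-1<8\times32^i$, a detail the paper states without justification ("similarly, we get"), so your write-up is if anything more complete.
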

\begin{proof}
We apply Lemma \ref{l:p4}.
We should compute
\[
\left\lfloor\frac{KJ}{8}\right\rfloor,\left\lfloor\frac{KJ}{8\times32}\right\rfloor,\dots,\left\lfloor\frac{KJ}{8\times32^{p-3}}\right\rfloor.
\]
Then
\begin{align*}
\left\lfloor\frac{KJ}{8}\right\rfloor&=c_0=b_0+32b_1+32^2b_2+\dots+32^{p-4}b_{p-4}+32^{p-3}b_{p-3},
\intertext{similarly, we get for every $i\in\{1,\dots,p-3\}$}
\left\lfloor\frac{KJ}{8\times32^i}\right\rfloor&=c_i=b_i+32b_{i+1}+\dots+32^{p-3-i}b_{p-3};
\intertext{in particular, we have}
\left\lfloor\frac{KJ}{8\times32^{p-3}}\right\rfloor&=c_{p-3}=b_{p-3}.
\end{align*}

Furthermore, it is obvious that
\[
c_{i-1}=b_{i-1}+32c_i
\]
for every $i\in\{p-3,p-4,\dots,1\}$.
\end{proof}

\begin{remark}
From Proposition \ref{p:wJg} we obtain that the width of the algorithm $\mathcal{J}$ depends only on the product $KJ$.
Therefore, we can write $P_{\mathcal{J}}(KJ)$ instead of $P_{\mathcal{J}}(K,J)$ without loss of generality.
\end{remark}

\begin{corollary}\label{c:inc}
Suppose that for $p\geq4$
\[
KJ=a+8b_0+(8\times32)b_1+(8\times32^2)b_2+\dots+(8\times32^{p-4})b_{p-4}+(8\times32^{p-3})b_{p-3},
\]
where
\begin{align*}
a&\in\{0,1,\dots,7\},\\
b_i&\in\{0,1,\dots,31\}\text{ for every }i\in\{0,1,\dots,p-4\},\\
b_{p-3}&\in\{1,2,\dots,31\}.
\end{align*}
Then
\begin{multline*}
P_{\mathcal{J}}(KJ+1)=\\
=P_{\mathcal{J}}(KJ)+
\begin{cases}
5&\text{if }a\in\{0,1,\dots,6\};\\
6&\text{if }a=7\text{ and }b_0\in\{0,1,\dots,30\};\\
l+6&\text{if }a=7,\ b_0=\dots=b_{l-1}=31,\\
&b_l\in\{0,1,\dots,30\}\text{ for some }l\in\{1,\dots,p-3\};\\
p+4&\text{if }a=7\text{ and }b_0=\dots=b_{p-3}=31.
\end{cases}
\end{multline*}
In particular, the width $P_{\mathcal{J}}(KJ)$ is a strictly increasing function of $KJ$.
\end{corollary}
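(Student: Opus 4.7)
The plan is to apply Proposition \ref{p:wJg}, which expresses $P_{\mathcal{J}}(KJ) = 5 \cdot KJ + c_0 + c_1 + \dots + c_{p-3}$ in terms of the mixed-radix digits $(a, b_0, \dots, b_{p-3})$, and to read off the increment $\Delta := P_{\mathcal{J}}(KJ+1) - P_{\mathcal{J}}(KJ)$ case by case. Since the $5 \cdot KJ$ contribution always adds exactly $5$ to $\Delta$, the whole problem reduces to tracking how the digit-vector — and hence each $c_i$ — evolves when $KJ$ is incremented by $1$ in the mixed base $(8, 32, 32, \dots, 32)$.

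First I would dispatch the two cheap cases. If $a \le 6$, the digit vector only sees $a \mapsto a+1$, so every $c_i'$ equals $c_i$ and $\Delta = 5$. If $a = 7$ and $b_0 \le 30$, a single carry sends $(a, b_0) \mapsto (0, b_0+1)$, so $c_0' = c_0 + 1$ while the higher $c_i$ are unaffected, giving $\Delta = 6$.

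The chain-carry case $a = 7$, $b_0 = \dots = b_{l-1} = 31$, $b_l \le 30$ is where the actual work lies. After incrementing, positions $0, \dots, l-1$ flip from $31$ to $0$ and position $l$ goes from $b_l$ to $b_l + 1$. For $i = l$, clearly $c_l' = c_l + 1$; for $i < l$ I would use the telescoping identity $31(1 + 32 + \dots + 32^{l-1-i}) = 32^{l-i} - 1$ to conclude that $c_i' - c_i = 1$. Summing these contributions over $i \in \{0, 1, \dots, l\}$ yields $\sum (c_i' - c_i) = l + 1$, and hence $\Delta = l + 6$. The overflow case $a = 7$, $b_0 = \dots = b_{p-3} = 31$ is structurally analogous: here $KJ + 1 = 8 \cdot 32^{p-2}$ is the smallest element of the next stage's range, so the representation acquires a new leading digit $b_{p-2}' = 1$, and the same geometric identity gives $c_i' - c_i = 1$ for $i \in \{0, \dots, p-3\}$, while the new term $c_{p-2}' = 1$ contributes once more, so $\Delta = 5 + (p-2) + 1 = p + 4$.

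The main obstacle I anticipate is the index bookkeeping in the chain-carry and overflow cases — in particular verifying that the stage index correctly bumps from $p$ to $p+1$ in the overflow case (so that Proposition \ref{p:wJg} is applied with the right range for $KJ+1$), and that the geometric-series identity is applied with the correct exponents inside the sums defining each $c_i$. Once $\Delta$ has been computed in all four cases, the strictly-increasing claim is automatic, since $\Delta \ge 5 > 0$ in every case.
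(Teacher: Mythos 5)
Your proposal is correct and follows essentially the same route as the paper: apply Proposition \ref{p:wJg}, track the mixed-radix digit changes under $KJ\mapsto KJ+1$ case by case, and use the identity $31(1+32+\dots+32^{l-1-i})=32^{l-i}-1$ to see that each affected $c_i$ increases by exactly $1$ in the carry cases. The only cosmetic difference is that the paper handles the full-overflow case by computing $P_{\mathcal{J}}(KJ)$ and $P_{\mathcal{J}}(KJ+1)$ in closed form rather than by differencing the $c_i$, but the arithmetic content and the conclusion $\Delta=p+4$ are identical.
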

\begin{proof}
There are four cases.
\begin{enumerate}
\item
The first case is $a\in\{0,1,\dots,6\}$.
Then for every $i\in\{0,1,\dots,p-3\}$ we have
\[
\left\lfloor\frac{KJ+1}{8\times32^i}\right\rfloor=\left\lfloor\frac{KJ}{8\times32^i}\right\rfloor.
\]
By Proposition \ref{p:wJg}
\[
P_{\mathcal{J}}(KJ+1)=P_{\mathcal{J}}(KJ)+5.
\]
\item
In the second case, we have
\[
a=7\text{ and }b_0\in\{0,1,\dots,30\}.
\]
Then
\begin{gather*}
\left\lfloor\frac{KJ+1}{8}\right\rfloor=\left\lfloor\frac{KJ}{8}\right\rfloor+1
\intertext{and for every $i\in\{1,2\dots,p-3\}$}
\left\lfloor\frac{KJ+1}{8\times32^i}\right\rfloor=\left\lfloor\frac{KJ}{8\times32^i}\right\rfloor.
\end{gather*}
By Proposition \ref{p:wJg}
\[
P_{\mathcal{J}}(KJ+1)=P_{\mathcal{J}}(KJ)+5+1=P_{\mathcal{J}}(KJ)+6.
\]
\item
In the third case, we have
\[
a=7,\ b_0=\dots=b_{l-1}=31\text{ and }b_l\in\{0,1,\dots,30\}
\]
for some $l\in\{1,\dots,p-3\}$.
Then
\begin{align*}
KJ&=7+31\left(8+\dots+(8\times32^{l-1})\right)+\\
&\hspace*{19pt}+(8\times32^l)b_l+\dots+(8\times32^{p-3})b_{p-3},\\
KJ+1&=0+0\left(8+\cdots+(8\times32^{l-1})\right)+\\
&\hspace*{19pt}+(8\times32^l)(b_l+1)+\cdots+(8{\times}32^{p-3})b_{p-3}=\\
&=(8\times32^l)(b_l+1)+\dots+(8\times32^{p-3})b_{p-3}.
\end{align*}
By Proposition \ref{p:wJg}
\begin{align*}
P_{\mathcal{J}}(KJ)&=5\cdot KJ+c_0+c_1+\dots+c_{p-3},
\intertext{where for every $i\in\{l,\dots,p-3\}$}
c_i&=b_i+32b_{i+1}+\dots+32^{p-3-i}b_{p-3}
\intertext{and for every $i\in\{0,\dots,l-1\}$}
c_i&=31+32\cdot31+\dots+32^{l-1-i}\cdot31+32^{l-i}b_l+\dots\\
&\hspace*{19pt}+32^{p-3-i}b_{p-3}=\\
&=31(1+32+\dots+32^{l-1-i})+32^{l-i}(b_l+\dots+32^{p-3-l}b_{p-3})=\\
&=31\cdot\frac{32^{l-i}-1}{32-1}+32^{l-i}(b_l+\dots+32^{p-3-l}b_{p-3})=\\
&=(32^{l-i}-1)+32^{l-i}(b_l+\dots+32^{p-3-l}b_{p-3}).
\end{align*}
Again, by Proposition \ref{p:wJg}
\begin{align*}
P_{\mathcal{J}}(KJ+1)&=5\cdot(KJ+1)+c_0'+c_1'+\dots+c_{p-3}',
\intertext{where for every $i\in\{l+1,\dots,p-3\}$}
c_i'&=b_i+32b_{i+1}+\dots+32^{p-3-i}b_{p-3}=c_i,\\
c_l'&=(b_l+1)+32b_{l+1}+\dots+32^{p-3-l}b_{p-3}=c_l+1,
\intertext{and for every $i\in\{0,\dots,l-1\}$}
c_i'&=0+32\cdot0+\dots+32^{l-1-i}\cdot0+32^{l-i}(b_l+1)+\dots\\
&\hspace*{19pt}+32^{p-3-i}b_{p-3}=\\
&=32^{l-i}+32^{l-i}(b_l+\dots+32^{p-3-l}b_{p-3})=c_i+1.
\end{align*}
So, we obtain
\begin{align*}
P_{\mathcal{J}}(KJ+1)&=5\cdot(KJ+1)+(c_0'+\dots+c_{l-1}')+c_l'+(c_l'+\dots+c_{p-3}')=\\
&=(5KJ+5)+\left((c_0+1)+\dots+(c_{l-1}+1)\right)+(c_l+1)+\\
&+(c_l+\dots+c_{p-3})=\\
&=\left(5KJ+c_0+c_1+\dots+c_{p-3}\right)+5+(l+1)=\\
&=P_{\mathcal{J}}(KJ)+l+6.
\end{align*}
\item
Finally, we consider the case
\[
a=7\text{ and }b_0=\dots=b_{p-3}=31.
\]
Then
\begin{align*}
KJ&=7+31\left(8+\dots+(8\times32^{p-3})\right)=\\
&=7+(31\cdot8)\left(1+\dots+32^{p-3}\right)=\\
&=7+(31\cdot8)\cdot\frac{32^{p-2}-1}{32-1}=7+8\cdot(32^{p-2}-1)=8\times32^{p-2}-1,\\
KJ+1&=8\times32^{p-2}.
\end{align*}
By Proposition \ref{p:wJg}
\begin{align*}
P_{\mathcal{J}}(KJ)&=5\cdot KJ+c_0+c_1+\dots+c_{p-3},
\intertext{for every $i\in\{0,\dots,p-3\}$}
c_i&=31+32\cdot31+\dots+32^{p-3-i}\cdot31=\\
&=31(1+32+\dots+32^{p-3-i})=\\
&=31\cdot\frac{32^{p-2-i}-1}{32-1}=32^{p-2-i}-1.
\end{align*}
So, we have
\begin{align*}
P_{\mathcal{J}}(KJ)&=5\cdot(8\times32^{p-2}-1)+(32^{p-2}-1)+\dots+(32-1)=\\
&=40\cdot32^{p-2}-5-(p-2)+32\left(1+\dots+32^{p-3}\right)=\\
&=40\cdot32^{p-2}-p-3+32\cdot\frac{32^{p-2}-1}{32-1}=\\
&=40\cdot32^{p-2}-p-3+31\cdot\frac{32^{p-2}-1}{31}+\frac{32^{p-2}-1}{31}=\\
&=40\cdot32^{p-2}-p-3+(32^{p-2}-1)+\frac{32^{p-2}-1}{31}=\\
&=41\cdot32^{p-2}-p-4+\frac{32^{p-2}-1}{31}.
\end{align*}
Again, by Proposition \ref{p:wJg}
\begin{align*}
P_{\mathcal{J}}(KJ+1)&=5\cdot(KJ+1)+c_0'+c_1'+\dots+c_{p-2}',
\intertext{for every $i\in\{0,\dots,p-2\}$}
c_i'&=0+32\cdot0+\dots+32^{p-3-i}\cdot0+32^{p-2-i}\cdot1=32^{p-2-i}.
\end{align*}
So, we obtain
\begin{align*}
P_{\mathcal{J}}(KJ+1)&=5\cdot(8\times32^{p-2})+32^{p-2}+32^{p-3}+\dots+32^0=\\
&=41\cdot32^{p-2}+\frac{32^{p-2}-1}{31}=P_{\mathcal{J}}(KJ)+p+4.
\end{align*}
\end{enumerate}
\end{proof}

\subsubsection{The width of the Jacobi method for $KJ=2^s$}\label{sss:w2}

Here we consider the case when $KJ=2^s$.
The width formula of the algorithm $\mathcal{J}$ from Proposition \ref{p:wJg} is quite complicated and inconvenient to compute in some cases.

At the same time, it is obvious that the computations are simpler if $KJ=2^s$.
\begin{lemma}\label{l:nol}
Suppose that we execute stage $p$ of the process to perform the $Q$-effective implementation of the algorithm $\mathcal{J}$.
\begin{enumerate}
\item
If $p\geq1$, then under the computation of 
\[
V_p=\left\{u_{kj}^p\mid k\in\{1,\dots, K\},\,j\in\{1,\dots,J\}\right\} 
\]
we must execute the following number of operations:
\begin{align*}
4KJ&=2^{2+s}\text{ for the first nesting level;}\\
2KJ&=2^{1+s}\text{ for the second nesting level;}\\
KJ&=2^s\text{ for the nesting levels $3$, $4$, and $5$.}
\end{align*}
\item
If $p\geq2$ and $s\geq2$, then under the computation of $v^{p-1}$ we must execute the following number of operations:
\begin{align*}
KJ&=2^s\text{ for the nesting levels $1$, $2$, and $3$;}\\
KJ/2&=2^{s-1}\text{ for the fourth nesting level;}\\
KJ/4&=2^{s-2}\text{ for the fifth nesting level.}
\end{align*}
\item
Assume that $p\geq3$ and $s\geq5p-8$.
Let $l\in\{1,2,3,4,5\}$ be a nesting level.
Then for every $t\geq2$ under the computation of $v^{p-t}$  we must execute the following number of operations:
\[
\frac{KJ}{2^{2+l}\times32^{t-2}}=2^{s-l-5p+13}.
\]
\end{enumerate}
In particular, if $p\geq3$ and $s\geq5p-8$, then we use
\[
\frac{KJ}{2^{2+l}\times32^{p-3}}=2^{s-l-5t+8}
\]
operations on the nesting level $l\in\{1,2,3,4,5\}$ to compute $v^1$.
\end{lemma}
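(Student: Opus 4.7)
The plan is to specialize Lemma \ref{l:nol+} to the case $KJ = 2^s$, observing that each of the three statements in Lemma \ref{l:nol} is the same count as in Lemma \ref{l:nol+} but with every floor operation becoming an exact division under our hypotheses on $s$. So rather than reprove anything about the stages, I would simply invoke Lemma \ref{l:nol+} and verify divisibility in each case.

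For Statement 1, I would substitute $KJ = 2^s$ into the three formulas $4KJ$, $2KJ$, $KJ$ given by Statement 1 of Lemma \ref{l:nol+}; no floors appear there, so this is a direct rewriting into $2^{s+2}$, $2^{s+1}$, $2^s$. For Statement 2, Statement 2 of Lemma \ref{l:nol+} gives the counts $KJ$, $\lfloor KJ/2 \rfloor$, $\lfloor KJ/4 \rfloor$. Under $s \geq 2$, the numbers $2^s$, $2^{s-1}$, $2^{s-2}$ are all nonnegative integer powers of two, so the floors drop away and yield the claimed counts.

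For Statement 3, I would substitute $KJ = 2^s$ into the formula $\lfloor KJ/(2^{2+l}\times 32^{t-2})\rfloor$ from Statement 3 of Lemma \ref{l:nol+}. The denominator rewrites as $2^{\,l + 5t - 8}$, so the quotient, once the floor is stripped, will be $2^{\,s - l - 5t + 8}$, matching the formula in the statement. The essential point is to check that for every valid pair $(t,l)$ the division is exact, i.e., $s \geq l + 5t - 8$. Since $v^{p-t}$ exists only for $p - t \geq 1$, equivalently $t \leq p - 1$, and since $l \leq 5$, the largest denominator exponent occurs at $t = p-1$, $l = 5$, and equals $5 + 5(p-1) - 8 = 5p - 8$. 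So the hypothesis $s \geq 5p - 8$ is exactly what is needed to make every such division exact.

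The only real obstacle is this boundary check in Statement 3: the hypothesis $s \geq 5p - 8$ must be tight enough to cover the worst case, which is the fifth-nesting-level count in the computation of $v^1$, and it is precisely this case that appears in the ``In particular'' clause of Lemma \ref{l:nol+}. Everything else is an arithmetic translation: floors that were put in place to guard against fractional quotients simply vanish once $KJ$ is a sufficiently large power of two.
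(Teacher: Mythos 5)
Your proof is correct and takes essentially the same approach as the paper: the paper's own proof simply notes that this lemma is a particular case of Lemma \ref{l:nol+} and that $\left\lfloor 2^s/2^q\right\rfloor=2^{s-q}$ for $q\le s$, so every floor becomes an exact division. Your boundary check that $s\ge 5p-8$ covers the worst case $t=p-1$, $l=5$ is the right justification, and your exponent $2^{s-l-5t+8}$ is in fact the correct general formula --- the lemma's displayed statement has the roles of $p$ and $t$ inadvertently interchanged between Statement 3 and the ``In particular'' clause.
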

\begin{proof}
This lemma is a particular case of Lemma \ref{l:nol+}.
Since for every $q\in\{0,1,\dots,s\}$ we get
\[
\left\lfloor\frac{KJ}{2^q}\right\rfloor=2^{s-q},
\]
then the proof is trivial.
\end{proof}

\begin{lemma}\label{l:1nl}
Suppose that $KJ=2^s$ and we have executed $p\geq3$ stages of the process for performing the $Q$-effective implementation of the algorithm $\mathcal{J}$.
\begin{enumerate}
\item
For each stage, we have the most number of operations on the first nesting level, and each of the other nesting levels has fewer operations.
\item
Furthermore, on the first nesting level, the total number of operations is
\[
P(2^s,p)=5\cdot2^s+2^{s+12-5p}\cdot\frac{2^{5p-10}-1}{31}.
\]
\end{enumerate}
\end{lemma}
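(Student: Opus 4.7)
The plan is to read off the five per-level operation counts at stage $p$ from Lemma~\ref{l:nol} and combine them. First I would catalog what is being computed simultaneously during stage $p$: the set $V_p$ (freshly begun), the expression $v^{p-1}$ (also freshly begun, since its computation starts in stage $p$), and the doubling-scheme continuations of $v^{p-2},\dots,v^1$. By Lemma~\ref{l:nol} the number of operations executed at nesting level $\ell\in\{1,\dots,5\}$ in stage $p$ is $4KJ,2KJ,KJ,KJ,KJ$ for $V_p$; then $KJ,KJ,KJ,KJ/2,KJ/4$ for $v^{p-1}$; and $KJ/(2^{2+\ell}\cdot 32^{t-2})$ for $v^{p-t}$ with $t\geq 2$. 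The quotients are all exact powers of two because $KJ=2^s$ and $s\geq 5p-8$.

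For Part~1 I would simply note that each of the three per-object sequences above is non-increasing in $\ell$ (strictly decreasing for $V_p$ between levels~1 and~3, for $v^{p-1}$ between levels~3 and~5, and for every $v^{p-t}$ with $t\geq 2$ across all five levels). Adding them term by term, the column sum at level~1 strictly exceeds the column sum at every other level, so Part~1 holds at stage $p$. The same argument applied to stages $k<p$, with whichever ingredients are not yet present removed (only $V_1$ when $k=1$; only $V_2$ and $v^1$ when $k=2$), covers the claim for every stage.

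For Part~2 I would sum the level-1 contributions to obtain
\[
P(2^s,p)=4\cdot 2^s+2^s+\sum_{t=2}^{p-1}\frac{2^s}{8\cdot 32^{t-2}}=5\cdot 2^s+2^{s-3}\sum_{j=0}^{p-3}32^{-j}.
\]
Evaluating the geometric series as $\frac{32}{31}\bigl(1-2^{-5(p-2)}\bigr)$ gives $\frac{2^{s+2}}{31}\bigl(1-2^{-5(p-2)}\bigr)=\frac{2^{s+2}-2^{s+12-5p}}{31}$, and factoring $2^{s+12-5p}$ out of the numerator rewrites this as $2^{s+12-5p}\cdot\frac{2^{5p-10}-1}{31}$. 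Adding the leading $5\cdot 2^s$ recovers the closed form in the statement.

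The main obstacle is essentially nonexistent once Lemma~\ref{l:nol} is in hand: the argument reduces to a term-by-term comparison of five columns for Part~1 and a single geometric progression for Part~2. The only delicate bookkeeping is tracking that $v^{p-t}$ was started in stage $p-t+1$ and has then been halved five times per stage by the doubling scheme, so that it has decayed by a factor of exactly $32^{t-2}$ when stage $p$ is reached; this is already encapsulated in Statement~3 of Lemma~\ref{l:nol}.
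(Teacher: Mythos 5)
Your proposal is correct and follows essentially the same route as the paper: Part~1 is read off from the per-level counts of Lemma~\ref{l:nol} (the paper simply calls it ``a direct consequence''), and Part~2 is exactly the paper's computation, summing the level-1 contributions $4KJ+KJ+\sum_{i=0}^{p-3}KJ/(8\times32^{i})$ as a geometric series to reach $5\cdot2^s+2^{s+12-5p}\cdot\frac{2^{5p-10}-1}{31}$. Your term-by-term column comparison for Part~1 is a slightly more explicit justification than the paper offers, but it is the same idea.
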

\begin{proof}
The first statement is a direct consequence of Lemma \ref{l:nol}.

Indeed, we get
\begin{equation}
\begin{aligned}\label{no}
P(2^s,p)&=4KJ+KJ+\frac{KJ}{8}+\dots+\frac{KJ}{8\times32^{p-3}}=5\cdot2^s+\sum_{i=0}^{p-3}\frac{2^s}{8\times32^i}=\\
&=5\cdot2^s+\sum_{i=0}^{p-3}2^{s-3-5i}=5\cdot2^s+\sum_{i=0}^{p-3}2^{s-3-5i}=\\
&=5\cdot2^s+2^{s+12-5p}\sum_{i=0}^{p-3}2^{s-3-5i-s-12+5p}=\\
&=5\cdot2^s+2^{s+12-5p}\sum_{i=0}^{p-3}2^{5(p-3-i)}=5\cdot2^s+2^{s+12-5p}\sum_{l=0}^{p-3}2^{5l}=\\
&=5\cdot2^s+2^{s+12-5p}\cdot\frac{2^{5(p-2)}-1}{2^5-1}=5\cdot2^s+2^{s+12-5p}\cdot\frac{2^{5p-10}-1}{31}.
\end{aligned}
\end{equation}
\end{proof}

\begin{lemma}\label{l:finp}
Suppose that $KJ=2^s\geq8$.
Also
\[
s\equiv r\pmod{5}\text{ for the suitable }r\in\{-2,-1,0,1,2\}.
\]
In the process of performing the $Q$-effective implementation of the algorithm $\mathcal{J}$, the computation of $v^1$ stops in stage
\[
p=2+\frac{s-r}{5}=2+\left[\frac{s}{5}\right]\geq3,
\]
and on the first nesting level of stage $p$ we have
\[
P(2^s,p)=5\cdot2^s+2^{2+r}\cdot\frac{2^{s-r}-1}{31}.
\]
operations.
\end{lemma}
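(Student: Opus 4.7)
The plan is to identify the stopping stage $p$ using Lemma \ref{l:p4}, then substitute into the formula from Lemma \ref{l:1nl}(2).

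First, I would invoke Lemma \ref{l:p4}, which states that the computation of $v^1$ stops in stage $p$ precisely when $KJ\in[8\cdot32^{p-3},256\cdot32^{p-3}-1]$. Rewriting these bounds as powers of $2$ yields the range $[2^{5p-12},2^{5p-7}-1]$. Substituting $KJ=2^s$, the condition becomes $5p-12\leq s\leq 5p-8$, equivalently $5p\in\{s+8,s+9,s+10,s+11,s+12\}$. Since this is a block of five consecutive integers, exactly one of them is divisible by $5$.

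Writing $s=5q+r$ with $r\in\{-2,-1,0,1,2\}$ (so that $q=(s-r)/5$ is the nearest integer to $s/5$), the unique multiple of $5$ in $\{s+8,\dots,s+12\}$ is $s+10-r$. Hence
\[
p=\frac{s+10-r}{5}=2+\frac{s-r}{5}=2+\left[\frac{s}{5}\right],
\]
and the hypothesis $2^s\geq8$, i.e., $s\geq3$, gives $p\geq3$ in every residue class.

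Finally, I would apply Lemma \ref{l:1nl}(2), which asserts
\[
P(2^s,p)=5\cdot2^s+2^{s+12-5p}\cdot\frac{2^{5p-10}-1}{31}.
\]
Substituting $5p=s+10-r$ yields $s+12-5p=2+r$ and $5p-10=s-r$, so the expression reduces to
\[
P(2^s,p)=5\cdot2^s+2^{2+r}\cdot\frac{2^{s-r}-1}{31},
\]
which is exactly the claimed formula.

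The only nontrivial step is isolating the stopping stage: the range condition of Lemma \ref{l:p4} pins $p$ down uniquely via the residue $r=s\bmod5$ (taken in $\{-2,\dots,2\}$). Once $p$ is identified, the rest is algebraic substitution into Lemma \ref{l:1nl}(2).
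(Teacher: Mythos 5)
Your proposal is correct and reaches the result by essentially the same two-step structure as the paper (pin down the stopping stage $p$, then substitute $5p=s+10-r$ into the closed form $P(2^s,p)=5\cdot2^s+2^{s+12-5p}\cdot\frac{2^{5p-10}-1}{31}$ from Lemma \ref{l:1nl}). The one difference is how you locate $p$: you route through the general-case range of Lemma \ref{l:p4}, whereas the paper argues directly from Lemma \ref{l:nol} that in stage $p$ the five nesting levels of the $v^1$ computation carry $2^{s+12-5p},\dots,2^{s+8-5p}$ operations and that stopping means one of these consecutive powers of two equals $1$, i.e.\ $2^{s+10-r-5p}=1$ for some $r\in\{-2,\dots,2\}$. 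Your route is clean, but note a small citation gap: Lemma \ref{l:p4} is stated only for $p\geq4$ (equivalently $KJ\geq256$), while the present lemma also covers $s\in\{3,\dots,7\}$, where the stop occurs at $p=3$; for that range you must instead invoke Lemma \ref{l:p23}(2) (whose interval $\{8,\dots,255\}$ is exactly the $p=3$ instance of the same formula) or fall back on the direct stopping criterion as the paper does. With that case added, the two arguments are interchangeable; the paper's version is marginally more self-contained since it never leaves the power-of-two setting.
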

\begin{proof}
Under the computation of $v^1$, there exists a stage $p$ such that on one of its nesting levels we have only one operation.
After that, the computation of $v^1$ finishes.

Let us describe the process of computing a $Q$-term $v^1$ at the necessary stage $p$.
By Lemma \ref{l:nol}, we obtain
\[
2^{s+12-5p},2^{s+11-5p},2^{s+10-5p},2^{s+9-5p}, 2^{s+8-5p}
\]
operations according to the nesting levels.
The stopping criterion is that one of these numbers is equal to $1$, i.e.,
\[
2^{s+10-r-5p}=1
\]
for some $r\in\{-2,-1,0,1,2\}$.
Hence,
\[
s=5(p-2)+r\iff p=\frac{s+10-r}{5}=2+\frac{s-r}{5}=2+\left[\frac{s}{5}\right]\geq3,
\]
and
\[
s\equiv r\pmod{5}.
\]

From this we obtain the total number of operations on the first nesting level by the equality (\ref{no}):

\begin{align*}
P(2^s,p)&=5\cdot2^s+2^{s+12-5p}\cdot\frac{2^{5p-10}-1}{31}=\\
&=5\cdot2^s+2^{s+12-s-10+r}\cdot\frac{2^{s+10-r-10}-1}{31}=5\cdot2^s+2^{2+r}\cdot\frac{2^{s-r}-1}{31}
\end{align*}
for $r\in\{-2,-1,0,1,2\}$ and $s\equiv r\pmod{5}$.
\end{proof}

\begin{proposition}\label{p:wJ2}
Suppose that $KJ=2^s\geq8$.
The width of the algorithm $\mathcal{J}$ is equal to
\begin{equation}
\label{wJ2}
P_{\mathcal{J}}(2^s)=5\cdot2^s+2^{2+r}\cdot\frac{2^{s-r}-1}{31}
\end{equation}
for $r\in\{-2,-1,0,1,2\}$ and $r\equiv s\pmod{5}$.
\end{proposition}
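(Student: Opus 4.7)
The plan is to obtain Proposition \ref{p:wJ2} directly from Lemma \ref{l:finp} after identifying the width $P_\mathcal{J}(2^s)$ with the number of operations at the first nesting level of the stage in which the computation of $v^1$ finishes. First I would recall the three structural ingredients established above: by Lemma \ref{l:st} the $Q$-effective implementation of $\mathcal{J}$ decomposes into stages with exactly five nesting levels each; by Lemma \ref{l:1nl}(1), applied to $KJ = 2^s$, the first nesting level carries strictly more operations than any other level within every stage $p \geq 3$; and by Lemma \ref{l:1nl}(2) the corresponding per-stage count has the closed form
$$P(2^s,p) = 5 \cdot 2^s + 2^{s+12-5p} \cdot \frac{2^{5p-10}-1}{31}.$$

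Next I would argue that the global maximum of these per-stage counts is attained at the terminal stage of the computation of $v^1$, identified in Lemma \ref{l:finp} as $p = 2 + (s-r)/5$ for $r \in \{-2,-1,0,1,2\}$ with $s \equiv r \pmod 5$. Up to this stage, the count $P(2^s,p)$ can be rewritten as $5 \cdot 2^s + (2^{s+2} - 2^{s+12-5p})/31$, which is visibly strictly increasing in $p$. Past this stage the count stabilizes, because the scheme runs every $v^i$ for exactly the same number of stages as $v^1$; hence in every stage $q$ following the terminal stage of $v^1$ the active $v$-terms are $v^{q-p+1},\ldots,v^{q-1}$, and the multiset of their remaining workloads at each nesting level matches the workloads of $v^1,\ldots,v^{p-1}$ at the terminal stage of $v^1$, so the per-level sums coincide. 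This is the observation used tacitly at the end of the proof of Proposition \ref{p:wJg} (``Further stages of the computations have the same number of operations'').

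Substituting the value of $p$ given by Lemma \ref{l:finp} into the closed form of Lemma \ref{l:1nl}(2) --- the algebraic simplification is the one already carried out at the end of the proof of Lemma \ref{l:finp} itself --- yields formula (\ref{wJ2}). The only step not amounting to an immediate citation is the stabilization claim for stages past the terminal stage, since Lemmas \ref{l:nol} and \ref{l:1nl}(2) were stated only for stages in which $v^1$ is still being computed; this is the main obstacle, but it requires only the short independent counting just sketched and presents no genuine difficulty.
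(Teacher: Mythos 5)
Your proposal is correct and follows essentially the same route as the paper: reduce to the computation of $v^1$ (the stabilization argument you spell out is exactly the remark ``further stages of the computations have the same number of operations'' invoked from the proof of Lemma \ref{l:p4}), locate the terminal stage via Lemma \ref{l:finp}, and read off the first-nesting-level count from Lemma \ref{l:1nl}. Your explicit rewriting $P(2^s,p)=5\cdot2^s+(2^{s+2}-2^{s+12-5p})/31$ to show monotonicity in $p$ is a small piece of added detail the paper leaves implicit, but it is not a different argument.
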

\begin{proof}
As in the proof of Lemma \ref{l:p4}, we can only consider the computation of $v^1$.

By Lemma \ref{l:finp}, the computation of $v^1$ stops at the stage
\[
p=2+2+\left[\frac{s}{5}\right]\geq3.
\]
On the first nesting level of this stage, we obtain the largest number of operations by Lemma \ref{l:1nl}.

Thus, the width of the algorithm $\mathcal{J}$ is equal to
\[
P_{\mathcal{J}}(2^s)=P(2^s,p)=5\cdot2^s+2^{2+r}\cdot\frac{2^{s-r}-1}{31}
\]
for $r\in\{-2,-1,0,1,2\}$ such that
\[
r\equiv s\pmod{5}.
\]
\end{proof}

\begin{remark}\label{r:p3}
It is interesting to consider the case $p=3$.
Obviously,
\[
p=3\iff s\in\{3,4,5,6,7\}.
\]
Then
\[
s-r=5
\]
for $r\in\{-2,-1,0,1,2\}$ such that $r\equiv s\pmod{5}$.

It follows that the width of the algorithm $\mathcal{J}$ is equal to
\begin{align*}
P_{\mathcal{J}}(2^s)&=5\cdot2^s+2^{2+r}\cdot\frac{2^5-1}{31}=5\cdot2^s+2^{2+r}=\\
&=\begin{cases}
5\cdot8+2^0=41&\text{ for }2^s=8;\\
5\cdot16+2^1=82&\text{ for }2^s=16;\\
5\cdot32+2^2=164&\text{ for }2^s=32;\\
5\cdot64+2^3=328&\text{ for }2^s=64;\\
5\cdot128+2^4=656&\text{ for }2^s=128.
\end{cases}
\end{align*}
\end{remark}

Remark \ref{r:p3} is easily generalized as a corollary of Proposition \ref{p:wJ2}.
\begin{corollary}
Suppose that $KJ=2^{5q-2}\geq8$ and the width of the algorithm $\mathcal{J}$ is equal to
\[
P_{\mathcal{J}}(2^{5q-2})=5\cdot2^{5q-2}+\frac{2^{5q}-1}{31}.
\]
Then
\[
P_{\mathcal{J}}(2^{5q-2+t})=2^t\cdot P_{\mathcal{J}}(2^{5q-2})
\]
for $t\in\{0,1,2,3,4\}$.
\end{corollary}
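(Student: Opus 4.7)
The plan is to derive both equalities directly from Proposition \ref{p:wJ2}, which gives a closed form for $P_{\mathcal{J}}(2^s)$ in terms of the residue $r \equiv s \pmod 5$ with $r \in \{-2,-1,0,1,2\}$. The corollary is really just a convenient repackaging of that formula once we observe that shifting the exponent $s$ by a small amount $t \in \{0,1,2,3,4\}$ changes the residue $r$ in lockstep, while keeping the difference $s - r$ fixed.

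First I would verify the opening formula, i.e., the value of $P_{\mathcal{J}}(2^{5q-2})$: set $s = 5q-2$, note that $s \equiv -2 \pmod 5$, hence $r = -2$, and substitute into formula (\ref{wJ2}). The term $2^{2+r}$ collapses to $2^0 = 1$, while $s - r = 5q$, yielding exactly $5\cdot 2^{5q-2} + (2^{5q}-1)/31$. (This is actually presented in the corollary as a hypothesis, but it is really a consequence of Proposition \ref{p:wJ2} and should be acknowledged as such.)

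Next I would treat the general $t \in \{0,1,2,3,4\}$. For $s' = 5q-2+t$ we have $s' \equiv -2+t \pmod 5$, and since $-2+t \in \{-2,-1,0,1,2\}$, the correct residue is $r' = -2+t$. Then $s' - r' = 5q$ is independent of $t$, so Proposition \ref{p:wJ2} gives
\[
P_{\mathcal{J}}(2^{5q-2+t}) = 5\cdot 2^{5q-2+t} + 2^{2+(-2+t)} \cdot \frac{2^{5q}-1}{31} = 5\cdot 2^{5q-2+t} + 2^t \cdot \frac{2^{5q}-1}{31}.
\]
Factoring $2^t$ out of both summands on the right yields $2^t \cdot P_{\mathcal{J}}(2^{5q-2})$, which is the desired identity.

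There is no substantive obstacle: the whole argument is bookkeeping on the residue $r$ and the invariant $s - r = 5q$. The only thing to be careful about is the residue convention, namely that the representative of $-2+t$ in the set $\{-2,-1,0,1,2\}$ is $-2+t$ itself for $t \in \{0,1,2,3,4\}$, so no wraparound occurs and Proposition \ref{p:wJ2} applies directly without a case split.
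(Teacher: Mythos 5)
Your proposal is correct and follows essentially the same route as the paper: the paper's proof simply says the corollary is a direct generalization of the computation in Remark \ref{r:p3}, which is exactly the observation you make explicit — that for $s=5q-2+t$ with $t\in\{0,1,2,3,4\}$ the residue is $r=-2+t$ without wraparound, so $s-r=5q$ stays fixed and formula (\ref{wJ2}) factors as $2^t$ times the $t=0$ value. Your write-up is just a more detailed version of the intended argument, including the correct remark that the stated value of $P_{\mathcal{J}}(2^{5q-2})$ is itself a consequence of Proposition \ref{p:wJ2} rather than an independent hypothesis.
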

\begin{proof}
It is a direct generalization of the consideration in Remark \ref{r:p3}.
\end{proof}

By this Corollary we easy make the following table.
\begin{table}[ht]\label{t:2}
\caption{The width of the algorithm $\mathcal{J}$ for $KJ=2^s$}
\begin{gather*}
\begin{array}{|c|c|c|c|}\hline
s&0&1&2\\ \hline
P_{\mathcal{J}}(2^s)&5&10&20\\ \hline
\end{array}\\
\begin{array}{|c|c|c|c|c|c|}\hline
s&3&4&5&6&7\\ \hline
P_{\mathcal{J}}(2^s)&41&82&164&328&656\\ \hline
\end{array}\\
\begin{array}{|c|c|c|c|c|c|}\hline
s&8&9&10&11&12\\ \hline
P_{\mathcal{J}}(2^s)&1313&2626&5252&10504&21008\\ \hline
\end{array}
\end{gather*}
\end{table}

Now we estimate the width of the algorithm $\mathcal{J}$.
More exactly, we would like to estimate the fraction
\[
\frac{2^{s-r}-1}{31}
\]
in the equality (\ref{wJ2}).

\begin{corollary}\label{c:est}
Suppose that $s\geq8$, $r\equiv s\pmod{5}$ for $r\in\{-2,-1,0,1,2\}$.
Also
\[
d(s)=41\cdot2^{s-3}.
\]
Then
\begin{equation}\label{e:estfrac}
2^{s-r-5}<\frac{2^{s-r}-1}{31}<2^{s-r-4}-1.
\end{equation}
Furthermore,
\[
\begin{aligned}
41\cdot2^{s-3}&<P_{\mathcal{J}}(2^s)<21\cdot2^{s-2}-2^{2+r}\leq21\cdot2^{s-2}-1<6\cdot2^s\iff\\
\iff d(s)&<P_{\mathcal{J}}(2^s)<d(s)+2^{s-3}-2^{2+r}\leq d(s)+2^{s-3}-1<6\cdot2^s.
\end{aligned}
\]
\end{corollary}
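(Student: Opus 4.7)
The plan is to derive the two-sided bound \eqref{e:estfrac} first by elementary algebra, and then to substitute it into the closed-form expression for $P_{\mathcal{J}}(2^s)$ supplied by Proposition \ref{p:wJ2}, after which the rest of the chain is arithmetic bookkeeping.

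First, I would observe that the hypothesis $s\geq8$ together with $r\in\{-2,-1,0,1,2\}$ forces $s-r\geq6$, so in particular $2^{s-r}\geq64$. Set $x=2^{s-r}$. The lower bound $x/32<(x-1)/31$ is equivalent to $31x<32(x-1)$, i.e.\ $32\leq x$, which holds. The upper bound $(x-1)/31<x/16-1$ is equivalent to $16(x-1)<31(x/16\cdot16-16)$ after clearing denominators, and simplifying gives $15x>30$, i.e.\ $x>2$, which again holds comfortably. Thus $2^{s-r-5}<(2^{s-r}-1)/31<2^{s-r-4}-1$, which is exactly \eqref{e:estfrac}.

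Next, I multiply through \eqref{e:estfrac} by $2^{2+r}$ (a positive constant) and add $5\cdot2^s$ to every term. Using the identity from Proposition \ref{p:wJ2}, the middle term becomes $P_{\mathcal{J}}(2^s)$. The left side becomes $5\cdot2^s+2^{2+r}\cdot2^{s-r-5}=5\cdot2^s+2^{s-3}=40\cdot2^{s-3}+2^{s-3}=41\cdot2^{s-3}=d(s)$, which gives the lower bound. The right side becomes $5\cdot2^s+2^{2+r}(2^{s-r-4}-1)=5\cdot2^s+2^{s-2}-2^{2+r}=21\cdot2^{s-2}-2^{2+r}$, which gives the first upper bound.

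Finally, the remaining inequalities are completely routine: $2^{2+r}\geq2^0=1$ since $r\geq-2$, giving $21\cdot2^{s-2}-2^{2+r}\leq21\cdot2^{s-2}-1$; and $21\cdot2^{s-2}-1<24\cdot2^{s-2}=6\cdot2^s$ trivially. The equivalence with the $d(s)+2^{s-3}-2^{2+r}$ form is just $21\cdot2^{s-2}=42\cdot2^{s-3}=d(s)+2^{s-3}$. There is no real obstacle here; the only point requiring care is making sure that the inequalities in \eqref{e:estfrac} are strict, which is guaranteed by $s-r\geq6>5$ rather than $\geq5$, so the hypothesis $s\geq8$ (not $s\geq7$) is used essentially.
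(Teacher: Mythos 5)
Your proof follows essentially the same route as the paper's: establish the two-sided bound \eqref{e:estfrac} by elementary algebra (the paper uses the mediant-type inequality $\frac{a}{b}<\frac{a-c}{b-c}$, you cross-multiply directly), then multiply through by $2^{2+r}$, add $5\cdot2^s$, and invoke Proposition \ref{p:wJ2}; the remaining chain is the same bookkeeping the paper does. The argument is sound, but two of your intermediate simplifications are off: $31x<32(x-1)$ is equivalent to $x>32$ (strict, not $32\leq x$), and $16(x-1)<31(x-16)$ simplifies to $15x>480$, i.e.\ $x>32$, not $15x>30$. So both inequalities in \eqref{e:estfrac} in fact require $x=2^{s-r}>32$; this is still guaranteed by your observation that $s-r\geq6$ (indeed $s-r\geq10$, since $5\mid s-r$ and $s\geq8$), so the conclusion stands, but the stated thresholds should be corrected — and note that at $s=7$ one has $x=32$ and \emph{both} bounds degenerate to equalities, which reinforces your closing remark that $s\geq8$ is used essentially.
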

\begin{proof}
We note the following very trivial result.
If $a>b>c>0$, then
\[
\frac{a}{b}<\frac{a-c}{b-c}.
\]
Therefore, for every $q\in\{2,3,\dots\}$
\[
\frac{2^{5q}-1+1}{31+1}<\frac{2^{5q}-1}{31}<\frac{2^{5q}-1-15}{31-15}\iff2^{5(q-1)}<\frac{2^{5q}-1}{31}<2^{5q-4}-1.
\]

By Proposition \ref{p:wJ2}
\[
P_{\mathcal{J}}(2^s)=5\cdot2^s+2^{2+r}\cdot\frac{2^{s-r}-1}{31}.
\]
How in Remark \ref{r:p3}, we get
\[
s-r=5q\text{ and }q\geq2
\]
for $s\geq8$.
Hence,
\[
2^{s-r-5}<\frac{2^{s-r}-1}{31}<2^{s-r-4}-1.
\]
Now
\begin{multline*}
5\cdot2^s+2^{2+r}\cdot2^{s-r-5}<P_{\mathcal{J}}(2^s)<5\cdot2^s+2^{2+r}(2^{s-r-4}-1)\iff\\
\iff5\cdot2^s+2^{s-3}<P_{\mathcal{J}}(2^s)<5\cdot2^s+2^{s-2}-2^{2+r}\leq5\cdot2^s+2^{s-2}-1\iff\\
\iff2^{s-3}(40+1)<P_{\mathcal{J}}(2^s)<2^{s-2}(20+1)-2^{2+r}\leq2^s(20+1)-1\iff\\
\iff41\cdot2^{s-3}<P_{\mathcal{J}}(2^s)<21\cdot2^{s-2}-2^{2+r}\leq21\cdot2^{s-2}-1.
\end{multline*}
We have
\[
21\cdot2^{s-2}-1<24\cdot2^{s-2}=6\cdot2^s.
\]
Also
\[
21\cdot2^{s-2}=42\cdot2^{s-3}=41\cdot2^{s-3}+2^{s-3}=d(s)+2^{s-3},
\]
then the proof is complete.
\end{proof}

\subsubsection{The width estimations of the Jacobi method for the general case}\label{sss:est}

\begin{remark}
The width formula of the algorithm $\mathcal{J}$ from Proposition \ref{p:wJg} is rather complicated and not always convenient for computations.
At the same time, the width formula of the algorithm $\mathcal{J}$ from Proposition \ref{p:wJ2} is more suitable for computations.
We will try to obtain the width estimations for the algorithm $\mathcal{J}$ for the general case using the formula from Proposition \ref{p:wJ2}.
By Lemma \ref{l:p23} and Table  1 
(p.~\pageref{t:p23}) we can consider the case $KJ\geq257$.

So, we \emph{fix the integer $s$ such that
\[
2^{s-1}< KJ<2^s
\]
for the suitable $s\geq9$.}

Then by Lemma \ref{l:p4}
\[
P_{\mathcal{J}}(2^{s-1})<P_{\mathcal{J}}(KJ)<P_{\mathcal{J}}(2^s).
\]
\end{remark}

\begin{lemma}\label{l:finp(s-1)}
Suppose that $KJ=2^{s-1}\geq2^8$.
Also
\[
s\equiv r\pmod{5}\text{ for the suitable }r\in\{-2,-1,0,1,2\}.
\]
Under performing the $Q$-effective implementation of the algorithm $\mathcal{J}$, the computation of $v^1$ stops at the stage
\begin{align*}
p&=2+\frac{s-r}{5}=2+\left[\frac{s}{5}\right]\text{ for }r\in\{-1,0,1,2\};\\
p-1&=1+\frac{s-r}{5}=1+\left[\frac{s}{5}\right]\text{ for }r=-2.
\end{align*}

If $r\in\{-1,0,1,2\}$, then on the first nesting level of stage $p$ we have
\[
P(2^{s-1},p)=5\cdot2^{s-1}+2^{1+r}\cdot\frac{2^{s-r}-1}{31}
\]
operations.

In the case $r=-2$, then on the first nesting level of stage $p-1$ we have
\[
P(2^{s-1},p-1)=5\cdot2^{s-1}+2^4\cdot\frac{2^{s-3}-1}{31}
\]
operations.
\end{lemma}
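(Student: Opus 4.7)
The plan is to apply Lemma \ref{l:finp} directly to $KJ=2^{s-1}$ and translate the conclusions back to the parameters $p$ and $r$ associated with $s$. Since $2^{s-1}\geq 2^8$ gives $s\geq 9$, certainly $2^{s-1}\geq 8$ and Lemma \ref{l:finp} is applicable.

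First, I would compute the centered residue $r'\in\{-2,-1,0,1,2\}$ of $s-1$ modulo $5$ in terms of $r$. If $r\in\{-1,0,1,2\}$, then $r'=r-1\in\{-2,-1,0,1\}$ is already in the centered range, so by Lemma \ref{l:finp} the computation of $v^1$ stops at the stage
\[
p'=2+\frac{(s-1)-r'}{5}=2+\frac{s-r}{5}=p,
\]
and the operation count on the first nesting level equals
\[
P(2^{s-1},p')=5\cdot 2^{s-1}+2^{2+r'}\cdot\frac{2^{s-1-r'}-1}{31}=5\cdot 2^{s-1}+2^{1+r}\cdot\frac{2^{s-r}-1}{31},
\]
matching the claim.

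The exceptional case is $r=-2$. Here $r-1=-3\equiv 2\pmod 5$, so the centered residue is $r'=2$ (this is the one place where the residue shift is not simply $r\mapsto r-1$). Lemma \ref{l:finp} now yields the stopping stage
\[
p'=2+\frac{(s-1)-2}{5}=2+\frac{s-3}{5},
\]
and, comparing with $p=2+(s-r)/5=2+(s+2)/5$, one sees that the two expressions differ by $(s+2)/5-(s-3)/5=1$, so $p'=p-1$. Substituting $r'=2$ and $s'=s-1$ into the operation count from Lemma \ref{l:finp} gives
\[
P(2^{s-1},p-1)=5\cdot 2^{s-1}+2^{2+2}\cdot\frac{2^{(s-1)-2}-1}{31}=5\cdot 2^{s-1}+2^{4}\cdot\frac{2^{s-3}-1}{31},
\]
as required.

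The only real obstacle is this residue bookkeeping at $r=-2$: the wrap-around from $-3$ back to $2$ in the centered residue system is precisely what causes the stopping stage to drop by one compared with the other four residue classes. Once this case is isolated, the rest of the argument is a mechanical substitution into Lemma \ref{l:finp} together with the observation $2+\lfloor s/5\rfloor_{\text{nearest}}=2+(s-r)/5$ used implicitly there.
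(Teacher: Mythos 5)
Your proof is correct and follows essentially the same route as the paper: both apply Lemma \ref{l:finp} to $KJ=2^{s-1}$ and then track how the centered residue shifts from $s$ to $s-1$, with your case split ($r'=r-1$ for $r\in\{-1,0,1,2\}$ versus the wrap-around $r'=2$ when $r=-2$) being exactly the paper's conclusion $t=r-1$, $q=p$ versus $t=2$, $q=p-1$ derived from $t+1-r\in\{0,5\}$. The substitutions into the operation-count formula then agree line for line.
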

\begin{proof}
We should consider the stopping criterion when $KJ=2^{s-1}$.
By Lemma \ref{l:finp},
\[
s-1=5(q-2)+t\text{ for }q=2+\left[\frac{s-1}{5}\right]\text{ and }t\in\{-2,-1,0,1,2\}.
\]
Hence,
\[
5(q-2)+t+1=s=5(p-2)+r\iff t+1-r=5(p-q).
\]
Since
\[
-2+1-2\leq t+1-r\leq 2+1+2\iff-3\leq t+1-r\leq 5,
\]
then
\[
t+1-r\in\{0,5\}.
\]
More exactly,
\begin{align*}
t&=r-1\text{ and }q=p\text{ for }r\in\{-1,0,1,2\};\\
t&=2\text{ and }q=p-1\text{ for }r=-2.
\end{align*}

So, we obtain two cases.
\begin{description}
\item[$q=p$.]
By the equality (\ref{no})
\begin{align*}
P(2^{s-1},p)&=5\cdot2^{s-1}+2^{(s-1)+12-5q}\cdot\frac{2^{5q-10}-1}{31}=\\
&=5\cdot2^{s-1}+2^{s+11-5p}\cdot\frac{2^{5p-10}-1}{31}=\\
&=5\cdot2^{s-1}+2^{s+11-s-10+r}\cdot\frac{2^{s+10-r-10}-1}{31}=\\
&=5\cdot2^{s-1}+2^{1+r}\cdot\frac{2^{s-r}-1}{31}.
\end{align*}
\item[$q=p-1$.]
In this case $r=-2$ and $t=2$, and by the equality (\ref{no}) we obtain
\[
P(2^{s-1},p-1)=5\cdot2^{s-1}+2^{2+2}\cdot\frac{2^{(s-1)-2}-1}{31}=5\cdot2^{s-1}+2^4\cdot\frac{2^{s-3}-1}{31}.
\]
\end{description}
\end{proof}

\begin{lemma}\label{l:dif}
Suppose that
\begin{align*}
s&\equiv r\pmod{5}\text{ for the suitable }r\in\{-2,-1,0,1,2\},\\
p&=2+\frac{s-r}{5}=2+\left[\frac{s}{5}\right].
\end{align*}
If $r\in\{-1,0,1,2\}$, then
\[
P(2^s,p)-P(2^{s-1},p)=5\cdot2^{s-1}+2^{1+r}\cdot\frac{2^{s-r}-1}{31}.
\]

In the case $r=-2$, we have on the first nesting level of stage $p-1$
\[
P(2^s,p)-P(2^{s-1},p-1)=5\cdot2^{s-1}+2^4\cdot\frac{2^{s-3}-1}{31}.
\]
\end{lemma}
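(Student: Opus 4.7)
The plan is to substitute the closed-form expressions for $P(2^s,p)$ and $P(2^{s-1},p)$ (or $P(2^{s-1},p-1)$) already derived in Lemma \ref{l:finp} and Lemma \ref{l:finp(s-1)} into the left-hand side of each claimed identity, then simplify algebraically. All the substantive work --- matching the stopping stage to $s$, locating the first nesting level, and accumulating geometric-series pieces across stages --- has been done in those two lemmas. What remains is bookkeeping, which splits cleanly by the value of $r$.

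For $r \in \{-1,0,1,2\}$, both $2^s$ and $2^{s-1}$ cause the computation of $v^1$ to stop at the same stage $p = 2 + \lfloor s/5 \rfloor$, and Lemma \ref{l:finp(s-1)} gives $P(2^{s-1},p) = 5\cdot 2^{s-1} + 2^{1+r}\cdot\frac{2^{s-r}-1}{31}$, which has precisely the same shape as the formula from Lemma \ref{l:finp} for $P(2^s,p)$ but with coefficient $2^{1+r}$ in place of $2^{2+r}$. Subtracting, the leading terms contribute $5\cdot 2^s - 5\cdot 2^{s-1} = 5\cdot 2^{s-1}$, and the fractional terms share the factor $\frac{2^{s-r}-1}{31}$ with coefficient difference $2^{2+r} - 2^{1+r} = 2^{1+r}$, which gives the claimed identity immediately.

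For $r = -2$ the stopping stage for $2^{s-1}$ drops by one to $p-1$, and Lemma \ref{l:finp(s-1)} supplies the specific formula $P(2^{s-1},p-1) = 5\cdot 2^{s-1} + 2^4\cdot\frac{2^{s-3}-1}{31}$, while Lemma \ref{l:finp} at $r=-2$ yields $P(2^s,p) = 5\cdot 2^s + \frac{2^{s+2}-1}{31}$. The main obstacle sits here: the two fractional pieces come from different branches of Lemma \ref{l:finp(s-1)} (the $r=-2$ branch for $s$ and the $r'=2$ branch after relabeling for $s-1$), so their coefficients $2^0$ and $2^4$ together with numerators $2^{s+2}-1$ and $2^{s-3}-1$ do not share the convenient common factor of the previous case. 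I would combine them over the common denominator $31$, using the identities $2^{s+2} = 8\cdot 2^{s-1}$ and $2^4\cdot 2^{s-3} = 2^{s+1}$ to rewrite the combined numerator as a clean expression in $2^{s+1}$, and then read off the stated right-hand side of the lemma.
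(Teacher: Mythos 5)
Your approach is exactly the paper's: substitute the closed forms for $P(2^s,p)$ and $P(2^{s-1},\cdot)$ obtained in Lemma \ref{l:finp} and Lemma \ref{l:finp(s-1)} and subtract. For $r\in\{-1,0,1,2\}$ your computation is correct and coincides with the paper's, line for line: the leading terms give $5\cdot2^{s-1}$ and the common factor $\frac{2^{s-r}-1}{31}$ picks up the coefficient $2^{2+r}-2^{1+r}=2^{1+r}$.

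The gap is in the $r=-2$ case, where you assert that combining the two fractions over the denominator $31$ lets you ``read off the stated right-hand side of the lemma.'' It does not. Carrying out your own plan,
\[
P(2^s,p)-P(2^{s-1},p-1)=5\cdot2^{s-1}+\frac{(2^{s+2}-1)-16\,(2^{s-3}-1)}{31}
=5\cdot2^{s-1}+\frac{2^{s+1}+15}{31},
\]
and since $2^{s+1}+15=16\,(2^{s-3}-1)+31$ this equals
\[
5\cdot2^{s-1}+1+2^4\cdot\frac{2^{s-3}-1}{31},
\]
which exceeds the right-hand side stated in the lemma by exactly $1$. A numerical check confirms this: for $s=8$ (so $r=-2$, $p=4$) Lemma \ref{l:finp} gives $P(2^8,4)=1280+33=1313$ and $P(2^7,3)=640+16=656$, so the true difference is $657$, whereas the lemma's formula evaluates to $640+16=656$. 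The paper's own proof in fact terminates with the expression containing the extra $+1$, so the discrepancy lies in the printed statement of the lemma rather than in the method; but your proposal claims the algebra confirms the printed formula, and it does not. You should either carry the $+1$ explicitly in your final expression or flag the statement of the $r=-2$ case as off by one.
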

\begin{proof}
Let $r\in\{-1,0,1,2\}$.
By the equality (\ref{no}) and Lemma \ref{l:finp(s-1)} we get
\begin{align*}
P(2^s,p)-P(2^{s-1},p-1)&=5\cdot2^s+2^{2+r}\cdot\frac{2^{s-r}-1}{31}-\\
&\hspace*{19pt}-5\cdot2^{s-1}-2^{1+r}\cdot\frac{2^{s-r}-1}{31}=\\
&=5\cdot2^{s-1}+(2^{2+r}-2^{1+r})\frac{2^{s-r}-1}{31}=\\
&=5\cdot2^{s-1}+2^{1+r}\cdot\frac{2^{s-r}-1}{31}.
\end{align*}
Moreover, we can compute more exactly
\[
P(2^s,p)-P(2^{s-1},p)=5\cdot2^{s-1}+
\begin{cases}
8\cdot\dfrac{2^{s-2}-1}{31}&\text{if }r=2;\\[5pt]
4\cdot\dfrac{2^{s-1}-1}{31}&\text{if }r=1;\\[5pt]
2\cdot\dfrac{2^s-1}{31}&\text{if }r=0;\\[5pt]
\dfrac{2^{s+1}-1}{31}&\text{if }r=-1.
\end{cases}
\]

If $r=-2$, then
\begin{align*}
P(2^s,p)-P(2^{s-1},p-1)&=5\cdot2^s+2^0\cdot\frac{2^{s+2}-1}{31}-5\cdot2^{s-1}-2^4\cdot\frac{2^{s-3}-1}{31}=\\
&=5\cdot2^{s-1}+\frac{2^{s+2}-1-2^{s+1}+16}{31}=\\
&=5\cdot2^{s-1}+\frac{2^{s+1}+15}{31}=\\
&=5\cdot2^{s-1}+\frac{16\cdot(2^{s-3}-1)+31}{31}=\\
&=5\cdot2^{s-1}+1+16\cdot\frac{2^{s-3}-1}{31}.
\end{align*}
\end{proof}

\begin{lemma}\label{l:difner}
Suppose that
\begin{align*}
s&\equiv r\pmod{5}\text{ for the suitable }r\in\{-2,-1,0,1,2\},\\
p&=2+\frac{s-r}{5}=2+\left[\frac{s}{5}\right],\\
d(s-1)&=41\cdot2^{s-4}\text{ (cf. Corollary {\rm\ref{c:est}})}.
\end{align*}
If $r\in\{-1,0,1,2\}$, then
\begin{multline*}
41\cdot2^{s-4}<P(2^{s-1},p)-P(2^{s-1},p)<21\cdot2^{s-3}-2^{1+r}\leq\\
\leq21\cdot2^{s-3}-1<6\cdot2^{s-1}{\iff}\\
{\iff}d(s-1)<P(2^{s-1},p)-P(2^{s-1},p)<d(s-1)+2^{s-4}-2^{1+r}.
\end{multline*}
In the case $r=-2$, we have on the first nesting level of stage $p-1$
\[
\begin{aligned}
41\cdot2^{s-4}&<P(2^s,p)-P(2^{s-1},p-1)<21\cdot2^{s-3}-16\iff\\
\iff d(s-1)&<P(2^{s-1},p)-P(2^{s-1},p)<d(s-1)+2^{s-4}-16.
\end{aligned}
\]
\end{lemma}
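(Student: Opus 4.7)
My plan is to combine the exact formulas proved in Lemma \ref{l:dif} with the geometric-series estimate
\[
2^{m-5}<\frac{2^{m}-1}{31}<2^{m-4}-1,
\]
which was justified in the proof of Corollary \ref{c:est} for every positive multiple of $5$ with $m\geq 10$. The parity condition $s\equiv r\pmod{5}$ together with $s\geq 9$ ensures that the relevant exponents (namely $s-r$ for $r\in\{-1,0,1,2\}$ and $s-3$ for $r=-2$) always land in this regime, so the estimate is applicable throughout.

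For each of the four cases $r\in\{-1,0,1,2\}$, I would simply substitute the two bounds into the identity
\[
P(2^s,p)-P(2^{s-1},p)=5\cdot2^{s-1}+2^{1+r}\cdot\frac{2^{s-r}-1}{31}
\]
supplied by Lemma \ref{l:dif}. The lower substitution collapses to $41\cdot 2^{s-4}=d(s-1)$ uniformly in $r$, while the upper substitution produces $21\cdot 2^{s-3}-2^{1+r}$. The remaining chain then follows from $2^{1+r}\geq 1$ and $21\cdot 2^{s-3}-1<24\cdot 2^{s-3}=6\cdot 2^{s-1}$, and the reformulation in terms of $d(s-1)$ uses the elementary identity $d(s-1)+2^{s-4}=42\cdot 2^{s-4}=21\cdot 2^{s-3}$.

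In the remaining case $r=-2$, the exact formula from Lemma \ref{l:dif} has a slightly different shape,
\[
P(2^s,p)-P(2^{s-1},p-1)=5\cdot2^{s-1}+1+16\cdot\frac{2^{s-3}-1}{31},
\]
and I would apply the same bound with $m=s-3$. The lower substitution again gives strictly more than $41\cdot 2^{s-4}$, and the upper substitution gives $21\cdot 2^{s-3}-15$; integrality of $\frac{2^{5q}-1}{31}$ (since $31=2^5-1$ divides $2^{5q}-1$) then sharpens this to the claimed $21\cdot 2^{s-3}-16$.

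The main obstacle I anticipate is not conceptual but the bookkeeping around the stage shift in the $r=-2$ case, where the computation for $2^{s-1}$ stops one stage earlier than for $2^s$: once Lemma \ref{l:dif} has pinned down the correct fraction $\frac{2^{s-3}-1}{31}$ to bound, the rest is a direct reprise of the geometric-series computation already used for Corollary \ref{c:est}.
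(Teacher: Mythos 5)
Your proposal follows the paper's proof almost verbatim: both arguments combine the exact difference formulas of Lemma \ref{l:dif} with the two-sided estimate (\ref{e:estfrac}) from Corollary \ref{c:est}, and for $r\in\{-1,0,1,2\}$ your substitutions, together with the identity $d(s-1)+2^{s-4}=42\cdot2^{s-4}=21\cdot2^{s-3}$ and the bound $21\cdot2^{s-3}-1<24\cdot2^{s-3}=6\cdot2^{s-1}$, reproduce the paper's chain exactly. The only divergence is in the case $r=-2$: the paper substitutes into the expression $5\cdot2^{s-1}+2^4\cdot\frac{2^{s-3}-1}{31}$ as written in the statement of Lemma \ref{l:dif}, whereas you work from the expression $5\cdot2^{s-1}+1+16\cdot\frac{2^{s-3}-1}{31}$ actually derived in that lemma's proof (these differ by $1$, an inconsistency internal to Lemma \ref{l:dif}), and so you must absorb the extra unit. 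Your stated sharpening is slightly too quick: knowing that an integer is $<21\cdot2^{s-3}-15$ only yields $\leq21\cdot2^{s-3}-16$, not the strict inequality the lemma asserts. The gap closes immediately, though, by using integrality where you actually invoke it: since $\frac{2^{s-3}-1}{31}$ is an integer strictly below $2^{s-7}-1$, it is at most $2^{s-7}-2$, whence the difference is at most $21\cdot2^{s-3}-31<21\cdot2^{s-3}-16$. Apart from this bookkeeping the two arguments are identical.
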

\begin{proof}
Let $r\in\{-1,0,1,2\}$.
By the inequality (\ref{e:estfrac}) from Corollary \ref{c:est} and Lemma \ref{l:dif}
\begin{multline*}
\hspace*{-10pt}5\cdot2^{s-1}+2^{1+r}\cdot2^{s-r-5}<P(2^{s-1},p)-P(2^{s-1},p)<5\cdot2^{s-1}+2^{1+r}\cdot(2^{s-r-4}-1)\iff\\
\iff5\cdot2^{s-1}+2^{s-4}<P(2^{s-1},p)-P(2^{s-1},p)<5\cdot2^{s-1}+2^{s-3}-2^{1+r}\iff\\
\iff2^{s-4}(5\cdot2^3+1)<P(2^{s-1},p)-P(2^{s-1},p)<2^{s-3}(5\cdot2^2+1)-2^{1+r}\iff\\
\iff41\cdot2^{s-4}<P(2^{s-1},p)-P(2^{s-1},p)<21\cdot2^{s-3}-2^{1+r}\leq21\cdot2^{s-3}-1.
\end{multline*}
Also
\[
21\cdot2^{s-3}-1<24\cdot\cdot2^{s-3}=6\cdot2^{s-1}.
\]

In the case $r=-2$, we obtain by the inequality (\ref{e:estfrac}) and Lemma \ref{l:dif}
\begin{multline*}
5\cdot2^{s-1}+2^4\cdot2^{s-3-5}<P(2^s,p)-P(2^{s-1},p-1)<5\cdot2^{s-1}+2^4(2^{s-3-4}-1)\iff\\
5\cdot2^{s-1}+2^{s-4}<P(2^s,p)-P(2^{s-1},p-1)<5\cdot2^{s-1}+2^{s-3}-16\iff\\
41\cdot2^{s-4}<P(2^s,p)-P(2^{s-1},p-1)<21\cdot2^{s-3}-16.
\end{multline*}
\end{proof}

\begin{proposition}\label{p:gest}
Let $s\geq9$ be the integer such that
\[
2^{s-1}<KJ<2^s.
\]
Also, suppose that
\begin{align*}
s&\equiv r\pmod{5}\text{ for the suitable }r\in\{-2,-1,0,1,2\},\\
p&=2+\frac{s-r}{5}=2+\left[\frac{s}{5}\right],\\
d(s-1)&=41\cdot2^{s-4}\text{ (cf. Corollary {\rm\ref{c:est}})}.
\end{align*}
Let $r\in\{-1,0,1,2\}$.
Then
\begin{gather*}
P(2^{s-1},p)=5\cdot2^{s-1}+2^{1+r}\cdot\frac{2^{s-r}-1}{31}< P_{\mathcal{J}}(KJ)<\\
<5\cdot2^s+2^{2+r}\cdot\frac{2^{s-r}-1}{31}=P(2^s,p),
\intertext{in addition,}
P(2^s,p)-P(2^{s-1},p)=5\cdot2^{s-1}+2^{1+r}\cdot\frac{2^{s-r}-1}{31},\\
41\cdot2^{s-4}<P(2^{s-1},p)-P(2^{s-1},p)<\\
<21\cdot2^{s-3}-2^{1+r}\leq21\cdot2^{s-3}-1<6\cdot2^{s-1}\iff\\
\iff d(s-1)<P(2^{s-1},p)-P(2^{s-1},p)<d(s-1)+2^{s-4}-2^{1+r}.
\end{gather*}

In the case $r=-2$, we have
\begin{gather*}
P(2^{s-1},p-1){=}5\cdot2^{s-1}+\frac{2^{s+1}-16}{31}< P_{\mathcal{J}}(KJ)<5\cdot2^s+\frac{2^{s+2}-1}{31}{=}P(2^s,p),
\intertext{in addition,}
P(2^s,p)-P(2^{s-1},p-1)=5\cdot2^{s-1}+2^4\cdot\frac{2^{s-3}-1}{31},\\
41\cdot2^{s-4}<P(2^{s-1},p)-P(2^{s-1},p)<21\cdot2^{s-3}-16<6\cdot2^{s-1}\iff\\
\iff d(s-1)<P(2^{s-1},p)-P(2^{s-1},p)<d(s-1)+2^{s-4}-16.
\end{gather*}
\end{proposition}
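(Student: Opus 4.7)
The plan is to assemble previously established ingredients rather than perform any new analysis: the strict monotonicity of $P_{\mathcal{J}}(KJ)$ from Corollary~\ref{c:inc}, the closed-form stage-$p$ counts from Lemma~\ref{l:finp(s-1)} (combined with Proposition~\ref{p:wJ2}), the explicit difference formula from Lemma~\ref{l:dif}, and the associated bounds from Lemma~\ref{l:difner}. The proposition is essentially a consolidation statement, so the proof will be a short case split followed by direct substitution.

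First I would fix $s\geq 9$ with $2^{s-1}<KJ<2^s$, and $r\in\{-2,-1,0,1,2\}$ with $s\equiv r\pmod{5}$ and $p=2+(s-r)/5$. Corollary~\ref{c:inc} gives that $P_{\mathcal{J}}$ is strictly increasing in its argument, so immediately
\[
P_{\mathcal{J}}(2^{s-1})<P_{\mathcal{J}}(KJ)<P_{\mathcal{J}}(2^s).
\]
What remains is to identify these endpoints with the printed expressions and to import the difference estimates. For the case $r\in\{-1,0,1,2\}$, Lemma~\ref{l:finp(s-1)} asserts that the computation of $v^1$ for $KJ=2^{s-1}$ terminates at the same stage $p$ that Lemma~\ref{l:finp} supplies for $KJ=2^s$, and gives $P(2^{s-1},p)=5\cdot 2^{s-1}+2^{1+r}(2^{s-r}-1)/31$; meanwhile Proposition~\ref{p:wJ2} gives $P_{\mathcal{J}}(2^s)=P(2^s,p)=5\cdot 2^s+2^{2+r}(2^{s-r}-1)/31$. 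Substituting into the sandwich produces the stated upper and lower bounds on $P_{\mathcal{J}}(KJ)$. The difference formula then is precisely Lemma~\ref{l:dif}, and the chain $d(s-1)<P(2^s,p)-P(2^{s-1},p)<d(s-1)+2^{s-4}-2^{1+r}\leq 21\cdot 2^{s-3}-1<6\cdot 2^{s-1}$ is the first part of Lemma~\ref{l:difner}.

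For the residual case $r=-2$ the only genuine change is that by Lemma~\ref{l:finp(s-1)} the computation for $KJ=2^{s-1}$ stops one stage earlier, at $p-1$, yielding $P(2^{s-1},p-1)=5\cdot 2^{s-1}+2^{4}(2^{s-3}-1)/31$, which simplifies algebraically to $5\cdot 2^{s-1}+(2^{s+1}-16)/31$ and matches the printed form; the corresponding difference formula and its estimate come from the $r=-2$ clauses of Lemmas~\ref{l:dif} and~\ref{l:difner}. The main (and only real) obstacle is bookkeeping: tracking the two indices $p$ and $p-1$ correctly, and verifying that the algebraic simplifications $2^{1+r}\cdot(2^{s-r}-1)/31$ and $2^{4}\cdot(2^{s-3}-1)/31$ line up with the forms in the statement. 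Since every arithmetic identity needed has already been recorded in the cited lemmas, no new estimation is required beyond quoting them.
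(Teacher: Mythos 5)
Your proposal is correct and follows essentially the same route as the paper, which proves the proposition simply by citing Lemma~\ref{l:p4} for the sandwich $P_{\mathcal{J}}(2^{s-1})<P_{\mathcal{J}}(KJ)<P_{\mathcal{J}}(2^s)$ together with Lemmas~\ref{l:finp(s-1)}, \ref{l:dif}, and \ref{l:difner} for the endpoint formulas, the difference formula, and the estimates. Your substitution of Corollary~\ref{c:inc} (strict monotonicity) for Lemma~\ref{l:p4} as the source of the sandwich is an equally valid, essentially equivalent citation, and your reading of the garbled differences in the statement as $P(2^s,p)-P(2^{s-1},p)$ matches the paper's intent.
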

\begin{proof}
The statement of Proposition is hold by Lemmas \ref{l:p4}, \ref{l:finp(s-1)}--\ref{l:difner}.
\end{proof}

\subsection{Final Theorem}

\begin{theorem}\label{th:T2}
Suppose that $\mathcal{S}$ (the scalar product), $\mathcal{G}$ (the Gauss--Jordan) and $\mathcal{J}$ (the Jacobi) are previously analyzed algorithms.
\begin{enumerate}
\item
The $Q$-effective implementation of each algorithm $\mathcal{S}$, $\mathcal{G}$, and $\mathcal{J}$ is realizable.
\item
For the height and width of these algorithms, we have the following.
\begin{description}
\item[Scalar Product.]
For the algorithm $\mathcal{S}$
\[
D_\mathcal{S}=\left\lceil\log n\right\rceil+1\text{ and }P_\mathcal{S}=n.
 \]
\item[Gauss--Jordan.]
For the described algorithm $\mathcal{G}$
\[
D_\mathcal{G}=3n\text{ and }P_\mathcal{G}\geq\frac{3}{2}(n+1)!
 \]
for $n\geq2$.
\item[Jacobi.]
For the described algorithm $\mathcal{J}$
\[
D_\mathcal{J}=5n_0+3+\lceil\log{KJ}\rceil\text{ if $n_0$ such that the value of $v^{n_0}$ is  \textsf{true}}.
\]

If $KJ\in\{1,2,\dots,255\}$, then the width of $\mathcal{J}$ is equal to
\[
P_\mathcal{J}=5KJ+\lfloor KJ/8\rfloor.
\]
If $KJ\geq256$, then for a suitable $p\geq4$ we have
\[
KJ=a+8b_0+(8\times32)b_1+\dots+(8\times32^{p-3})b_{p-3},
\]
where $a\in\{0,\dots,7\}$, $b_i\in\{0,\dots,31\}$ for every $i\in\{0,1,\dots,p-4\}$, and $b_{p-3}\in\{1,2,\dots,31\}$.
Then the width of the algorithm $\mathcal{J}$ is equal to
\[
P_{\mathcal{J}}=5\cdot KJ+c_0+c_1+\dots+c_{p-3},
\]
where for every $i\in\{0,1,\dots,p-3\}$
\[
c_i=b_i+32b_{i+1}+\dots+32^{p-3-i}b_{p-3}.
\]
\end{description}
\end{enumerate}
\end{theorem}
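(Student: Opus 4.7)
The proof is essentially a compilation of statements established earlier in the paper, so the plan is to cite the right results in the right order rather than to prove anything new. For part (1), realizability of $\mathcal{S}$ and $\mathcal{G}$ follows from Theorem \ref{th:real}(1) since in both cases $I=\emptyset$ (this is exactly how Proposition \ref{p:s} and Lemma \ref{l:realG} were obtained), while realizability of $\mathcal{J}$ is Lemma \ref{l:realJ}, which in turn invokes Theorem \ref{th:real}(3) using the fact that the expressions $u_j^i(\bar{N}), w_j^i(\bar{N})$ have only finitely many operations at each nesting level.

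For part (2), I would handle the three algorithms separately. The $\mathcal{S}$-bounds $D_\mathcal{S}=\lceil\log n\rceil+1$ and $P_\mathcal{S}=n$ are exactly the content of Proposition \ref{p:s}, obtained by applying the doubling scheme to the single chain $\sum_{i=1}^n a_i^1a_i^2$. The $\mathcal{G}$-bounds $D_\mathcal{G}=3n$ and $P_\mathcal{G}\geq\tfrac{3}{2}(n+1)!$ are the content of Proposition \ref{p:G}, which combines the nesting-level bookkeeping (\ref{f:nl}), the formulas (\ref{f:w}) and (\ref{uiGJ}) for $w_p^j$ and $u_p$, and a direct count of first-level operations across the $n!$ permutations. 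For $\mathcal{J}$, the height formula $D_\mathcal{J}=5n_0+3+\lceil\log KJ\rceil$ is Lemma \ref{l:realJ}, and the width formulas are delivered by Lemma \ref{l:p23} in the regime $KJ\leq 255$ and by Proposition \ref{p:wJg} in the regime $KJ\geq 256$.

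The only point that requires more than a bare citation is reconciling the unified statement for $KJ\in\{1,\dots,255\}$ with the two-case split in Lemma \ref{l:p23}: I would note that for $KJ\in\{1,\dots,7\}$ we have $\lfloor KJ/8\rfloor=0$, so the formula $5KJ+\lfloor KJ/8\rfloor$ collapses to $5KJ$ and agrees with the first case of Lemma \ref{l:p23}, while for $KJ\in\{8,\dots,255\}$ it reproduces the $5KJ+t$ expression with $t=\lfloor KJ/8\rfloor$. This bookkeeping check is the only obstacle worth mentioning; otherwise the argument is simply the concatenation of Propositions \ref{p:s}, \ref{p:G}, \ref{p:wJg} and Lemmas \ref{l:realG}, \ref{l:realJ}, \ref{l:p23}.
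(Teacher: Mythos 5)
Your proposal is correct and matches the paper's own proof, which is exactly the one-line citation of Propositions \ref{p:s}, \ref{p:G}, \ref{p:wJg} and Lemmas \ref{l:realJ} and \ref{l:p23}. Your extra remark reconciling the unified formula $5KJ+\lfloor KJ/8\rfloor$ with the two-case split of Lemma \ref{l:p23} is a sensible bookkeeping check that the paper leaves implicit.
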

\begin{proof}
It follows from Propositions \ref{p:s}, \ref{p:G}, and \ref{p:wJg}, Lemmas \ref{l:realJ} and \ref{l:p23}.
\end{proof}

\begin{remark}
Note that for the algorithm $\mathcal{J}$, the case $KJ=2^s$ is considered in \ref{sss:w2}.
In this case, we obtain a simpler value for the width of the algorithm $\mathcal{J}$.
Furthermore, in \ref{sss:est} we estimate the width of the algorithm $\mathcal{J}$ from the width values for powers of 2, see Proposition \ref{p:gest}.
It seems to us that these results can be very useful for estimating the width behavior for large $KJ$.
\end{remark}
 
 \section{The $Q$-system}

In this section, we describe the development of a software system, called the $Q$-system, to study the parallelism resources of numerical algorithms that can be computed and compared.

\subsection{Theoretical foundation of the $Q$-system}

The concept of a $Q$-determinant makes it possible to use the following methods for studying the parallelism resource of numerical algorithms.
\begin{enumerate}
\item
A method for constructing the $Q$-determinant of the algorithm.
\item
A method of obtaining the $Q$-effective implementation of the algorithm.
\item
A method for computing the characteristics of the parallelism resource of the algorithm.
\item
A method for comparing the characteristics of the parallelism resource of algorithms.
\end{enumerate}

\subsubsection{A method of constructing the $Q$-determinant of the algorithm}\label{MCQ}

Flowcharts of algorithms are well known and often used.
Here we construct the $Q$-determinants of the algorithms using flowcharts of the algorithms.

We use the following blocks of the flowchart.
\begin{enumerate}
\item
\textsf{Terminal.}
It displays input from the external environment or output to the external environment.
\item
\textsf{Decision.}
It displays a switching type function that has one input and several alternative outputs, one and only one of which can be activated after computing the condition defined in the block.
\item
\textsf{Process.}
It displays a data processing function.
\item
\textsf{Data.}
\end{enumerate}

\begin{description}
\item[Limitations and Clarifications.]
There are some limitations and clarifications for the used flowcharts.
The flowchart has two terminal blocks, one of which is ``Start'' means the beginning of the algorithm, and the other ``End'' is the end of the algorithm.
The ``Start'' block has one outgoing flowline and has no incoming flowline.
The ``End'' block has one incoming flowline and no outgoing flowline.
The ``decision'' block has two outgoing flowlines, one of which corresponds to the transfer of control if the block condition is true, and the other if it is false.
The condition uses one comparison operation.
Its operands don't contain operations.
The ``process'' block contains one assignment of a variable value.
There is no more than one operation to the right of the assignment sign.
So, block chains are used for groups of operations.
We divide the ``data'' blocks into the blocks ``Input data'' and ``Output data''.
\item[Construction of the $Q$-determinant.]\ 
To construct the $Q$-determinant of the algorithm, we analyze the flowchart for fixed values of the dimension parameters of the algorithmic problem
$\bar{N}=\{\bar{n}_1,\dots,\bar{n}_k\}$ if $N\neq\emptyset$.
We investigate the features of the algorithms in which the $Q$-determinants contain various types of $Q$-terms.

If the $Q$-determinant of the algorithm has unconditional $Q$-terms, then the passage through the flowchart should be carried out sequentially in accordance with the execution of the algorithm.

For every $i\in U$ expressions $w^i$ if $N=\emptyset$ and $w^i(\bar{N})$ if $N\neq\emptyset$ are obtained as values of unconditional $Q$-terms $f_i$ (cf. (\ref{qdd})).
They are formed using the contents of the ``process'' blocks involved in the computation of $f_i$ for every $i\in U$.

Suppose that the $Q$-determinant contains conditional $Q$-terms.
The passage through the block ``decision'' with a condition on the input data generates a branching.

Then each branch is processed separately.
As a result of processing one branch, for every $i\in C$ and suitable $j\in\{1,\dots,l(i)\}$ we obtain the pairs of expressions $(u_j^i,w_j^i)$ if $N=\emptyset$ and $(u_j^i(\bar{N}),w_j^i(\bar{N}))$ if $N\neq\emptyset$ (cf. (\ref{qdd})).

The expressions $u_j^i$ if $N=\emptyset$ and $u_j^i(\bar{N})$ if $N\neq\emptyset$ are obtained by the conditions of the ``decision'' blocks that contain the input data.
The expressions $w_j^i$ if $N=\emptyset$ and $w_j^i(\bar{N})$ if $N\neq\emptyset$ are formed using the contents of the ``process'' blocks.
As soon as the first branch is processed, the flowchart handler returns to the nearest block, where the branching occurred and continues processing with the opposite condition.
After all branches are processed, for all $i\in C$ and $j\in\{1,\dots,l(i)\}$ we obtain the pairs of expressions $(u_j^i,w_j^i)$ if $N=\emptyset$ and $(u_j^i(\bar{N}),w_j^i(\bar{N}))$ if $N\neq\emptyset$ for all conditional $Q$-terms $f_i$, 
where $i\in C$ (cf. (\ref{qdd})).

Conditional infinite $Q$-terms are contained in the $Q$-determinants of iterative numerical algorithms.
By limiting the number of iterations, we can reduce the case of the $Q$-determinant with conditional infinite $Q$-terms to the case of the $Q$-determinant with conditional $Q$-terms.
We will use the parameter $L$ for the number of iterations.
This parameter can take any integer positive values.
We denote the value of the parameter $L$ by $\bar{L}$.
The value $\bar{L}$ determines the length of the $Q$-terms $f_i$ for every $i\in I$.
\end{description}

\subsubsection{A method of obtaining the $Q$-effective implementation of the algorithm}

By the $Q$-effective implementation of the algorithm, we mean the simultaneous computation of the expressions $W$ if $N =\emptyset$ (cf. (\ref{W})) and  the expressions $W(\bar{N})$ if $N\neq\emptyset$ (cf. (\ref{WN})) and operations are executed as soon as their operands are computed.

If $N=\emptyset$ and $I\neq\emptyset$, then instead of the expressions $W$ we use the expressions
\begin{multline}\label{WL}
W(\bar{L})=\bigl\{w^i (i\in U); u_j^i,w_j^i (i\in C,j\in\{1,\dots,l(i)\});\\ 
u_j^i,w_j^i (i\in I, j\in\{1,\dots,\bar{L}\})\bigr\}.
\end{multline}
If $N\neq\emptyset$ and $I\neq\emptyset$, then in place of the expressions $W(\bar{N})$ we use the expressions
\begin{multline}\label{WNL}
W(\bar{N},\bar{L})=\bigl\{w^i(\bar{N}) (i\in U); u_j^i(\bar{N}),w_j^i(\bar{N}) (i\in C, j\in\{1,\dots ,l(i)\});\\
u_j^i(\bar{N}),w_j^i(\bar{N}) (i\in I, j\in\{1,\dots,\bar{L}\})\bigr\}.
\end{multline}
The method of obtaining the $Q$-effective implementation of the algorithm is to compute the nesting levels of the operations included in the expressions:
\begin{enumerate}
\item
$W$ if $N=\emptyset$ and $I=\emptyset$;
\item
$W(\bar{L})$ if $N=\emptyset$ and $I\neq\emptyset$;
\item
$W(\bar{N})$ if $N\neq\emptyset$ and $I=\emptyset$;
\item
$W(\bar{N},\bar{L})$ if $N\neq\emptyset$ and $I\neq\emptyset$.
\end{enumerate}

\subsubsection{A method for computing the characteristics of the parallelism resource of the algorithm}

Computation of the parallelism resource characteristics of the algorithm $\mathcal{A}$ is performed according to the following rules.
\begin{enumerate}
\item
If $N=\emptyset$ and $I=\emptyset$, then the height and width of the algorithm $\mathcal{A}$ are independent of the parameters $N$ and $L$.
The height of the algorithm is denoted by $D_\mathcal{A}$ and is computed by the formula (\ref{D}).
The width of the algorithm is denoted by $P_\mathcal{A}$ and is computed by the formula (\ref{P}).
\item
If $N=\emptyset$ and $I\neq\emptyset$, then the height and width of the algorithm $\mathcal{A}$ depend on the parameter $L$.
The value of the algorithm height is denoted by $D_\mathcal{A}(\bar{L})$ and is computed by the formula
\begin{flalign}
D_\mathcal{A}(\bar{L})&=\max\limits_{w\in W(\bar{L})}T^w.\label{DL}
\end{flalign}
The value of the algorithm width is denoted by $P_\mathcal{A}(\bar{L})$ and is computed by the formula
\begin{flalign}
P_\mathcal{A}(\bar{L})&=\max\limits_{1\leq r\leq D_\mathcal{A}(\bar{L})}\sum_{w\in W(\bar{L})}O_r^w\label{PL}
\end{flalign}
if $O_r^{w}$ is the number of operations of the nesting level $r$ of the expression $w$.
\item
If $N\neq\emptyset$ and $I=\emptyset$, then the height and width of the algorithm $\mathcal{A}$ depend on the parameters of $N$.
The value of the algorithm height is denoted by $D_\mathcal{A}(\bar{N})$ and is computed by the formula (\ref{DN}).
The value of the algorithm width is denoted by $P_\mathcal{A}(\bar{N})$ and is computed by the formula~(\ref{PN}).
\item
If $N\neq\emptyset$ and $I\neq\emptyset$, then the height and width of the algorithm $\mathcal{A}$ depend on the parameters $N$ and $L$.
The value of the algorithm height is denoted by $D_\mathcal{A}(\bar{N},\bar{L})$ and is computed by the formula
\begin{flalign}
D_\mathcal{A}(\bar{N},\bar{L})&=\max\limits_{w(\bar{N})\in W(\bar{N},\bar{L})}T^{w(\bar{N})}.\label{DNL}
\end{flalign}
The value of the algorithm width is denoted by $P_\mathcal{A}(\bar{N},\bar{L})$ and is computed by the formula
\begin{flalign}
P_\mathcal{A}(\bar{N},\bar{L})&=\max\limits_{1\leq r\leq D_\mathcal{A}({\bar{N},\bar{L})}}\sum_{w(\bar{N})\in W(\bar{N},\bar{L})}O_r^{w(\bar{N})}\label{PNL}
\end{flalign}
if $O_r^{w(\bar{N})}$ is the number of operations of the nesting level $r$ of the expression $w(\bar{N})$.
\end{enumerate}

\subsubsection{A method for comparing the characteristics of the parallelism resource of algorithms}

Suppose that the algorithms $\mathcal{A}$ and $\mathcal{B}$ solve the same algorithmic problem.
This method compares the height and width of the algorithms $\mathcal{A}$ and $\mathcal{B}$ if the algorithms satisfy one of the following four sets of conditions.
\begin{description}
\item[$N=\emptyset$ {\rm and} $I=\emptyset$.]
The values $D_\mathcal{A}$, $P_\mathcal{A}$, $D_\mathcal{B}$, and $P_\mathcal{B}$ are defined.
\item[$N=\emptyset$ {\rm and} $I\neq\emptyset$.]
The values of $D_\mathcal{A}(\bar{L})$ and $P_\mathcal{A}(\bar{L})$ are defined if $\bar{L}$ is an element of some set $V_\mathcal{A}(L)$.
The values of $D_\mathcal{B}(\bar{L})$ and $P_\mathcal{B}(\bar{L})$ are defined if $\bar{L}$ is an element of some set $V_\mathcal{B}(L)$.
Besides, $V_\mathcal{A}(L)\cap V_\mathcal{B}(L)\neq\emptyset$.
\item[$N\neq\emptyset$ {\rm and} $I=\emptyset$.]
The values of $D_\mathcal{A}(\bar{N})$ and $P_\mathcal{A}(\bar{N})$ are defined if $\bar{N}$ is an element of some set $V_\mathcal{A}(N)$.
The values of $D_\mathcal{B}(\bar{N})$ and $P_\mathcal{B}(\bar{N})$ are defined if $\bar{N}$ is an element of some set  $V_\mathcal{B}(N)$.
Besides, $V_\mathcal{A}(N)\cap V_\mathcal{B}(N)\neq\emptyset$.
\item[$N\neq\emptyset$ {\rm and} $I\neq\emptyset$.]
The values of $D_\mathcal{A}(\bar{N},\bar{L})$ and $P_\mathcal{A}(\bar{N},\bar{L})$ are defined if $(\bar{N},\bar{L})$ is an element of some set  $V_\mathcal{A}(N,L)$.
The values of $D_\mathcal{B}(\bar{N},\bar{L})$ and $P_\mathcal{B}(\bar{N},\bar{L})$ are defined if $(\bar{N},\bar{L})$ is an element of some set $V_\mathcal{B}(N,L)$.
Besides,\\
$V_\mathcal{A}(N,L)\cap V_\mathcal{B}(N,L)\neq\emptyset$.
\end{description}

By this method, we can perform the following computations.\label{Delta}
\begin{description}
\item[$N=\emptyset$ {\rm and} $I=\emptyset$.]
We find
\begin{align*}
\Delta{D}&=D_\mathcal{A}-D_\mathcal{B},\\
\Delta{P}&=P_\mathcal{A}-P_\mathcal{B}.
\end{align*}
\item[$N=\emptyset$ {\rm and} $I\neq\emptyset$.]
We get
\begin{align*}
\Delta{D}&=\sum_{\bar{L}\in (V_\mathcal{A}(L)\cap V_\mathcal{B}(L))}(D_\mathcal{A}(\bar{L})-D_\mathcal{B}(\bar{L})),\\
\Delta{P}&=\sum_{\bar{L}\in (V_\mathcal{A}(L)\cap V_\mathcal{B}(L))}(P_\mathcal{A}(\bar{L})-P_\mathcal{B}(\bar{L})).
\end{align*}
\item[$N\neq\emptyset$ {\rm and} $I=\emptyset$.]
We find
\begin{align*}
\Delta{D}&=\sum_{\bar{N}\in (V_\mathcal{A}(N)\cap V_\mathcal{B}(N))}(D_\mathcal{A}(\bar{N})-D_\mathcal{B}(\bar{N})),\\
\Delta{P}&=\sum_{\bar{N}\in (V_\mathcal{A}(N)\cap V_\mathcal{B}(N))}(P_\mathcal{A}(\bar{N})-P_\mathcal{B}(\bar{N})).
\end{align*}
\item[$N\neq\emptyset$ {\rm and} $I\neq\emptyset$.]
Finally,
\begin{align*}
\Delta{D}&=\sum_{(\bar{N},\bar{L})\in (V_\mathcal{A}(N,L)\cap V_\mathcal{B}(N,L))}(D_\mathcal{A}(\bar{N},\bar{L})-D_\mathcal{B}(\bar{N},\bar{L})),\\
\Delta{P}&=\sum_{(\bar{N},\bar{L})\in (V_\mathcal{A}(N,L)\cap V_\mathcal{B}(N,L))}(P_\mathcal{A}(\bar{N},\bar{L})-P_\mathcal{B}(\bar{N},\bar{L})).
\end{align*}
\end{description}

As a result, we can make the following conclusions.\label{VDelta}
\begin{description}
\item[$\Delta{D}<0$.]
Then the height of the algorithm $\mathcal{A}$ is less than the height of the algorithm $\mathcal{B}$.
\item[$\Delta{D}=0$.]
Then the algorithms $\mathcal{A}$ and $\mathcal{B}$ have the same height.
\item[$\Delta{D}>0$.]
Then the height of the algorithm $\mathcal{A}$ is greater than the height of the algorithm $\mathcal{B}$.
\item[$\Delta{P}<0$.]
Then the width of the algorithm $\mathcal{A}$ is less than the width of the algorithm $\mathcal{B}$.
\item[$\Delta{P}=0$.]
Then the algorithms $\mathcal{A}$ and $\mathcal{B}$ have the same width.
\item[$\Delta{P}>0$.]
Then the width of the algorithm $\mathcal{A}$ is greater than the width of the algorithm $\mathcal{B}$.
\end{description}

\subsection{A software implementation of the $Q$-system}

The $Q$-system consists of two subsystems: for generating the $Q$-determinants of numerical algorithms and for computing the parallelism resources of numerical algorithms.
The first subsystem realizes a method for constructing the $Q$-determinant of an algorithm based on its flowchart.
Other methods for studying the parallelism resource of numerical algorithms are realized in the second subsystem.

\subsubsection{Subsystem for generating the $Q$-determinants of numerical algorithms}

This subsystem is \textsf{.NET} application in object-oriented programming language \textsf{C\#}.
\textsf{Microsoft Visual Studio} was used as a development environment.
We used the data-interchange format JSON to describe the input and output data of the subsystem.
In accordance with the JSON format, the flowchart is a description of the \textsf{Vertices} blocks and the \textsf{Edges} connections.
The \textsf{Vertices} blocks are identified by the number \textsf{Id}, the type \textsf{Type} and the text content \textsf{Content}.
\textsf{Type} values of blocks \textsf{Vertices}:
\begin{enumerate}
\item[(0)]
$0$ is block ``Start'',
\item
$1$ is block ``End'',
\item
$2$ is block ``process'',
\item
$3$ is block ``decision'',
\item
$4$ is block ``Input data'',\label{blocks}
\item
$5$ is block ``Output data''.
\end{enumerate}
The \textsf{Edges} connectors are determined by the numbers of the starting \textsf{From} and ending \textsf{To} blocks and the type of connector \textsf{Type}.
The \textsf{Type} values of connectors \textsf{Edges} are the following:
\begin{enumerate}
\item[(0)]
$0$ is pass by condition ``false'',
\item
$1$ is pass by condition ``true'',
\item
$2$ is normal connection.
\end{enumerate}

In Fig.~1 there is a flowchart in the JSON format of an algorithm that implements the Gauss--Seidel method for solving systems of linear equations.
We use the following notation in this flowchart: $A$ is a coefficient matrix of a system of linear equations, $X$ is an unknown column,
$B$ is a column of constant terms, $X0$ is an initial approximation, $e$ is an accuracy of computations, \textsf{iterations} is a restriction on the number of iterations.

\begin{figure}[t]
\includegraphics[scale=0.68]{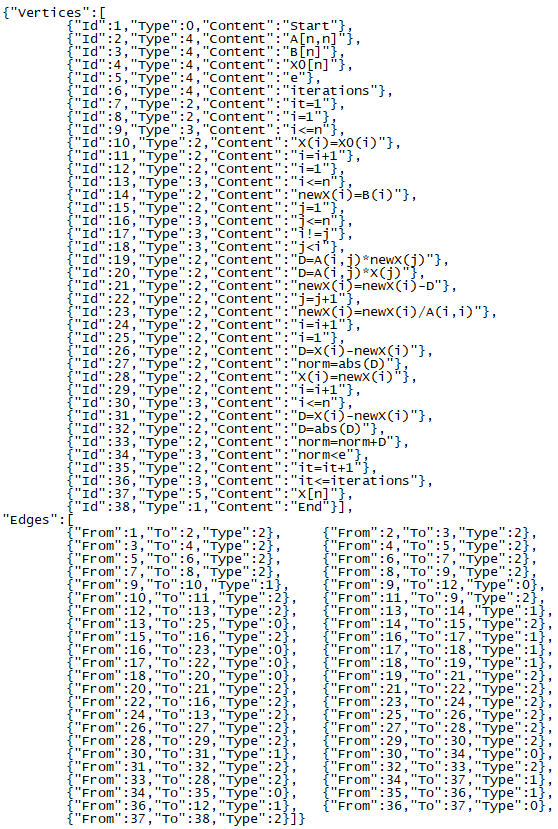}
\caption{The flowchart in the JSON format of an algorithm that implements the Gauss--Seidel method}
\end{figure}

\subsubsection*{Output File}\label{creationQdet}
Let us describe the structure of the output file of this subsystem.
For every $i\in C$ a conditional $Q$-term $f_i$ (cf. (\ref{qdd})) determines $l(i)$ lines of the file, where $l(i)$ is the length of the $Q$-term.
Each of these lines contains the identifier of the output variable computed using this $Q$-term, an equal sign, 
and one pair of expressions $(u_j^i,w_j^i)$ if $N=\emptyset$ or $(u_j^i(\bar{N}),w_j^i(\bar{N}))$ if $N\neq\emptyset$ for some $j\in\{1,\dots,l(i)\}$.
Here, for every $j\in\{1,\dots,l(i)\}$ $Q$-terms, such that $u_j^i$ is an unconditional logical $Q$-term, and $w_j^i$ is an unconditional $Q$-term.
The expressions $u_j^i,w_j^i$ if $N=\emptyset$, or $u_j^i(\bar{N}),w_j^i(\bar{N})$ if $N\neq\emptyset$ are described in the JSON format and separated by semicolons.
The description of the conditional infinite $Q$-term is similar to the description of the conditional $Q$-term, since the length of the conditional infinite $Q$-term is limited by the value of the parameter $L$.
For every $i\in U$ an unconditional $Q$-term $f_i$ (cf. (\ref{qdd})) determines one line of the file.
In this case, there is no logical $Q$-term, so a space is used instead.
Thus, the output file contains the representation of the algorithm in the form of a $Q$-determinant for fixed parameters of dimension $N$ and a limited number of iterations $L$.

In Fig.~2 there is the output file for one iteration of the Gauss--Seidel method if a matrix $A$ is of order 2.
The following notation is used for description of expressions in the JSON format: \textsf{op} is an operation, \textsf{fO} is the first operand (firstOperand), \textsf{sO} is the second operand (secondOperand), \textsf{od} is the operand.

\begin{figure}[t]
\includegraphics[width=\textwidth]{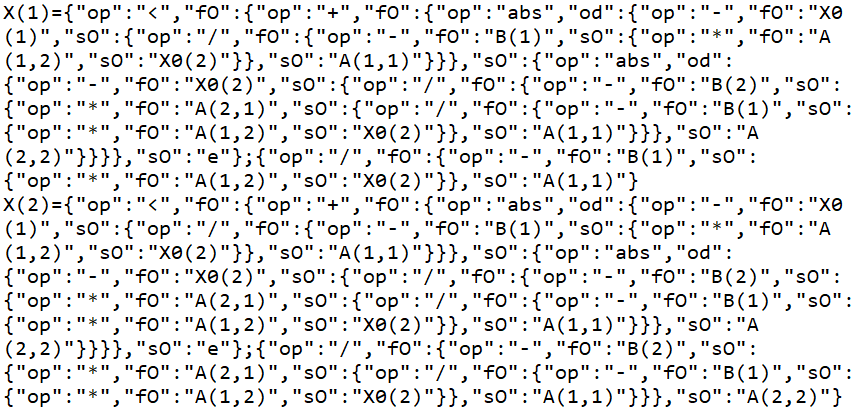}
\caption{The output file of the subsystem for generating the $Q$-determinants of numerical algorithms for the Gauss--Seidel method}
\end{figure}

All variables in the flowchart are divided into four categories:
\begin{enumerate}
\item
dimension parameters of an algorithmic problem,
\item
input,
\item
output,
\item
internal variables that don't belong to the first three categories.
\end{enumerate}
For example, we have the following categories for the flowchart in Fig.~1:
\begin{description}
\item[first category]
is $n$;
\item[second category]
are the variables $\left\{A(i,j), B(i), X0(i) \mid i,j\in\{1,\dots,n\}\right\}$, $e$, \newline\textsf{iterations};
\item[third category]
are the variables $X(i)$ for all $i\in\{1,\dots,n\}$;
\item[fourth category]
are the variables $it$, $i$, $newX(i)$ for all $i\in\{1,\dots,n\}$, $j$, $D$, $norm$.
\end{description}

Further, we describe the work process for each of the categories of variables.
Each of the categories of variables is stored in its own collection.
\begin{description}
\item[First category.]
The user inputs the values of the variables of the first category on request of the program.

The collection of dimension parameters stores the following pair for every dimension parameter: the identifier of dimension parameter and its value.
\item[Second category.]
All variables of the second category, with the exception of the variable \textsf{iterations}, have no values, since the program constructs the $Q$-determinants depending on the identifiers of the input variables.
To limit the number of iterations, the variable \textsf{iterations} is used, and the user inputs its value on request of the program.
This value is assigned to the variable \emph{static int iterations}.

The collection of input variables contains the identifiers of input variables.
\item[Third and fourth categories.]
The values of the variables of the third and fourth categories change when processing the flowchart.

The output and internal variable collections store the following pair for every such variable: the variable identifier and its value.
\end{description}

Now let's move on to processing blocks of the flowchart according to the JSON format (p.~\pageref{blocks}).

We do several steps.
\begin{description}
\item[First.]
We are looking for blocks of type 4.
These blocks include the identifiers of input variables and the dimension parameters of the algorithmic problem.
The identifiers of the dimension parameters are enclosed in square brackets and separated by commas (see. Fig.~1).
They are extracted from blocks of type 4, and the user is prompted to input their values.
The identifier of the dimension parameter and the entered value are written to the dimension parameters collection.
Also the identifiers of input variables are extracted from blocks of type 4.
Indices are added to the identifiers if input variables depend on the dimension parameters.
The resulting input variables are written to the collection of input variables.
If there is an identifier \textsf{iterations}, then the user is prompted to input the number of iterations.

The identifiers of output variables are extracted from blocks of type 5.
Indices are added to the identifiers if output variables depend on the dimension parameters.
The resulting output variables and their values $0$ of type \emph{string} are written to the output variable collection.

For example, suppose the user inputs the value of the dimension parameter $n=2$ under processing a flowchart of an algorithm that implements the Gauss--Seidel method (see. Fig.~1).
Then a pair of $(n,2)$ will be written to the collection of dimension parameters.
The variables $A(i,j)$,  $B(i)$, $X0(i)$ for all $i,j\in\{1,2\}$, $e$, \textsf{iterations} will be written to the collection of input variables.
The pairs $(X(1),0)$ and $(X(2),0)$ will be written to the collection of output variables.
\item[Passage from 0-block to 1-block.]
After that we pass from a block with type 0 to a block with type 1 along the flowchart.
The description of the flowchart determines the order of passing blocks.
Blocks of types 2 and 3 are processed during the passage.
\item[Processing 2-block.]
First, we analyze the right-hand side of the assignment operator when processing a block of type 2.
The following options are possible: a number, a variable of any category with or without indexes, a unary or binary operation.

If there is an operation, then the operands are analyzed.
The following options are possible for operands: a number, a variable of any category with or without indices.
We compute the values of all available indices and determine the category of variables.

Further, we determine the indices values of the variable to the left of the assignment operator, depending on their availability.
Then we should determine the category of the obtained variable.
A variable can be either output or internal.
It is also possible that the variable does not belong to any category.
In this case, we consider it as a new internal variable.

The result of the analysis of a block with type 2 is to compute the value to the right of the assignment operator and write it to the directory as the value of the variable to the left of the assignment operator.
If the value to the right of the assignment operator depends on input variables, then it is formed as an expression in the JSON format.
In this case, we use the same format as the format of expression description in the output file (see. Fig.~2).
After passing through the flowchart, the output variable identifiers and the corresponding unconditional $Q$-terms will be in the output variables catalog.
\item[Processing 3-block.]
When processing a block of type 3, the comparison operation and each operand are determined.
The following options for operands are possible: number, variable of any category with or without indexes.
We compute the values of all obtained indices and determine the category of variables.
If the condition has no input variables, then it is computed and control is transferred to the next block depending on the computed value of $1$ (\textsf{true}), or $0$ (\textsf{false}).
Suppose that the condition contains input variables.
Then, at the first pass through the block, the control is transferred to the next block by the value of $1$.
Also, during the second pass through the block, the control is transferred to the next block by the value of $0$.
Let's say we don't have an input variable \textsf{iterations}.
Under the transfer of the control by the value of $1$, we add the condition as a conjunctive term to the resulting logical $Q$-term.
Under the transfer of the control by the value of $0$, we add the negation of the condition as a conjunctive term to the resulting logical $Q$-term.
If the input variable \textsf{iterations} exists, then the logical $Q$-term is formed from the condition of the last 3-block with input variables only without taking into account \textsf{iterations}.
\item[Branching.]
It arises if we have blocks of type 3 with input variables in the conditions.
We process one branch in one pass along the flowchart.
After processing the branch, the identifiers of output variables and the corresponding pairs of expressions 
\[
(u_j^i,w_j^i)\, (i\in C\cup I)\text{ if }N=\emptyset\text{ or }(u_j^i(\bar{N}),w_j^i(\bar{N}))\, (i\in C\cup I)\text{ if }N\neq\emptyset
\]
for some $j\in\{1,\dots,l(i)\}$ are written to the output file, as shown in Fig.~2.
In practice, sometimes we have to cancel the output of the processing branch results in the output file.
To implement this feature, the internal variable \textsf{empout} is used.
By default, the value of the variable \textsf{empout} is set to $1$, and it is assigned the value $0$ in case if it is necessary to cancel the output to the output file.
\item[Exhaustion of branches.]
Now we describe the procedure for exhausting branches.
A branch is determined by a sequence of 3-blocks with conditions containing input variables, as well as ways to exit blocks by the value of $1$ or $0$.
The branch collection stores information about processed branches.
After processing the next branch, the collection stores the sequence of pairs, consisting of \textsf{Id} branch block numbers and output values from
$1$ or $0$ blocks.
The pairs in the collection are followed in order of processing the blocks.
If we have the first pass through the flowchart, then we record information concerning the first processed branch.
Suppose that the last pair in the branch collection has an output value of $1$ from the block.
Then we change this value to $0$ and get a new pair $P$.
Otherwise, we delete the pairs with the output value $0$ from the last to the pair with the exit from the block equal to $1$.
Deletion ends when there are either no such pairs or a block with an output value of $1$ is found.
If all pairs are deleted, it means that all branches are processed.
In this case, the processing of the flowchart is completed.
If the collection is not empty, then the last pair has a block with an output value of $1$.
We change this value to $0$ and get a new pair $P$.
Now the new branch appears as a subsequence of pairs of the processed branch ending in $P$ and its extension according to the flowchart.
As a result, new items can be written to the branch collection after the pair $P$.
\end{description}

\subsubsection{The subsystem for computing the parallelism resource of numerical algorithms}

The subsystem includes a database, server and client applications.

To develop the database, the database management system \textsf{PostgreSQL} was used.
The database contains two entities: \textsf{Algorithms} and \textsf{Determinants}.
\begin{description}
\item[Entity \textsf{Algorithms}]
has the following attributes.
\begin{enumerate}
\item
A primary key, i.e., an algorithm identifier.
\item
An algorithm name.
\item
Algorithm description.
\item
The number of $Q$-determinants loaded into the database and corresponding to different values of the parameters $N$ and $L$.
\end{enumerate}
\item[Entity \textsf{Determinants}]
has the following attributes.
\begin{enumerate}
\item
A primary key, i.e., an identifier of the $Q$-determinant.
\item
An unique algorithm identifier.
\item
Values of dimension parameters $N$ if $N\neq\emptyset$, otherwise $0$.
\item
The $Q$-determinant according to (\ref{W}), (\ref{WL}), (\ref{WN}), and (\ref{WNL}), where
\begin{description}
\item[(\ref{W})]
if $N=\emptyset$ and $I=\emptyset$;
\item[(\ref{WL})]
if $N=\emptyset$ and $I\neq\emptyset$;
\item[(\ref{WN})]
if $N\neq\emptyset$ and $I=\emptyset$;
\item[(\ref{WNL})]
if $N\neq\emptyset$ and $I\neq\emptyset$.
\end{description}
\item
The value of $D_\mathcal{A}$ according to (\ref{D}), (\ref{DL}), (\ref{DN}), and (\ref{DNL}), where
\begin{description}
\item[(\ref{D})]
if $N=\emptyset$ and $I=\emptyset$,
\item[(\ref{DL})]
if $N=\emptyset$ and $I\neq\emptyset$,
\item[(\ref{DN})]
if $N\neq\emptyset$ and $I=\emptyset$,
\item[(\ref{DNL})]
if $N\neq\emptyset$ and $I\neq\emptyset$.
\end{description}
\item
The value of $P_\mathcal{A}$ according to (\ref{P}), (\ref{PL}), (\ref{PN}), and (\ref{PNL}), where
\begin{description}
\item[(\ref{P})]
if $N=\emptyset$ and $I=\emptyset$,
\item[(\ref{PL})]
if $N=\emptyset$ and $I\neq\emptyset$,
\item[(\ref{PN})]
if $N\neq\emptyset$ and $I=\emptyset$,
\item[(\ref{PNL})]
if $N\neq\emptyset$ and $I\neq\emptyset$.
\end{description}
\item
The value of the parameter $L$ (the number of iterations) if $I\neq\emptyset$, otherwise $0$.
\end{enumerate}
\end{description}

We have developed a server application for interacting with database entities.
This application implements the following methods.
\begin{enumerate}
\item
Recording a new algorithm.
\item
Updating algorithm information.
\item
Obtaining a list of algorithms with complete information about them.
\item
Comparing the characteristics of the parallelism resource of two algorithms that solve the same algorithmic problem.
\item
Removing an algorithm along with its $Q$-determinants.
\item
Loading a new $Q$-determinant and computed characteristics of the parallelism resource.
\item
Obtaining a list of $Q$-determinants.
\item
Downloading a $Q$-determinant.
\item
Removing a $Q$-determinant.
\end{enumerate}

Now we describe the process of interacting with the database.
\begin{description}
\item[Converting of $Q$-determinants for the database.]
The database should\newline
contain the $Q$-determinant of the algorithm in the form of one of the following sets of expressions in the JSON format:
\begin{enumerate}\label{WQD}
\item
$W$ (cf. (\ref{W})) if $N=\emptyset$ and $I=\emptyset$,
\item
$W(\bar{L})$ (cf. (\ref{WL})) if $N=\emptyset$ and $I\neq\emptyset$,
\item
$W(\bar{N})$ (cf. (\ref{WN})) if $N\neq\emptyset$ and $I=\emptyset$,
\item
$W(\bar{N},\bar{L})$ (cf. (\ref{WNL})) if $N\neq\emptyset$ and $I\neq\emptyset$.
\end{enumerate}
We have a description of the representation of the algorithm obtained by the subsystem for generating $Q$-determinants (cf. p.~\pageref{creationQdet}).
Therefore, we have developed special software to convert an initial description into a format for the database.
\item[The $Q$-effective implementation and the parallelism resource.]
Also,\newline
 we have developed and implemented an original algorithm for methods of obtaining the $Q$-effective implementation of the algorithm and computing the characteristics of the parallelism resource of the algorithm.
The algorithm is as follows.
\begin{enumerate}
\item
The input is the $Q$-determinant of the algorithm $\mathcal{A}$ as a set of expressions.
\item
Then we transform the $Q$-determinant into a special data structure.
\item
When creating this data structure, we get an array of lists.
\item
The number of array lists is equal to the number of nesting levels of all operations of the set of expressions of the $Q$-determinant. (cf. p.~\pageref{WQD}).
\item
All operations of one nesting level are contained in one list.
\end{enumerate}
Creating a special data structure is performed as follows.
\begin{enumerate}
\item
For numbers and variables, the algorithm returns an empty array.
\item
If the array of the operand of the unary operation is obtained, the algorithm adds a new list to this array containing this unary operation.
\item
If arrays of operands of a binary operation are obtained, then the algorithm joins these arrays into one array and adds a new list to it containing this binary operation.
\item
The algorithm get its own array of lists for each expression of the set of expressions of the $Q$-determinant.
After that, the algorithm joins the arrays of all expressions of the set of expressions of the $Q$-determinant into one array.
\end{enumerate}
Here, the join of arrays is getting a single array of lists, in which each list is formed as a result of combining lists of source arrays containing operations of the same nesting level.
The resulting array of lists gives an idea of the execution plan of the $Q$-effective implementation of the algorithm, namely, points out the order of operations.
In this case, the value of $D_\mathcal{A}$ is equal to the length of the array, and $P_\mathcal{A}$ is equal to the size of the largest list in the array.
\item[Comparing the characteristics of the parallelism resource.]
We have\newline 
developed and implemented an original algorithm to perform a method for comparing the parallelism resource characteristics of two algorithms.
Several $Q$-determinants of the algorithm can be loaded into the database.
Moreover, in the database, each $Q$-determinant corresponds to its set of values of the dimension parameters $\bar{N}$ and the number of iterations $\bar{L}$ of the algorithm.
In particular, the database contains the value $\bar{N}=0$ if $N=\emptyset$, and the value $\bar{L}=0$ if $I=\emptyset$.
An algorithm for comparing the parallelism resource characteristics of the algorithms $\mathcal{A}$ and $\mathcal{B}$ is as follows.
\begin{enumerate}
\item
The input data is the identifiers of the algorithms $\mathcal{A}$ and $\mathcal{B}$.
\item
The comparison algorithm is performed in three stages.
\begin{enumerate}
\item[(i)]
At the first stage, the identifiers of pairs of $Q$-determinants of the algorithms $\mathcal{A}$ and $\mathcal{B}$ are determined, which correspond to the same values of $\bar{N}$ and $\bar{L}$.
First, for the algorithms $\mathcal{A}$ and $\mathcal{B}$, the identifiers of all $Q$ -determinants loaded into the database are determined.
Then, the algorithm finds the values $\bar{N}$ and $\bar{L}$ corresponding to the obtained identifiers of $Q$-determinants.
Next, pairs of the identifiers of the algorithms $\mathcal{A}$ and $\mathcal{B}$ are analyzed and pairs for which the corresponding values of $\bar{N}$ and $\bar{L}$ are the same.
As a result, pairs of the identifiers of $Q$-determinants of the algorithms $\mathcal{A}$ and $\mathcal{B}$ with the corresponding identical values $\bar{N}$ and $\bar{L}$ will be obtained.
The resulting pairs of the identifiers of $Q$-determinants are written into a two-dimensional array so that the first row of the array contains the identifiers of $Q$-determinants of the algorithm $\mathcal{A}$, and the second row of the array contains the identifiers of $Q$-determinants of the algorithm $\mathcal{B}$.
\item[(ii)]
In the second stage, we determine the data for the comparison.
This is done as follows.
We use the identifiers of $Q$-determinants written in a two-dimensional array from (i).
By these identifiers, the attribute values are determined for comparison, that is, the height values $D_\mathcal{A}(\bar{N},\bar{L})$ and $D_\mathcal{B}(\bar{N},\bar{L})$ or the width values $P_\mathcal{A}(\bar{N},\bar{L})$ and $P_\mathcal{B}(\bar{N},\bar{L})$.
The obtained attribute values are written to another two-dimensional array of the same size into places corresponding to the identifiers of their $Q$-determinants.
\item[(iii)]
The third stage finishes the comparison.
If the two-dimensional data array obtained in the second stage is empty, a message will be displayed that the comparison is not possible.
This situation arises when the algorithms $\mathcal{A}$ and $\mathcal{B}$ don't have $Q$-determinants in the database with the same values 
$\bar{N}$ and $\bar{L}$.
Otherwise, iterates over the columns of a two-dimensional data array.
When iterating over columns in each of the columns, the value of the second row is subtracted from the value of the first row, and the results are summed.
The result will be a value $\Delta{D}$ or $\Delta{P}$ (p.~\pageref{Delta}).
\end{enumerate}
\item
The obtained values of $\Delta{D}$ and $\Delta{P}$ allow us to make a conclusion about the relationship between the characteristics of the parallelism resource of the algorithms $\mathcal{A}$ and $\mathcal{B}$ (p.~\pageref{VDelta}).
\end{enumerate}
\end{description}

The $Q$-system is open for viewing information.
Therefore, to eliminate unauthorized access to information editing, an authentication method was used.

Now we describe the client application.
It manages the database by calling server application methods.
\begin{description}
\item[The main page.]
This page of the client application contains the table with the query results of all the algorithms recorded in the database.
There is also an login interface for authorized users to add, edit and delete algorithms.
Interface elements allow any user to select algorithms for comparison and get the result of their comparison.
In addition, the main page shows the number of $Q$-determinants written to the database for each algorithm.
\item[Access from the main page.]
Hence, we can go to the page containing the query results for all $Q$-determinants for the selected algorithm.
It contains for each $Q$-determinant the values of the dimension parameters $\bar{N}$, the number of iterations $\bar{L}$, the values of the parallelism resource characteristics $D(\bar{N},\bar{L})$ and $P(\bar{N},\bar{L}))$.
The interface enables any user to download any $Q$-determinant into a file.
Authorized users can also use the interface to add and remove $Q$-determinants.
\item[Approximation of the parallelism resource characteristics.]
For this we are developing the functionality of the $Q$-system for the characteristics of the parallelism resource $D(\bar{N},\bar{L})$ and $P(\bar{N},\bar{L})$.
The interface uses the sum sign ($\sum$) to access the developed version of this functionality.
We also test and study the functionality of the $Q$-system for graphical representation of the function obtained by approximating the characteristics of the parallelism resource $D(\bar{N},\bar{L})$ and $P(\bar{N},\bar{L}))$.
\end{description}

\subsection{The results of the trial operation of the $Q$-system}

The $Q$-system is available at
\begin{center}
\texttt{https://qclient.herokuapp.com}
\end{center}

For the trial operation of the $Q$-system, we used numerical algorithms with various structures of $Q$-determinants.
Namely, we consider $Q$-determinants having the following structures.
\begin{description}
\item[$Q$-determinants consisting of unconditional $Q$-terms.]
The following algorithms have such $Q$-determinants.
\begin{enumerate}
\item
Algorithm for computing the scalar product of vectors without using the doubling scheme.
\item
Algorithm for computing the scalar product of vectors with using the doubling scheme.
\item
Dense matrix multiplication algorithm without the doubling scheme.
\item
Dense matrix multiplication algorithm with the doubling scheme.
\end{enumerate}
\item[$Q$-determinants consisting of conditional $Q$-terms.]
The following\newline 
algorithms have such $Q$-determinants.
\begin{enumerate}
\item
An algorithm for implementing the Gauss--Jordan method for solving systems of linear equations.
\item
An algorithm for finding the maximum element in a sequence of numbers.
\item
An algorithm for solving a quadratic equation without using the unary negative ($-$) operation.
\item
An algorithm for solving a quadratic equation using the unary negative ($-$) operation.
\end{enumerate}
\item[$Q$-determinants consisting of conditional infinite $Q$-terms.]
Such $Q$-de\-terminants have the algorithms for implementing the Gauss--Seidel and Jacobi methods for solving systems of linear equations.
\end{description}

We have performed the following steps for each of the algorithms.
\begin{enumerate}
\item
Writing the name and description of the algorithm to the database.
\item
Constructing the flowchart of the algorithm subject to limitations from \ref{MCQ}.
\item
A text description of the flowchart in the JSON format.
\item
Using the first subsystem of the $Q$-system, obtaining representations of the algorithm in the form of a $Q$-determinant for several values of $\bar{N}$ if $N\neq\emptyset$ and for several values of $\bar{L}$ if $I\neq\emptyset$.
The singular representation of the algorithm in the form of a $Q$-determinant was obtained when $N=\emptyset$ and $I=\emptyset$.
\item
Converting the obtained $Q$-determinants to the format for the database and writing them to the database along with the corresponding values of
$\bar{N}$ and $\bar{L}$.
\item
After that, the $Q$-system computed and stored in the database for each $Q$-determinant the values of the parallelism resource characteristics.
\begin{enumerate}
\item
If $N=\emptyset$ and $I=\emptyset$, then the height and width functions of the algorithm are constants.
\item
Otherwise, the height and width functions of the algorithm are stored in a database in a table form.
In this regard, the first version of the functionality of the $Q$-system for approximating the height and width functions was developed.
This enables us to estimate these functions for different values of $\bar{N}$ and $\bar{L}$.
\end{enumerate}
\end{enumerate}

We obtained some practical results during the trial operation of the $Q$-system.
\begin{enumerate}
\item
The $Q$-system shows that the height of the algorithm for implementing the Gauss-Jordan method is $3n$, where $n$ is the matrix size of a system of linear equations.
This estimate confirms Theorem \ref{th:T2} (see p.~\pageref{th:T2}).
\item
The $Q$-system enables us to compare the height and width of any two algorithms that solve the same algorithmic problem.
Comparison of dense matrix multiplication algorithms with the doubling scheme and without the doubling scheme allows us to conclude that the width of the algorithms is the same, but the height of the second algorithm is greater.
Thus, we can suppose that the $Q$-effective implementation of the algorithm with the doubling scheme will be faster.
At the same time, to execute the $Q$-effective implementations of these algorithms, the same amount of computer system resources (computing cores, processors) will be required.
\end{enumerate}

We made a considerable number of improvements and changes to the $Q$-system during its trial operation.
Now we can declare the operation of the system quite reasonable (acceptable) for further use.
It can also be said that the trial operation of the $Q$-system pointed the correctness of the solutions developed for its creation.
Certainly, the $Q$-system will be tested further.

\section{A method of designing parallel programs for the $Q$-effective implementations of algorithms}

\subsection{Method description}

It is well known that the flowchart of a numerical algorithm enables us to develop a sequential program.
Similarly, we can also develop a parallel program using the $Q$-determinant of a numerical algorithm.
This idea is the basis of the proposed method.
We use the following arguments.
\begin{enumerate}
\item
We can construct the $Q$-determinant for any numerical algorithm.
\item
We can describe the $Q$-effective implementation of the algorithm.
\item
We can immediately create the program code if the $Q$-effective implementation of the algorithm is realizable.
\end{enumerate}

The concept model of a $Q$-determinant (see Section \ref{s:Qdet}) are called \emph{the basic model}.
This model allows us to study only machine-independent properties of algorithms.
We extend the basic model to take into account the features of implementing algorithms on real parallel computing systems.
A new model of the concept of a $Q$-determinant is obtained by adding models of parallel computing: PRAM \cite{al:pram} for shared memory and
BSP \cite {al:bsp} for distributed memory.
We use an extended model of the concept of a $Q$-determinant to propose a method for designing parallel programs for the $Q$-effective implementations of algorithms.

The method consists of the following stages.
\begin{enumerate}
\item
Construction of the $Q$-determinant of an algorithm.
\item
Description of the $Q$-effective implementation of the algorithm.
\item
Development of a parallel program for the realizable $Q$-effective implementation of the algorithm.
\end{enumerate}

In the first and second stages of the method, we use the basic model of the concept of a $Q$-determinant, and in the third stage, an extended model.
A program are called \emph{$Q$-effective} if it is designed by this method.
Also, the process of designing a $Q$-effective program will be called $Q$-effective programming.
A $Q$-effective program uses the parallelism resource of the algorithm completely, because it performs the $Q$-effective implementation of the algorithm.
So, it has the highest parallelism among programs that implement the algorithm.
This means that a $Q$-effective program uses more resources of a computer system than programs that perform other implementations of the algorithm, i.e., it is most effective.
A $Q$-effective program can be used for parallel computing systems with shared or distributed memory.

Development of a parallel program for the $Q$-effective implementation of the algorithm is the third stage of our method.
For shared memory, a description of the $Q$-effective implementation of the algorithm for developing a $Q$-effective program is sufficient.
For distributed memory, we should add a description of the distribution of computing between computing nodes.
If we use distributed memory, then our research is limited to the principle of ``master-slave'' \cite{al:massl}.
For example, this principle is often used for cluster computing systems.
We describe the principle of ``master-slave'' as follows.
To compute we use one ``master'' computing node and several ``slave'' computing nodes.
The node ``master'' is denoted by $M$, and the set of ``slave'' nodes by $S$.
The computational process is divided into several supersteps.
\begin{description}
\item[Superstep 1] initialization.
Superstep ends with barrier synchronization.
\item[Superstep 2]
each node of $S$ receives a task for computation from the node $M$.
\item[Superstep 3]
each node of $S$ performs the computation without exchange with other nodes.
\item[Superstep 4]
all nodes of $S$ send the computation results to the node $M$.
Superstep ends with barrier synchronization.
\item[Superstep 5]
in the node $M$, the computation results are combined.
\item[Superstep 6]
completion, conclusion of results.
\end{description}
The supersteps 2--5 can be repeated.

To develop a $Q$-effective program, we should use parallel programming tools.
Now OpenMP technology is most widely used for organizing parallel computing on multiprocessor systems with shared memory.
In this study, this technology are used in the development of $Q$-effective programs for shared memory computing systems.
The MPI technology is the de facto standard for parallel programming on distributed memory.
In this study, this technology are used in the development of $Q$-effective programs for distributed memory computing systems.

\subsection{Examples of application of the method for some algorithms}

We show the use of this method for algorithms that have the $Q$-determinants with $Q$-terms of various types.
In particular, we describe the development features of $Q$-effective programs using distributed memory in the third stage.

\subsubsection{Dense matrix multiplication algorithm with\\ the doubling scheme}

We describe this algorithm by stages of the proposed method.
\begin{description}
\item[Stage 1.]
If we have two matrices
\[
A=\left[a_{ij}\right]_{\substack{i=1,\dots ,n\\j=1,\dots ,k}} \text{ and } B=\left[b_{ij}\right]_{\substack{i=1,\dots ,k\\j=1,\dots ,m}},
\]
then the result is the matrix 
\[
C=\left[c_{ij}\right]_{\substack{i=1,\dots ,n\\j=1,\dots ,m}}, 
\]
where for every $i\in\{1,\dots ,n\}$ and $j\in\{1,\dots ,m\}$
\begin{gather}
c_{ij}=\sum_{s=1}^ka_{is}b_{sj}.\label{qmm}
\end{gather}
So, the algorithm of matrix multiplication is represented in the form of a $Q$-determinant.
The $Q$-determinant consists of $nm$ unconditional $Q$-terms.
\item[Stage 2.]
The $Q$-effective implementation of the algorithm of matrix multiplication is that all $Q$-terms $\sum_{s = 1}^ka_{is}b_{sj}$ for every $i\in\{1,\dots,n\}$ and $j\in\{1,\dots,m\}$ must be computed simultaneously.
Since all multiplication operations are ready to be performed, they must be performed simultaneously.
The result will be the chains that are formed by the addition operation.
The number of these chains is equal to $nm$.
Each chain is computed by the doubling scheme.
Thus, the $Q$-effective implementation of the algorithm is realizable.
\item[Stage 3.]
We got a description of the $Q$-effective implementation in the second stage.
Further, we use this description to develop a $Q$-effective program for shared memory.
To develop a $Q$-effective program for distributed memory, you need to add a description of the distribution of computing between the computing nodes of the computing system.
For this we use the principle of ``master-slave''.
Each $Q$-term $\sum_{s=1}^ka_{is}b_{sj}$ for every $i\in\{1,\dots ,n\}$ and $j\in\{1,\dots ,m\}$ is computed on a separate computational node of $S$.
If the number of nodes of $S$ is less than the number of $Q$-terms, then nodes of $S$ can compute several $Q$-terms.
The nodes of $S$ receive information from the node $M$ to compute $Q$-terms $\sum_{s=1}^ka_{is}b_{sj}$ for every $i\in\{1,\dots ,n\}$ and $j\in\{1,\dots ,m\}$.
The result of computing each $Q$-term is transmitted to the ``master'' node $M$.
\end{description}

Examples of designing $Q$-effective programs for performing the matrix multiplication algorithm are described in \ref{resprom}.

\subsubsection{The Gauss--Jordan method for solving systems of linear equations}

We describe the algorithm for the Gauss--Jordan method also by stages of the proposed method.
\begin{description}
\item[Stage 1.]
Previously, we constructed the $Q$-determinant of the algorithm $\mathcal{G}$ for implementing the Gauss--Jordan method, see p.~\pageref{qd:G} in \ref{sss:G}.
It consists of $n$ conditional $Q$-terms of length $n!$ and
\[
x_j=\{(u_1,w_1^j),\dots ,(u_{n!},w_{n!}^j)\}\text{ for all }j\in\{1,\dots ,n\}
\]
is a representation of the algorithm $\mathcal{G}$ in the form of a $Q$-determinant.
\item[Stage 2.]
By definition of the $Q$-effective implementation all unconditional $Q$-terms
\[
\{u_i,w_i^j\}\text{ for every }i\in\{1,\dots ,n!\}\text{ and }j\in\{1,\dots ,n\}
\]
should be computed simultaneously.
Therefore, two computational process should be carried out at the same time: the parallel computation of matrices
$\bar A^{j_1},\bar A^{j_1j_2},\dots,\bar A^{j_1j_2\dots j_n}$ for all possible values of the numbers $j_1,j_2,\dots ,j_n$,
as well as the parallel computation of $Q$-terms $u_i$ for every $i\in\{1,\dots ,n!\}$.
The leading elements of the matrix for each step of the algorithm are determined by computing $Q$-terms $u_i$ for every $i\in\{1,\dots,n!\}$.
The computation of matrices $\bar A^{j_1},\bar A^{j_1j_2},\dots,\bar A^{j_1j_2\dots j_n}$ and $Q$-terms $u_i$ stops if they do not correspond to the leading elements.
\begin{description}
\item[The first round of computations.]
We begin to compute the matrices $\bar A^{j_1}$ for every ${j_1}\in\{1,\dots ,n\}$ and $Q$-terms $u_i$ for every $i\in\{1,\dots ,n!\}$ simultaneously.
The computations of $u_i$ for every $i\in\{1,\dots ,n!\}$ are started from subexpressions
\[
L_{j_1}\wedge(a_{1j_1}\neq 0)\text{ for every }j_1\in\{1,\dots ,n\},
\]
because only their operations are ready for execution.
There exists a unique integer $j_1\in\{1,\dots ,n\}$ such that a subexpression
\[
L_{j_1}\wedge (a_{1j_1}\neq 0)
\]
has the value \textsf{true}.
Let $r_1$ be such a value of $j_1$.
Further, we finish the computation of $\bar A^{j_1},\bar A^{j_1j_2},\dots ,\bar A^{j_1j_2\dots j_n}$ and $u_i$ for all
$i\in\{1,\dots ,n!\}$ if $j_1$ isn't equal to $r_1$.
\item[The second round of computations.] 
We begin to compute matrices $\bar A^{r_1j_2}$ for every $j_2\in\{1,\dots,n\}\setminus\{r_1\}$ and $Q$-terms $u_i$ for every $i\in\{1,\dots,n!\}$ with $j_1=r_1$ simultaneously.
Under the computation of $u_i$ for every $i\in\{1,\dots,n!\}$ with $j_1=r_1$ we should find the values of subexpressions
\[
L_{j_2}\wedge(a^{r_1}_{2j_2}\neq 0)\text{ for every }j_2\in\{1,\dots ,n\}\setminus\{r_1\},
\]
because only their operations are ready for execution.
There exists a unique integer $j_2\in\{1,\dots ,n\}\setminus\{r_1\}$ such that a subexpression
\[
L_{j_2}\wedge (a^{r_1}_{2j_2}\neq 0)
\]
has the value \textsf{true}.
Let $r_2$ be such a value of $j_2$.
Further, we finish the computation of $\bar A^{r_1j_2},\dots ,\bar A^{r_1j_2\dots j_n}$ and $u_i$ for all $i\in\{1,\dots ,n!\}$ if $j_2$ isn't equal to $r_2$.
\item[The next $n-3$ rounds of computations] are executed similarly.
\item[In conclusion] we need to compute a single matrix
\[
\bar A^{r_1\dots r_{n-1}j_n}\text{ for }j_n\notin\{r_1,r_2,\dots ,r_{n-1}\},
\]
as the parameter $j_n$ has a single value.
Let $r_n$ be such a value of $j_n$.
\end{description}
The result is
\[
x_{r_j}=a^{r_1\dots r_n}_{j,n+1}\text{ for every }j\in\{1,\dots ,n\},
\]
this is the solution to the original system of linear equations.

Thus, the $Q$-effective implementation of the algorithm $\mathcal{G}$ for the Gauss-Jordan method is realizable.
\item[Stage 3.]
We obtained a description of the $Q$-effective implementation of the algorithm $\mathcal{G}$ for the Gauss-Jordan method in the second stage.
Further, we use this description to develop a $Q$-effective program for shared memory.
To develop a $Q$-effective program for distributed memory, you need to add a description of the distribution of computing between the computing nodes of the computing system.
For this we use the principle of ``master-slave''.

We have to compute each of matrices
\[
\left\{\bar A^{j_1}\mid j_1\in\{1,\dots,n\}\right\}
\]
and the corresponding logical $Q$-term $u_i$ for a suitable $i\in\{1,\dots,n!\}$ in its own node of $S$.
If the number of nodes of $S$ is less than $n$, then the nodes of $S$ should perform computations for several values of $j_1$.
The nodes of $S$ get information from the node $M$ to compute matrices $\bar A^{j_1}$ for any $j_1\in\{1,\dots,n\}$ and the corresponding
$Q$-terms $u_i$ for any $i\in\{1,\dots,n!\}$.
The computation results of $r_1$ and $\bar A^{r_1}$ are transferred to the node $M$.

The nodes of $S$ get information from the node $M$ to compute matrices $\bar A^{r_1j_2}$ for every $j_2\in\{1,\dots,n\}\setminus\{r_1\}$ and the corresponding $Q$-terms $u_i$ for every $i\in\{1,\dots,n!\}$.
We have to compute each of matrices
\[
\left\{\bar A^{r_1j_2}\mid j_2\in\{1,\dots,n\}\setminus\{r_1\}\right\}
\]
and the corresponding logical $Q$-term $u_i$ for a suitable $i\in\{1,\dots,n!\}$ in its own node of $S$.
The computation results of $r_1$ and $\bar A^{r_1}$ are transferred to the node $M$.
The next $n-2$ rounds of computations are executed similarly.
\end{description}

Examples of designing $Q$-effective programs for performing the algorithm $\mathcal{G}$ for the Gauss--Jordan method are described in \ref{resprom}.

\subsubsection{The Jacobi method for solving systems of linear equations}\label{ss:JM}

At the end, we describe the algorithm for the Jacobi method according to the stages of the proposed method.
It should be noted that this method differs from the Jacobi method for solving systems of grid equations in the subsection \ref{ss:gJ}.
Although, of course, they are closely related, as we pointed out at the beginning of the subsection \ref{ss:gJ}.
\begin{description}
\item[Stage 1.]
We construct the $Q$-determinant of the algorithm for implementing the Jacobi method for solving a system of linear equations
\begin{gather*}
A\vec{x}=\vec{b}, \text{ where } A=\left[a_{ij}\right]_{i,j=1,\dots ,n}\text{ and }a_{ii}\neq 0\text{ for every }i\in\{1,\dots ,n\}, \\
\vec{x}=(x_1,\dots ,x_n)^T\text{ and }\vec{b}=(a_{1,n+1},\dots ,a_{n,n+1})^T.
\end{gather*}
Let $\vec{x}^0$ be an initial approximation.
Then the iteration process can be written as
\[
x_i^{k+1}=\frac{b_i}{a_{ii}}-\sum_{j\neq i}\frac{a_{ij}}{a_{ii}}x_j^k\text{ for every }i\in\{1,\dots,n\}\text{ and }k\in\{0,1,\dots\}.
\]
A sufficient (but not necessary) condition for the convergence of this method is that the matrix $A$ is strictly diagonally dominant, i.e.,
\[
\left|a_{ii}\right|>\sum _{j\neq i}\left|a_{ij}\right|.
\]
In many cases, the stopping criterion is the condition
\[
\|\vec{x}^{k+1}-\vec{x}^k\|<\epsilon,
\]
where $\epsilon$ is a computation accuracy.

\begin{qdet}
The $Q$-determinant of the algorithm consists of $n$ conditional infinite $Q$-terms.
A representation of the algorithm in the form of a $Q$-determinant is written as
\begin{gather}
x_i=\left\{(\|\vec{x}^1-\vec{x}^0\|<\epsilon,x_i^1),\dots ,(\|\vec{x}^k-\vec{x}^{k-1}\|<\epsilon,x_i^k),\dots\right\}\label{qj}
\end{gather}
for all  $i\in\{1,\dots ,n\}$.
\end{qdet}
\item[Stage 2.]
For convenience, we denote
 \[
 u^l=\|\vec{x}^l-\vec{x}^{l-1}\|<\epsilon
 \]
for every $l\in\{1,2,\dots\}$.
Then a representation of the algorithm in the form of a $Q$-determinant (cf. (\ref{qj})) is
\[
x_i=\left\{(u^1,x_i^1),(u^2,x_i^2),\dots ,(u^k,x_i^k),\dots\right\}
\]
for all $i\in\{1,\dots ,n\}$.

We describe the $Q$-effective implementation of the considered algorithm.
\begin{description}
\item[Firstly,]
we should compute $Q$-terms $x_i^1$ for all $i\in\{1,\dots ,n\}$ simultaneously.
Then we should compute $Q$-terms $u^1$ and $x_i^2$ for all $i\in\{1,\dots ,n\}$ simultaneously.
If the value of $u^1$ is \textsf{true}, then we should finish the computation, since we get the solution with the given accuracy, i.e., $n$-tuple
\[
\left\{x_i=x_i^1\right\}_{i\in\{1,\dots ,n\}}
\]
is the solution of a system of linear equations by the Jacobi method.
\item[Further,]
for $k\geq 2$, we continue to compute $Q$-terms $u^k$ and $x_i^{k+1}$ for all $i\in\{1,\dots,n\}$ simultaneously.
If the value of $u^k$ is \textsf{true}, then we should finish the computation, since we get the solution with the given accuracy, i.e., $n$-tuple
\[
\left\{x_i=x_i^k\right\}_{i\in\{1,\dots ,n\}}
\]
is the solution of a system of linear equations by the Jacobi method.
\end{description}
Thus, the $Q$-effective implementation of the algorithm for the Jacobi method is realizable.
\item[Stage 3.]
We got a description of the $Q$-effective implementation of the algorithm for the Jacobi method in the second stage.
Further, we use this description to develop a $Q$-effective program for shared memory.
To develop a $Q$-effective program for distributed memory, you need to add a description of the distribution of computing between the computing nodes of the computing system.
For this we use the principle of ``master-slave''.

We describe the process of implementing the algorithm for distributed memory.
Every component of the vector $\vec{x}^k$ for any $k\in\{1,2,\dots\}$ is computed on a separate node of $S$.
If the number of nodes of $S$ is less than $n$, then the nodes of $S$ should perform the computations for several components of the vector $\vec{x}^k$ for any $k\in\{1,2,\dots\}$.
\begin{description}
\item[Firstly,]
the nodes of $S$ receive information from the ``master'' node $M$ to compute $Q$-terms $x_i^1$ for every $i\in\{1,\dots ,n\}$.
The result of the computation is $\vec{x}^1$, passed to the node $M$.
\item[After that,]
the nodes of $S$ receive information from the node $M$ to compute $Q$-terms $x_i^2$ for every $i\in\{1,\dots,n\}$.
The node $M$ computes the value of $\|\vec{x}^1-\vec{x}^0\|<\epsilon$.
At the same time, $x_i^2$ for all $i\in\{1,\dots,n\}$ are computed on the nodes of $S$ simultaneously.
\item[The next iterations] are executed similarly.
\end{description}
\end{description}

Examples of designing $Q$-effective programs for performing the algorithm for the Jacobi method are described in \ref{resprom}.

\subsection{Designing $Q$-effective programs}\label{resprom}

Several works contain application of the method for designing $Q$-effective programs.
In these works, there is an experimental study of the developed $Q$-effective programs for algorithms with $Q$-determinants of various structures.
Further, we describe given works in detail.

The research was performed on the supercomputer ``Tornado'' at South Ural State University.
The programming language \textsf{C++} was used in the development of programs.
\begin{description}
\item[Shared memory.]
We used one computing node with 24 computing cores.
For development of programs the technology OpenMP was used.
\item[Distributed memory.]
We used several computing nodes.
The technologies of MPI and OpenMP were used for development of programs.
\end{description}
Developed $Q$-effective programs can use the parallelism resource of algorithms completely.
However, there were insufficient resources of the computer system for use all parallelism under computational experiments.
Under these conditions, the dynamic characteristics of the programs were evaluated.
The program acceleration is computed by the formula
\[
S=\frac{T_1}{T_p},
\]
where $T_1$ is the execution time of the program on one computational core, $T_p$ is the execution time of the program on $p$ computational cores.
The effectiveness of the program was computed by the formula
\[
E=\frac{S}{p},
\]
where $S$ is the acceleration of the program, $p$ is the number of computational cores used.

Multiplication algorithms for dense and sparse matrices were investigated in \cite{al:val}.
The CSR format is used to store sparse matrices.
Fig.~3 and fig.~4 show $Q$-effective programs for the dense matrix multiplication algorithm for shared and distributed memory.
\begin{figure}[h]
\includegraphics[width=\textwidth]{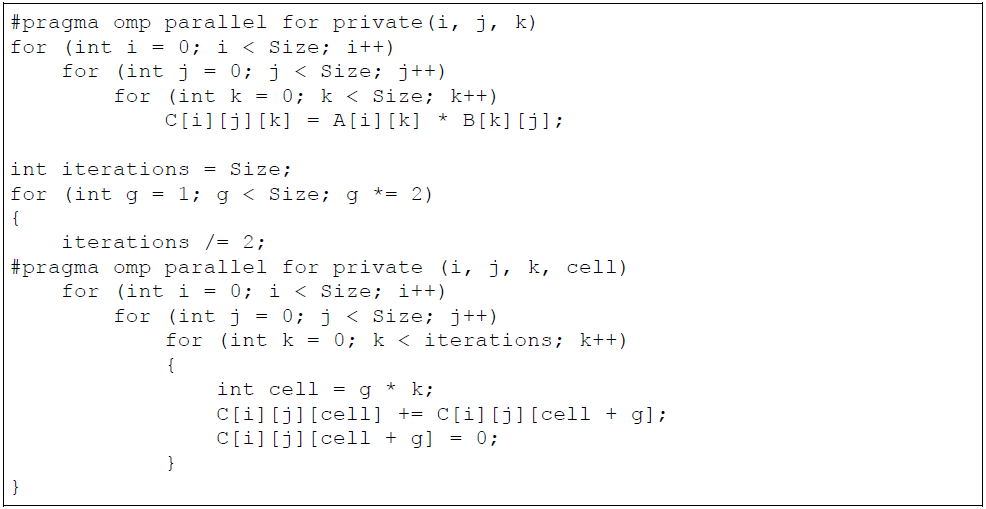}
\caption{The $Q$-effective program for the dense matrix multiplication algorithm for shared memory}
\end{figure}

\begin{figure}[h]
\includegraphics[width=\textwidth]{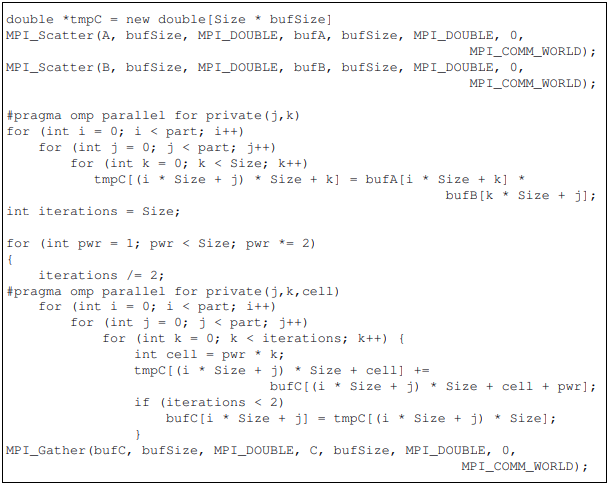}
\caption{The $Q$-effective program for the dense matrix multiplication algorithm for distributed memory}
\end{figure}
Also fig.~5 and fig.~6 show $Q$-effective programs for the sparse matrix multiplication algorithm for shared and distributed memory.
\begin{figure}[t]
\includegraphics[width=\textwidth]{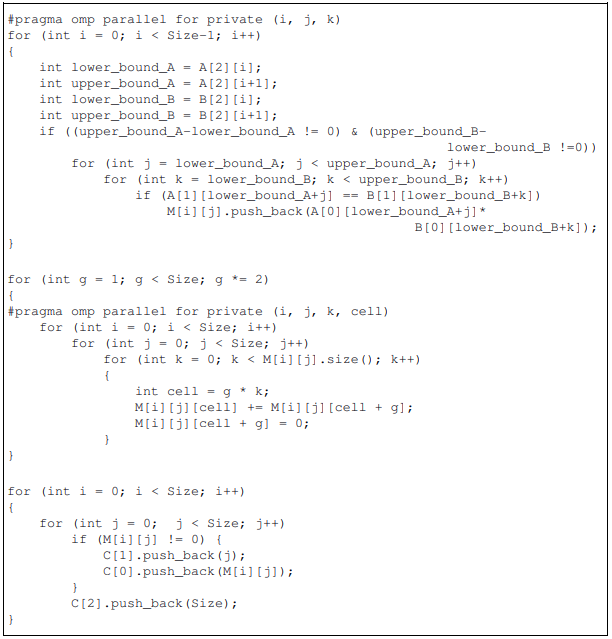}
\caption{The $Q$-effective program for the sparse matrix multiplication algorithm for shared memory}
\end{figure}
\begin{figure}[t]
\vspace{-61pt}
\includegraphics[width=\textwidth]{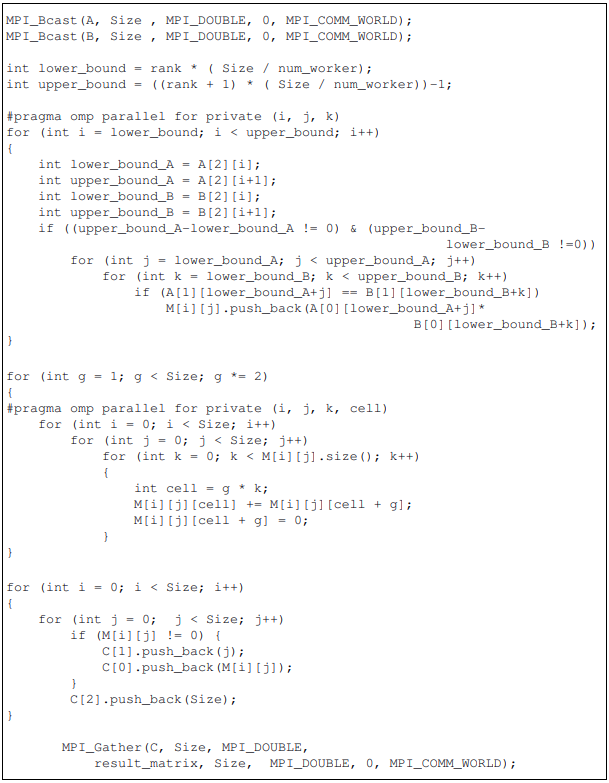}
\caption{The $Q$-effective program for the sparse matrix multiplication algorithm for distributed memory}
\end{figure}
Fig.~7 and fig.~8 show the acceleration of $Q$-effective programs for the dense and sparse matrix multiplication algorithms for shared and distributed memory.
\begin{figure}[t]
\includegraphics[width=\textwidth]{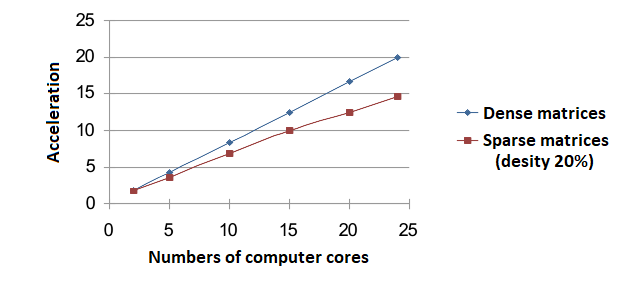}
\caption{The acceleration of $Q$-effective programs for the matrix multiplication algorithm for shared memory}
\end{figure}
\begin{figure}[t]
\includegraphics[width=\textwidth]{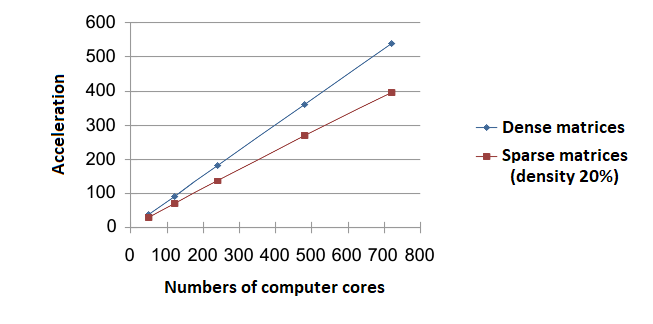}
\caption{The acceleration of $Q$-effective programs for the matrix multiplication algorithm for distributed memory}
\end{figure}
Finally, fig.~9 and fig.~10 show the efficiency of $Q$-effective programs for the dense and sparse matrix multiplication algorithms for shared and distributed memory.
Here the experimental results for matrices of size $30000\times 30000$ are used.
\begin{figure}[t]
\includegraphics[width=\textwidth]{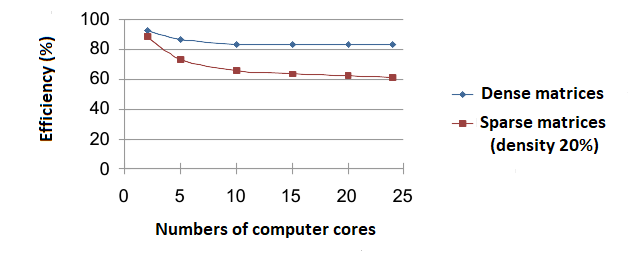}
\caption{The efficiency of $Q$-effective programs for the matrix multiplication algorithm for shared memory}
\end{figure}
\begin{figure}[t]
\includegraphics[width=\textwidth]{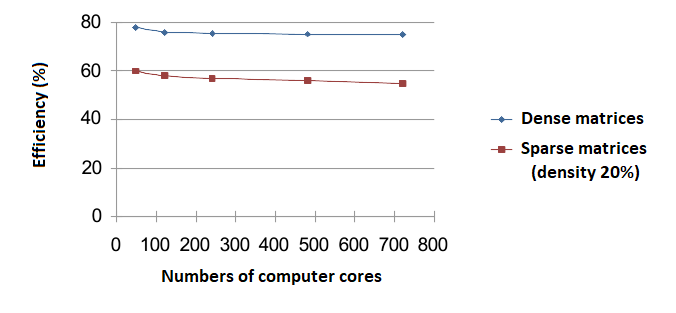}
\caption{The efficiency of $Q$-effective programs for the matrix multiplication algorithm for distributed memory}
\end{figure}

In \cite{al:tarasov} considers the Gauss-Jordan method for solving systems of linear equations.
In \cite{al:lapteva} there were the results of a study of the Jacobi method for solving systems of linear equations.
The method of designing $Q$-effective programs was tested on algorithms with small and large parallelism resources.
For example, in \cite{al:bazhenova}, a sweep method was considered for solving a system of linear three-point equations, which has a small parallelism resource.
Thus, in this case, distributed memory should not be used, because it can lead to reduced performance.
Also, in \cite{al:bazhenova} considered the Fourier method for solving a system of difference equations, which has a large parallelism resource.
Therefore, it can be effectively  implemented on distributed memory by the principle of ``master-slave".
It should be noted the study of the practical application of designing $Q$-effective programs shows that the principle of ``master-slave" is inappropriate for some algorithms, since the interchange between computing nodes increases.
The Jacobi method for solving a system of five-point difference equations \cite{al:kond} is an acceptable example for this statement.

There is also some interest in testing the method of designing $Q$-effective programs for algorithms with complex $Q$-determinants.
For example, such algorithms implement the Gauss--Jordan method \cite{al:tarasov} and the Gauss--Seidel method to solve a system of linear equations \cite{al:nechep}.
These algorithms have complex but well-structured $Q$-determinants.
Many other well-known algorithms have similar properties.
Well-structured $Q$-determinants make it easy to create $Q$-effective programs for these algorithms.

\subsection{Some ideas about designing $Q$-effective programs}

A further development of the model of the $Q$-determinant concept can be the addition of parallel computing models for computing systems with other architectural features.
Also, to create $Q$-effective programs, it is possible to use various programming languages and parallel programming technologies.
So, for one numerical algorithm, there is a potentially infinite set of $Q$-effective programs.
All of them execute the $Q$-effective implementation of the algorithm, this is the fastest implementation of the algorithm from a formal point of view.
It is quite possible, there is not the best of all $Q$-effective programs for given algorithm in terms of performance.
But each of these programs is the most effective in the computing infrastructure for which it was created.

\section{Application of $Q$-effective programming}

Here we consider the application of the mentioned results to increase the efficiency of the implementation of algorithmic problems.
Suppose that for solving an algorithmic problem we can use the set of numerical methods $\mathcal{M}$ with set of parameters $N$.
Also for every $M\in\mathcal{M}$ we can use the class of numerical algorithms $\mathcal{A}(M)$.
Then the process of developing an effective program for solving given algorithmic problem is as follows.
\begin{description}
\item[Choice of the algorithm $A^*(M)\in\mathcal{A}(M)$ for every $M\in\mathcal{M}$] with the minimum value of the height using the $Q$-system.
\item[Choice of the algorithm $A\in\left\{A^*(M)\mid M\in\mathcal{M}\right\}$] with the minimum value of the height.
\item[Development] a $Q$-effective program for the algorithm $A$.
\end{description}
The described process is called \emph{$Q$-effective programming in the broad sense}.
So, in the broad sense the main statements of the technology of $Q$-effective programming are the following.
\begin{description}
\item[Description] of the algorithmic problem.
\item[Description] of the methods for solving the algorithmic problem.
\item[Description] of the classes of numerical algorithms for the implementation of each of the methods.
\item[Investigation] of the parallelism resources of algorithms from the selected\newline 
classes with the help of the $Q$-system.
\item[Choice] of the algorithm with the best height in these classes.
\item[Development] of a $Q$-effective program for the selected algorithm.
\end{description}

\section{Conclusion}

Now we can come to the following conclusions regarding the presented research results based on the concept of a $Q$-determinant.
\begin{enumerate}
\item
Detection and evaluation of the existing parallelism of any numerical algorithm is possible.
\item
Defining a admissible way for executing this parallelism is also possible.
\end{enumerate}
Finally, we can argue that there is a real opportunity to obtain the effective implementations of numerical algorithms for real parallel computing systems, as well as to increase the efficiency of the implementation of methods and algorithmic problems using numerical algorithms.

\section*{Acknowlegments}

The reported study was funded by RFBR according to the research project  no. 17-07-00865 a. 
The work was supported by Act 211 Government of the Russian Federation, contract no. 02.A03.21.0011.

\end{document}

\bibliographystyle{plain}
\bibliography{AleevaAleevV1}

\begin{thebibliography}{100}

\bibitem{al:mal}
Darkhan Akhmed-Zaki, Danil Lebedev, Victor Malyshkin, and Vladislav Perepelkin.
\newblock Automated construction of high performance distributed programs in
  luna system.
\newblock In Victor Malyshkin, editor, {\em Parallel Computing Technologies.
  PaCT 2019}, volume 11657 of {\em Lecture Notes in Computer Science (LNCS)},
  pages 3--9. Springer, Cham, 2019.

\bibitem{al:ruscdays18}
Valentina Aleeva.
\newblock Designing a parallel programs on the base of the conception of
  $q$-determinant.
\newblock In Vladimir Voevodin and Sergey Sobolev, editors, {\em
  Supercomputing. RuSCDays 2018}, volume 965 of {\em Communications in Computer
  and Information Science (CCIS)}, pages 565--577. Springer, Cham, 2019.

\bibitem{al:ruscdays19}
Valentina Aleeva, Ekaterina Bogatyreva, Artem Skleznev, Mikhail Sokolov, and
  Artemii Shuppa.
\newblock Software $q$-system for the research of the resource of numerical
  algorithms parallelism.
\newblock In Vladimir Voevodin and Sergey Sobolev, editors, {\em
  Supercomputing. RuSCDays 2019}, volume 1129 of {\em Communications in
  Computer and Information Science (CCIS)}, pages 641--652. Springer, Cham,
  2019.

\bibitem{al:al}
Valentina~N. Aleeva.
\newblock Analysis of parallel numerical algorithms.
\newblock Preprint 590, Computing Center of the Siberian Branch of the Academy
  of Sciences of the USSR, Novosibirsk, USSR, 1985.
\newblock (in Russian).

\bibitem{al:glosic}
Valentina~N. Aleeva and Rifkhat~Zh. Aleev.
\newblock High-performance computing using the application of the
  $q$-determinant of numerical algorithms.
\newblock In {\em Proceedings 2018 Global Smart Industry Conference (GloSIC)},
  Global Smart Industry Conference (GloSIC), pages 1--8. IEEE, 2018.

\bibitem{al:alss}
Valentina~N. Aleeva, Ilya~S. Sharabura, and Denis~E. Suleymanov.
\newblock Software system for maximal parallelization of algorithms on the base
  of the conception of $q$-determinant.
\newblock In Victor Malyshkin, editor, {\em Parallel Computing Technologies.
  PaCT 2015}, volume 9251 of {\em Lecture Notes in Computer Science (LNCS)},
  pages 3--9. Springer, Cham, 2015.

\bibitem{al:ant}
Open encyclopedia of parallel algorithmic features, 2020.

\bibitem{al:adv}
Alexander Antonov, Jack Dongarra, and Vladimir Voevodin.
\newblock Algowiki project as an extension of the top500 methodology.
\newblock {\em Supercomputing Frontiers and Innovations}, 5(1):4--10, 2018.

\bibitem{al:bazhenova}
Lyudmila~A. Bazhenova.
\newblock Application of the method of designing a q-effective program for
  solving the system of grid equations.
\newblock Master's thesis, School of Electronic Engineering and Computer
  Science, 2018.
\newblock (in Russian).

\bibitem{al:ersh}
Yurii~L. Ershov and Evgeny~A. Palyutin.
\newblock {\em Mathematical Logic}.
\newblock Mir, Moscow, 1984.

\bibitem{al:gj}
Michael~R. Garey and David~S. Johnson.
\newblock {\em Computers and Intractability: A Guide to the Theory of
  NP-Completeness.}
\newblock W. H. Freeman and Co., San Francisco, 1979.

\bibitem{al:knuth}
Donald~E. Knuth.
\newblock {\em The Art of Computer Programming}, volume~1 of {\em Fundamental
  Algorithms}.
\newblock Addison Wesley Longman Publishing Co., Inc., Massachusetts, 3rd
  edition, 1998.

\bibitem{al:kond}
Anna~S. Kondakova.
\newblock Development of a q-effective program for solving five-point
  difference equations by the method of simple iteration and the study of its
  dynamic characteristics.
\newblock Master's thesis, School of Electronic Engineering and Computer
  Science, 2019.
\newblock (in Russian).

\bibitem{al:lapteva}
Yulia~E. Lapteva.
\newblock Q-effective implementation of the jacobi method for solving slae on
  the supercomputer ``tornado susu".
\newblock Master's thesis, School of Electronic Engineering and Computer
  Science, 2017.
\newblock (in Russian).

\bibitem{al:legalov}
Alexander~I. Legalov, Vladimir~S. Vasilyev, Ivan~V. Matkovskii, and Mariya~S.
  Ushakova.
\newblock A toolkit for the development of data-driven functional parallel
  programmes.
\newblock In Leonid Sokolinsky and Mikhail Zymbler, editors, {\em Parallel
  Computational Technologies. PCT 2018}, volume 910 of {\em Communications in
  Computer and Information Science (CCIS)}, pages 16--30. Springer, Cham, 2018.

\bibitem{al:massl}
Joseph Y-T. Leung and Hairong Zhao.
\newblock Scheduling problems in master-slave model.
\newblock {\em Annals of Operations Research}, 159(1):215--231, 2008.

\bibitem{al:li}
Yan Li, Wanfeng Dou, Kun Yang, and Shoushuai Miao.
\newblock Optimized data {I/O} strategy of the algorithm of parallel digital
  terrain analysis.
\newblock In Craig Douglas and Guo Yucheng, editors, {\em 2014 13th
  International Symposium on Distributed Computing and Applications to
  Business, Engineering and Science}, Distributed Computing and Applications to
  Business, Engineering and Science (DCABES), pages 34--37. IEEE, 2014.

\bibitem{al:matv}
Sergey~A. Matveev, Rishat~R. Zagidullin, Alexander~P. Smirnov, and Eugene~E.
  Tyrtyshnikov.
\newblock Parallel numerical algorithm for solving advection equation for
  coagulating particles.
\newblock {\em Supercomputing Frontiers and Innovations}, 5(2):43--54, 2018.

\bibitem{al:pram}
William~F. McColl.
\newblock General purpose parallel computing.
\newblock In Alan~M. Gibbons and Paul Spirakis, editors, {\em Lectures on
  Parallel Computation. Proc. 1991 ALCOM Spring School on Parallel
  Computation}, Cambridge International Series on Parallel Computation, pages
  333--387. Cambridge University Press, 1993.

\bibitem{al:tsystem}
Alexander Moskovsky, Vladimir Roganov, and Sergei Abramov.
\newblock Parallelism granules aggregation with the {T}-system.
\newblock In Victor Malyshkin, editor, {\em Parallel Computing Technologies.
  PaCT 2007}, volume 4671 of {\em Lecture Notes in Computer Science (LNCS)},
  pages 293--302, Berlin, Heidelberg, 2007. Springer-Verlag.

\bibitem{al:nechep}
Anton~D. Necheporenko.
\newblock Development of a q-effective program for solving slae by the
  gauss--seidel method.
\newblock Master's thesis, School of Electronic Engineering and Computer
  Science, 2018.
\newblock (in Russian).

\bibitem{al:prifti}
Valma Prifti, Redis Bala, Igli Tafa, Denis Saatciu, and Julian Fejzaj.
\newblock The time profit obtained by parallelization of quicksort algorithm
  used for numerical sorting.
\newblock In {\em Proceedings of 2015 Science and Information Conference
  (SAI)}, Science and Information Conference (SAI), pages 897--901. IEEE, 2015.

\bibitem{al:SamGul}
Aleksander~A. Sammarskii and Alexei~A. Goolin.
\newblock {\em Numerical Methods}.
\newblock Nauka. Main Editorial Bord for Physical and Mathematical Literature,
  Moscow, 1989.
\newblock (in Russian).

\bibitem{al:set}
Alexey~V. Setukha, Vladimir~A. Aparinov, and Andrey~A. Aparinov.
\newblock Supercomputer modeling of parachute flight dynamics.
\newblock {\em Supercomputing Frontiers and Innovations}, 5(3):121--125, 2018.

\bibitem{al:suplatov}
Dmitry Suplatov, Vladimir Voevodin, and Vytas \v{S}vedas.
\newblock Robust enzyme design: bioinformatic tools for improved protein
  stability.
\newblock {\em Biotechnology journal}, 10(3):344--355, 2015.

\bibitem{al:tarasov}
Denis~E. Tarasov.
\newblock Q-effective co-design of realization of the gauss--jordan method on
  the supercomputer ``tornado susu".
\newblock Master's thesis, School of Electronic Engineering and Computer
  Science, 2017.
\newblock (in Russian).

\bibitem{al:bsp}
Leslie~G. Valiant.
\newblock A bridging model for parallel computation.
\newblock {\em Communications of the {ACM}}, 33(8):103--111, 1990.

\bibitem{al:val}
Nikolai~V. Val'kevich.
\newblock Q-effective implementation of the algorithm for matrix multiplication
  on a supercomputer ``tornado susu".
\newblock Master's thesis, School of Electronic Engineering and Computer
  Science, 2017.
\newblock (in Russian).

\bibitem{al:voev}
Valentin~V. Voevodin and Vladimir~V. Voevodin.
\newblock The {V}-ray technology of optimizing programs to parallel computers.
\newblock In Lubin Vulkov, Jerzy Wa\'sniewski, and Plamen Yalamov, editors,
  {\em Numerical Analysis and Its Applications. WNAA 1996}, volume 1196 of {\em
  Lecture Notes in Computer Science (LNCS)}, pages 546--556, Berlin,
  Heidelberg, 1997. Springer-Verlag.

\bibitem{al:vv}
Valentin~V. Voevodin and Vladimir~V. Voevodin.
\newblock {\em Parallel computing}.
\newblock BHV-Petersburg, St.Petersburg, 2002.
\newblock (in Russian).

\bibitem{al:wang}
Qinglin Wang, Jie Liu, and Xiantuo Tang 2014 Feng Wang 2014 Guitao Fu
  2014~Zuocheng Xing.
\newblock Accelerating embarrassingly parallel algorithm on intel mic.
\newblock In {\em 2014 IEEE International Conference on Progress in Informatics
  and Computing}, Progress in Informatics and Computing (PIC), pages 213--218.
  IEEE, 2014.

\bibitem{al:you}
Jiaxin You, Hongda Kezhang, Huimin Liang, and Bin Xiao.
\newblock Research on parallel algorithms for calculating static
  characteristics of electromagnetic relay.
\newblock In {\em 2016 IEEE 11th Conference on Industrial Electronics and
  Applications (ICIEA)}, Industrial Electronics and Applications (ICIEA), pages
  1421--1425. IEEE, 2016.

\end{thebibliography}

\end{document}